\def\sym#1{\ifmmode^{#1}\else\(^{#1}\)\fi}
\renewcommand{\thetable}{\Roman{table}}
\newtheorem{theorem}{\normalfont \bfseries Theorem}
\newtheorem{corollary}{\normalfont \bfseries Corollary}
\newtheorem{remark}{\normalfont \bfseries Remark}
\newtheorem{lemma}{\normalfont \bfseries Lemma}
\begin{document}
\title{Renewable Power Trades and Network Congestion Externalities\thanks{First Draft: November 30, 2018. Nayara Aguiar: University of Notre Dame, Notre Dame, IN 46556. Email: ngomesde@nd.edu. Indraneel Chakraborty: University of Miami Business School, Coral Gables, FL 33124. Email: i.chakraborty@miami.edu. Vijay Gupta: University of Notre Dame, Notre Dame, IN 46556. Email: vgupta2@nd.edu.}}
\singlespace
\author{Nayara Aguiar \and Indraneel Chakraborty \and Vijay Gupta}

\maketitle
\onehalfspace
\begin{abstract}
\noindent Integrating renewable energy production into the electricity grid is an important policy goal to address climate change. However, such an integration faces economic and technological challenges. As power generation by renewable sources increases, power transmission patterns over the electric grid change. Due to physical laws, these new transmission patterns lead to non-intuitive grid congestion externalities. We derive the conditions under which negative network externalities due to power trades occur. Calibration using a stylized framework and data from Europe shows that each additional unit of power traded between northern and western Europe reduces  transmission  capacity for the southern and eastern regions by 27\% per unit traded. Such externalities suggest that new investments in the electric grid infrastructure cannot be made piecemeal. In our example, power infrastructure investment in northern and western Europe needs an accompanying investment in southern and eastern Europe as well. An economic challenge is regions facing externalities do not always have the financial ability to invest in infrastructure. Power transit fares can help finance power infrastructure investment in regions facing network congestion externalities. The resulting investment in the overall electricity grid facilitates integration of renewable energy production.  
\end{abstract}

\vspace{0.5in}

JEL Code: Q42, Q48, L94. 

Keywords: Renewable Energy, Electrical Grid, Network Congestion Externalities.

\thispagestyle{empty}
\clearpage
\setcounter{page}{1}
\doublespace

A higher dependence on renewable energy for electricity generation can help reduce greenhouse gas emissions. However, integrating renewable energy sources into the electricity grid comes with its own set of economic and technological challenges. Traditionally, electric power is mostly produced and consumed in relatively close proximity, as non-renewable power plants have been built close to consumption to minimize transmission power losses.\footnote{As an example, in the US, there are 66 balancing authorities that manage the balance of supply and demand in the regions they are responsible for. Such co-location means that only residual power is traded over the electric grid over long distances to ensure reliability of the power supply.} However, as renewable power becomes a significant source of energy, longer-distance transmission over the electric grid is becoming more frequent.\footnote{For example, areas with more solar and wind power in the center of the U.S. need to be connected to U.S. coastline load centers. Power from the wind-rich west Texas needs to be supplied to the load centers in east Texas.} Balancing supply and demand of power in  presence of inherent variability in  renewable energy generation has also led to rapid increases in inter-balancing authority power transfers that occur over longer distances. Increasing power transmission over regions that had not traded much until now requires investments in the electric grid infrastructure.\footnote{For example, new transmission lines have been constructed to connect west and east Texas.} 

However, the determining the financing mechanism for such an investment in power infrastructure is an economic challenge. Questions include who should pay for investments in one part of the grid, and how should the network externalities caused by long distance power trades be priced? In contrast to other types of trade networks, power trade over electrical networks face unique non-intuitive network externalities driven by physical laws (Kirchoff's laws) that can complicate both the placement and the cost sharing of such investment decisions \citep{griffin2009electricity}. Over an electrical grid, power trade between two regions may be constrained or facilitated due to trade between two other regions. As an example, North-North trade of power can potentially reduce the trading capacity of South-South trade of power. 

The economic policy implication of the network externalities discussed in this paper is that investment in electrical grids cannot be made piecemeal where only the power generation facilities and power consumption areas are connected with better transmission lines. Energy policymakers could potentially finance the investment in electric grid infrastructure in regions that face externalities but not the benefits using a transit fare that matches the magnitude of network externalities imposed.

Our paper makes two contributions. First, we derive analytical expressions to identify cases when power trade externalities bind on other electrical grid regions. To this end, we decompose a power flow due to trades between two nodes into flows that would exist in a network that trades any other goods and the additional trade component due to Kirchoff's laws. This decomposition allows us to estimate the magnitude of the electric grid externalities due to Kirchoff's laws as a fraction of the total power trade, which is effectively a measure of the efficiency of a power trade. We then analyze various trade scenarios over the network for feasibility, as  network externalities due to one power trade interacts with other power  trades over the network due to physical laws.  The feasibility region analysis delivers the conditions under which a power trade reduces the network capacity for other trades. The analysis also provides conditions under which, perhaps unintuitively, a power trade increases the network capacity for other power trades. 


As the second contribution, we calibrate our analytical framework to estimate the magnitude of  externality-driven congestion constraints on the  grid. Our calibration uses data from the European power grid since longer distance power transmission over the grid is already frequent in the continent. European grid participants have expressed  concerns regarding related network externalities.\footnote{See ``In Central Europe, Germany's Renewable Revolution Causes Friction," \textit{The Wall Street Journal}, Feb 16, 2017.  \url{https://www.wsj.com/articles/in-central-europe-germanys-renewable-revolution-causes-friction-1487241180}.} To bring the data close to the analytical framework, we divide the European grid into four regions.\footnote{The northern region includes Germany and Scandinavia in our exercise. Eastern Europe includes Central European countries such as Poland and Hungary. Southern Europe includes includes Italy and Greece. Western Europe includes France, Spain, and Portugal.} Our power transmission data suggests that the European grid obtains surplus power from northern  and eastern Europe. Southern  and western Europe consume this surplus. The  resulting long distance transmission of power between northern and western Europe has significant negative network externalities for the European grid. Our calibrated framework  estimates that the marginal power trade between the northern and western portions of the grid  reduces grid capacity by approximately 27\% for southern and eastern Europe. The estimates are for a typical period in Spring 2019, when the grid system did not face significant strains due to heating or cooling related spikes in consumption. 

The estimated results are for the grid at present. The increasing importance of renewable energy as a source of power production implies that  the average distance between power production facilities and power consumption centers will keep  increasing over the next few decades. This, in turn, implies that grid congestion due to physical laws will also increase over time. Hence,  our results suggest to policymakers that as more renewable sources of energy production join the electric grid, to ameliorate potential negative externality of longer distance power trades, new investments in the electricity grid infrastructure need to be made.  Further, these investments are required across the grid, not just between generation facilities and consumption areas given the network externalities of power trades.   




We start our analysis with a stylized framework where two power producers and consumers are connected in a cyclic network. Without loss of generality, the producers and consumers can be thought of as net producers and consumers. In a network that trades goods, trade between two connected parties only causes congestion on the link between the two parties. There are no externalities in terms of congestion on other parties. 

However, in an electrical grid, Kirchoff's laws dictate that power flows in a cyclic manner, i.e. even if the North producer wishes to send power just to the North consumer, the electric circuit must close and power flows through the cyclic network, i.e. along the Southern link as well. Thus, depending on the direction of power flow, trade between two parties may create or ameliorate congestion on links between other producers and consumers, in this case the Southern producer and consumer. The stylized framework allows us to decompose the network power flow into two additive and separable terms: network flow if power were any other good and the externality imposed on the network if the traded good is electricity. 

Next, we characterize the set of feasible trades in our framework. Investigating  feasibility of trades allows us to evaluate the impact of power trade externalities on trade constraints. For a trade to be feasible, the network flows generated  must satisfy the capacity constraints of all the network edges. We derive relations between line capacities and reactances which restrict the tradeable set. 

To quantify the network externalities, we apply the analytical results obtained on data from the European Power Grid. Along with economic integration, the electricity markets of Europe have also integrated. The benefits of electricity market integration are similar to those of other market integrations \cite{Krugman:79,Krugman:80}.  However, our paper shows that trade between two nodes can create transmission constraints on other lines in the grid. The northern European region has made great strides in producing renewable energy and this energy has created large power trades over the electrical grid. Thus, Europe provides an ideal setup for us to quantify the magnitude of power trade externalities on the electrical grid. The setup also allows us to determine the threshold where trades become infeasible over the grid if additional power flows were to take place.

We find that the ratio of trades on the North-North (Germany-France) line is approximately 62\% more than the South-South (Eastern Europe-Italy) line. Given the effective reactances of the grid, we first establish that the North trade restricts the South trade. We then decompose the flows into goods flow and additional component due to physical laws that exist in power flows. We find that the present trades reduce the transmission capacity of the South-South line by an amount equal to 10.3\% of the North-North trade. {The marginal trade between the Northern nodes reduces capacity for the southern nodes by approximately 26.9\% per unit traded.}

Regarding feasibility, we find that the trade feasible set shrinks by 4\% due to physical laws compared to the case where power flows as any other commodity. Note that this calculation considers all feasible trades,  unadjusted for the probability of respective trades. In the setup calibrated for European data, we find that power trades face 8.7\% lower grid capacity that goods trades. This is an economically significant effect of physical laws. These results suggest that newer investment take the network externalities into account and expand grid transmission capacity across the network to avoid grid congestion.


Seminal work in energy economics addresses the question of appropriate price policy in regulated industries. Specifically, how we can make sure that the price leads to the correct amount of physical capacity of production and efficient utilization to cover the full social costs of the resources used \citep{Steiner:57,Williamson:66}. \cite{Littlechild:70,CrewKlei:71} consider the question of joint product pricing when multiple energy producers with different cost characteristics work together to meet demand. The aforementioned classic papers consider the problem of joint generation and pricing, but not of trades. Our paper contributes by pointing out network externalities of power trades and the marginal cost of such trades which can facilitate pricing of grid resources. 

Significant changes have taken place in the electricity generation and transmission industry since the 1990s \citep{DaviWolf:12,BoreBush:15}. Regulatory changes have allowed for a marketplace for trading power compared to vertically integrated monopolies in the decades before. Generation of power has also become more distributed as non-utilities have started selling power. As renewable energy becomes more feasible as an alternative source of power, these changes that started in the 1990s have only sped up. Yet, challenges remain in the integration of renewable energy into the grid. \cite{Edenetal:13} investigate the challenges in integrating renewable energy due to inherent variability in wind and solar power production. \cite{Koketal:20} consider the joint problem of investment in renewable and conventional energy production facilities by the same firm. \cite{QiWeetal:15,Cruietal:19}  analyze the benefits of including storage along with renewable power to attain desired production levels. Adoption of cleaner non-renewable energy such as natural gas and nuclear energy also faces challenges \citep{DaviMueh:10,Davis:12}. 

In this context, power trades over the grid provide an important method of integrating new energy sources. However, in this case, sharing the costs of investment in renewable energy production and  grid improvements remains a difficult problem. Regarding renewable energy production, the level of subsidies provided for renewable energy production need to be carefully determined. \cite{Cullen:13} shows that the value of emissions offset by wind power exceeds cost of renewable energy subsidies only if social costs of pollution are very high. \cite{Eglietal:18} suggest that costs of renewable energy production are not declining sufficiently to make them feasible without subsidies. Regarding transmission, \cite{DeCaDutr:13} discuss the challenges in the adoption of smart grids: investments required are large and the benefits do not accrue to potential investors directly. Our paper provides a path forward for integrating renewable energy into the grid: by pricing the externalities of long-distance power trades. The transit fare that we suggest can help foster investments to improve the grid. Given that it is hard to target taxes or subsidies efficiently in the energy markets \citep{Allcott:15}, our paper contributes to reducing externality-driven distortions in this market.  


Section~\ref{sec:background} provides a brief background of power flow in electric grids. Section~\ref{sec:setup} sets up the analytical framework. Section~\ref{sec:feas_regions} characterizes the feasible region that limits the allowable trades in the power and goods networks. Section~\ref{sec:eurogrid} applies the analysis in previous sections, utilizes data from Europe's electric grid to calibrate the setup, and estimates the network congestion externalities in the European grid. Section~\ref{sec:conc} concludes. 

\section{Electric Grids: Past, Present and Future}
\label{sec:background}


The first electric utilities were developed in isolation of one another, granting them de facto monopolies. These electric grids became interconnected by the end of the 1920s, and the possibility of sharing power to meet demand across larger areas became a reality \citep{borbely}. Even though electric grids were interconnected, the traditional model for electricity markets was a vertically integrated monopoly until the 1990s. This implies that a single entity owned the entire chain of services needed to provide electricity to customers -- generation, transmission, distribution, and retail. Local utilities were responsible to set retail tariffs, and hence these markets were \emph{regulated}. However, these regulatory approaches were argued to be inefficient, leading to slow rates of innovation and increased energy price for consumers. In 1994, California led the  reforms for the deregulation of electricity supply in the United States \citep{warwick}. In this \emph{deregulated} model, generators compete against each other and markets are created to trade power \citep{ilic}.

The interconnection of electric grids brings reliability benefits that allow for the delivery of on-demand electricity. In the aftermath of World War II, the need to efficiently use limited energy resources led to an increase in the interconnection of the European national grids. This new operational paradigm allowed countries to purchase power from each other in case of energy shortages. This paper utilizes the European grid as a case study to analyze the implications of the interconnection of electric grids as the supply and demand patterns shift due to increasing renewable penetration.    

When grid infrastructures are highly interconnected, as is the case in the European grid, large amounts of power can be transported from where it can be produced cheaply to centers of high demand. Achieving this, however, is not only a technical but also an institutional challenge, as markets need to adapted for coordinated operations across borders \citep{verzijlbergh}.

With the greater integration of renewable energy sources into the grid, the power systems of the present and future face new challenges. There are several aspects involved in the cost of integrating more renewable generation in the electricity mix, including the costs to connect new wind and solar power plants to the grid and potentially expanding transmission and distribution lines; the costs to keep enough generation capacity to balance renewable variations; and the impact new renewable generation brings to already existing power plants \citep{agora}. 

\section{Setup}
\label{sec:setup}

We consider a stylized setup with four nodes that form a cyclic network. In this setup, we analyze the flows that result from trades in two different types of network: an electricity grid and a goods trade network. Our approach helps us understand the potential negative externalities imposed when parties trade power compared to other tradeable goods. 

Analyses of power systems as economic systems use Direct Current (DC) power flow models as opposed to the less tractable Alternating Current models \citep{purcetal:05}.  We also use a DC power flow model. DC power flows across all transmission lines in a network can be found using a linear transformation of the vector of power injections in each node. 


We begin by presenting the optimal power flow problem in the electricity network in Section~\ref{sec:opf}. Section~\ref{sec:flows} contrasts the power flow solution with a goods flow solution. The section shows that the network power flow solution can be written as an additive and  separable function of goods flow solution and externalities in an electric network. 

\subsection{The Power Flow Problem}\label{sec:opf}

Consider the network in Fig.~\ref{fig:4node}. We assume that each node represents a region $i \in \{NW,NE,SE,SW\}$\footnote{The regions are denoted northwest (NW), northeast (NE), southeast (SE), and southwest (SW).} that produces and consumes power with a potential deficit or surplus. The surplus power can be traded across the network. Trade links $\ell \in \{N,E,S,W\}$\footnote{The links are denoted north (N), east (E), south (S), and west (W).} connect nodes and have constrained transmission capacity $C_{\ell}$, limiting trades. 

To illustrate the externalities imposed on the network due to power trades, we need two sets of trades. Thus, the stylized model assumes that two nodes $NW$ and $SE$ produce surplus power, which is consumed by the remaining two nodes $NE$ and $SW$, respectively.\footnote{A negative surplus power means a net consumption. Thus, the North-North trade can also have node $NW$  as a net consumer and node $NE$ as a net producer. The same reasoning applies to the South-South trade.} Thus, there is a North-North ($NE$ and $NW$) and a South-South ($SW$ and $SE$) power trade over the grid in the figure.


\begin{figure}[h]
    \centering
    \includegraphics[width=0.35\textwidth]{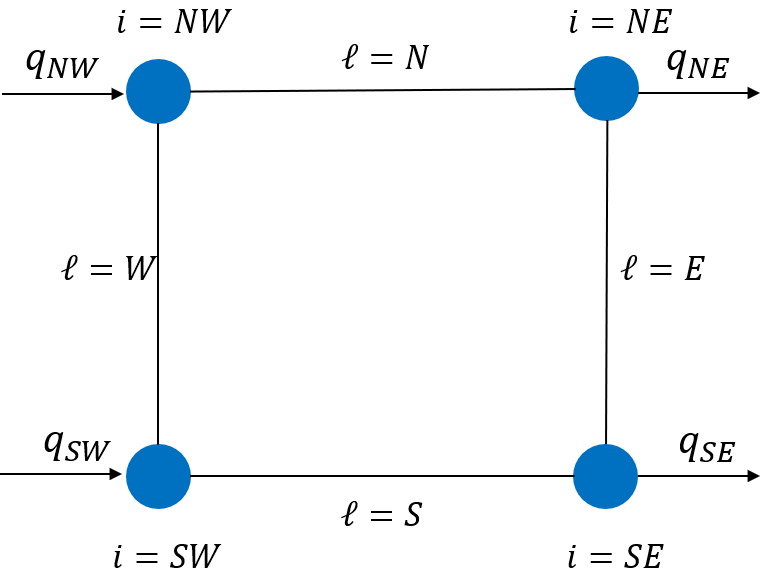}
\caption{\small Four-node network setup. Each node $i \in \{NW,NE,SE,SW\}$ is connected to two other nodes through an edge $\ell \in \{N,E,S,W\}$ with constrained transmission capacity $C_\ell$. In the electricity network, each node $i$ has a net power injection $q_i$ which is positive (resp. negative) if the node is supplying power to (resp. demanding power from) the network.}
\label{fig:4node}
\end{figure}



The electricity grid is designed such that the amount of power traded over the network is maximized, conditional on the surplus and deficit of nodes:
\begin{align}
    \underset{\mathbf{q}}{\max} \,\  \sum \mathbf{|q|} & \nonumber  \\
     \mathbf{1^T q} &= 0 \nonumber \\
    \label{eq:gridobj} -\mathbf{C} \leq \mathbf{Sq} &\leq \mathbf{C},
\end{align}
where the first constraint ensures that the sum of surplus power injected into the grid equals the sum of power deficit of consumer nodes. The second constraint is  a budget constraint to ensure that the total power transmitted through the network is within network capacity $\mathbf{C}$. As power can flow in either direction across the transmission lines, the inequality is on the element-wise magnitude of the flows.

The transformation matrix $\mathbf{S}$  is known as the shift-factor matrix \citep[see, for example][]{Sahretal:17}. The matrix helps determine the power flows in the network due to the power injection vector $\mathbf{q}$ so that the flow dynamics obey  Kirchoff's laws. Appendix~\ref{app:Smatrix} provides a detailed explanation of how to compute the shift-factor matrix. 



\subsection{Power Trades and Goods Trades}\label{sec:flows}

To investigate the objective of the electric grid compared to a goods network, we need to estimate the effect of network externalities on power trade. To this end, this section breaks down a power trade into the flows that would exist in a network that trades any other good, and the additional externality-driven network flow if the traded item is electricity. Such a decomposition of the transformation matrix allows us to understand the relative magnitude of externality in practice. 

As discussed above, we consider the North-North, South-South power trade case. Thus, the net injection vector is $\mathbf{q} = (q_N,-q_N,q_S,-q_S)$. In other words, nodes $NE$ and $NW$ trade amount $q_N$, and nodes $SW$ and $SE$ trade $q_S$ units of power\footnote{In other words, $q_{NW}=-q_{NE}=q_N$ and $q_{SE}=-q_{SW}=q_S$.}. The network topology is cyclic, as shown in Figure~\ref{fig:4node}. By convention, we consider flows through each network link to be positive if they go from node $NW$ to $NE$, from $NE$ to $SE$, from $SE$ to $SW$, and from $NW$ to $SW$. Flows are denoted negative if they flow in the opposite directions.

\textbf{Goods trade case:} In a network that trades goods, the resulting flows $f^g_{\ell}$ through each link $\ell$ are as follows:
\begin{equation}\label{eq:fpairwise}
\bm{f^g} =
\begin{bmatrix}
    f^g_{N} \\ f^g_{E} \\ f^g_{S} \\ f^g_{W}
\end{bmatrix} =
\begin{bmatrix}
    q_N \\ 0 \\ q_S \\ 0
\end{bmatrix} =
\begin{bmatrix}
   1 & 0 & 0 & 0\\
   0 & 0 & 0 & 0\\
   0 & 0 & 1 & 0\\
   0 & 0 & 0 & 0
\end{bmatrix}
\begin{bmatrix}
    q_N \\ -q_N \\ q_S \\ -q_S
\end{bmatrix}.
\end{equation}
For trades that generate flows $\bm{f^g}$ to be feasible, the capacity constraint of each link in the network needs to be satisfied, that is  $-\mathbf{C} \leq \bm{f^g} \leq \mathbf{C}$. 


\textbf{Power trade case:} In case of an electricity grid, the transformation matrix $\mathbf{S}$ that maps power  injections in each node to the flows in each line is based on Kirchoff's Laws. Let $x_{\ell}$ represent the reactance of the transmission line $\ell$.\footnote{
The power flows are inversely proportional to the reactance of the transmission line. A line with lower reactance can transmit more electric charge at a given voltage. Appendix~\ref{app:Smatrix} provides the relationship between power flows and shift-matrix $\textbf{S}$ in general.} The following lemma  provides the relationship between power flows over a  network and flows when the traded good is not power.



\begin{lemma}\label{lem:powerflows} \textbf{(Flow Decomposition)}
The power flows $\bm{f^p}=\mathbf{Sq}$ are such that
\begin{equation}\label{eq:flowrelation}
    \bm{f^p}=\bm{f^g} + \mathbf{S_\sigma}~\mathbf{q},
\end{equation}
where $\bm{f^g}$ are the flows that are generated for non-electric trades. The second term is a vector of additional flows that arise in the case of power trades, and is given by
\begin{equation}
    \mathbf{S_\sigma}~\mathbf{q} = \frac{1}{\sum x_{\ell}}\begin{bmatrix}
    -(x_{N}+x_{E}+x_{S}) & -(x_{E}+x_{S}) & -x_{S} & 0\\
    x_{W} & x_{N}+x_{W} & -x_{S} & 0\\
    x_{W} & x_{N}+x_{W} & -x_{S} & 0\\
    x_{N}+x_{E}+x_{S} & x_{E}+x_{S} & x_{S} & 0
\end{bmatrix}
\begin{bmatrix}
    q_N \\ -q_N \\ q_S \\ -q_S
\end{bmatrix}
\end{equation}
\end{lemma}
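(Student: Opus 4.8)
The plan is to exploit the fact that the four-node network is a single cycle, so that the line flows consistent with Kirchhoff's current law form an affine space only one dimension larger than any particular solution. I would work directly with the two defining physical conditions that pin down $\bm{f^p}=\mathbf{Sq}$: node balance (the current law) $\mathbf{A^\top f^p}=\mathbf{q}$, and the voltage law around the loop, where $\mathbf{A}$ is the node--arc incidence matrix for the orientations fixed in the text (positive from $NW\to NE$, $NE\to SE$, $SE\to SW$, and $NW\to SW$). Because the graph has a single independent cycle, the null space of $\mathbf{A^\top}$ is one-dimensional, spanned by the circulation vector $\mathbf{z}=(1,1,1,-1)^{\top}$; the entry $-1$ records that link $W$ is traversed against its reference orientation as one goes around the loop $NW\to NE\to SE\to SW\to NW$.

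First I would observe that the goods flow $\bm{f^g}=(q_N,0,q_S,0)^{\top}$ of \eqref{eq:fpairwise} is itself a particular solution of the current law: injecting $q_N$ at $NW$ and withdrawing it at $NE$ routes directly over link $N$, and likewise the south trade over link $S$, so $\mathbf{A^\top}\bm{f^g}=\mathbf{q}$ by construction. Consequently every current-law-feasible flow, and in particular the true power flow, can be written as $\bm{f^p}=\bm{f^g}+\lambda\,\mathbf{z}$ for a single scalar $\lambda$, and the externality term $\mathbf{S_\sigma q}$ is precisely this loop correction $\lambda\,\mathbf{z}$. This already delivers the additive, separable structure claimed in \eqref{eq:flowrelation}; what remains is to identify $\lambda$.

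Next I would fix $\lambda$ using Kirchhoff's voltage law, which in the DC model states that the reactance-weighted flows sum to zero around the loop, namely $x_N f^p_N + x_E f^p_E + x_S f^p_S - x_W f^p_W = 0$ (the minus sign on the $W$ term again coming from the loop orientation). Substituting $\bm{f^p}=\bm{f^g}+\lambda\,\mathbf{z}$ collapses the goods-flow contribution to $x_N q_N + x_S q_S$ and the loop contribution to $\lambda\sum_\ell x_\ell$, giving $\lambda = -(x_N q_N + x_S q_S)/\sum_\ell x_\ell$. The closing step is a direct check that $\lambda\,\mathbf{z}$ reproduces the stated matrix: multiplying the displayed $\mathbf{S_\sigma}$ by $(q_N,-q_N,q_S,-q_S)^{\top}$ yields $-(x_N q_N + x_S q_S)/\sum_\ell x_\ell$ in the rows for links $N$, $E$, $S$ and its negative in the row for $W$, which is exactly $\lambda\,\mathbf{z}$.

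The bookkeeping on signs is the only place demanding care: the orientation of link $W$ relative to the loop enters both the circulation vector $\mathbf{z}$ and the voltage-law equation, and the withdrawals sit in the $-q_N,-q_S$ entries of $\mathbf{q}$, so a single slip there would flip a row. None of the steps is computationally heavy — the single-loop structure together with the reactance-weighted voltage law does essentially all the work — and the explicit shift-factor construction of Appendix~\ref{app:Smatrix} would produce the same $\mathbf{S}$ as an independent cross-check.
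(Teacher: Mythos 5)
Your proof is correct, but it takes a genuinely different route from the paper's. The paper's proof in Appendix~\ref{app:proof_lem1} is a direct matrix computation: it starts from the explicit shift-factor matrix $\mathbf{S}$ for the four-node cycle (obtained from the $\mathbf{HB}^{-1}$ construction of Appendix~\ref{app:Smatrix}), applies it to $\mathbf{q}=(q_N,-q_N,q_S,-q_S)$, and then splits the matrix row-by-row into the goods-flow matrix plus a remainder, using identities such as $x_W = \sum x_\ell - (x_N+x_E+x_S)$, so that the remainder is exactly $\mathbf{S_\sigma}$. You instead never compute $\mathbf{S}$: you observe that $\bm{f^g}$ is a particular solution of Kirchhoff's current law, that the cycle space of the four-node ring is one-dimensional and spanned by $\mathbf{z}=(1,1,1,-1)^{\top}$, so $\bm{f^p}=\bm{f^g}+\lambda\mathbf{z}$, and then pin down $\lambda = -(x_Nq_N+x_Sq_S)/\sum_\ell x_\ell$ from the voltage law. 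Your sign conventions check out (the KCL equations at the four nodes are satisfied by $\bm{f^g}$, the null vector and the KVL orientation of link $W$ are handled consistently), and your $\lambda\mathbf{z}$ agrees entry-by-entry with the product $\mathbf{S_\sigma q}$ as computed in the paper, namely $\frac{1}{\sum x_\ell}\left(-(x_Nq_N+x_Sq_S),\,-(x_Nq_N+x_Sq_S),\,-(x_Nq_N+x_Sq_S),\,x_Nq_N+x_Sq_S\right)^{\top}$. What your approach buys is conceptual transparency: it explains structurally why the externality flows form a uniform circulation around the loop --- equal magnitude on every line, with sign determined by $x_Nq_N+x_Sq_S$ --- which the paper only extracts afterwards, by separate computation, in the proof of Theorem~\ref{th:sigmaf}(a) and (b); it also generalizes immediately to any single-cycle network. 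What the paper's brute-force split buys is self-containedness relative to its own Appendix~\ref{app:Smatrix} machinery, since the same explicit $\mathbf{S}$ is reused in the calibration of Section~\ref{sec:eurogrid}.
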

\noindent Proof is in Appendix~\ref{app:proof_lem1}.

As is the case for trade in goods, the absolute value of power flows over each link in the network is bounded by the respective transmission line capacity $\textbf{C}$. For feasible goods flows, the relationship \eqref{eq:flowrelation} can be used to bound power flows as follows:
\begin{align}
    \left|~\bm{f^p}~\right| &= \left|~\bm{f^g}+ \mathbf{S_\sigma}~\mathbf{q}~\right|  \leq \left|~\bm{f^g}~\right|+  \left|~\mathbf{S_\sigma}~\mathbf{q}~\right|     \label{eq:uboundpf} \leq \mathbf{C}+  \left|~\mathbf{S_\sigma}~\mathbf{q}~\right|,
\end{align}
The inequality above considers that the net injection vector $\mathbf{q}$ and the network capacity $\textbf{C}$ are the same for both goods trade and electricity grid cases. The upper bound shows that  power flows in an electricity grid can rise \textit{above} the capacity limitations imposed by individual links of vector $\textbf{C}$, which will bind for goods trade. This is because of the second term $\mathbf{S_\sigma q}$ that is an externality determined by the power trades between all the nodes in the network. As we discussed before and discuss further later, the externality may be positive or negative depending on network parameters and direction of power flow.


\subsection{Network externality of power flows} Let these additional flows $\mathbf{S_\sigma q}$, that exist in the electricity grid  but not in the goods trade network, be denoted as $\sigma$-flows. 
For the same amount of trade, higher capacity of the electric network over a goods network is not the only possible outcome due to power network externalities. The power flows in the electricity grid can also make the capacity bounds in the transmission lines \textit{tighter} than in the goods trade case. This situation happens when the externalities-driven $\sigma$-flows are in the same direction as the flows from the intended trades between producer and consumer nodes. 

Theorem~\ref{th:sigmaf} characterizes the magnitude and direction of $\sigma$-flows in the network as well as the conditions under which the electric grid has lower trade capacity than a comparable goods network.
\begin{theorem} \label{th:sigmaf}
\textbf{($\sigma$-Flows Properties)}
Consider a network topology as in Figure~\ref{fig:4node} and a net injection vector of the form $\mathbf{q} = (q_N,-q_N,q_S,-q_S)$. Let $f^{\sigma }_\ell$ be the $\sigma$-flow along the transmission line $\ell$.

(a) The $\sigma$-flows in each transmission line are equal in magnitude and given by
\begin{equation}\label{eq:fsigma}
    |f^{\sigma }_\ell|=\frac{|~x_{N}q_N+x_{S}q_S~|}{\sum x_\ell}~~\forall \ell \in \mathcal{L}.
\end{equation}

(b) The direction of the $\sigma$-flows in the network is clockwise if $~(x_{N}q_N+x_{S}q_S) < 0$, and counterclockwise if $~(x_{N}q_N+x_{S}q_S) > 0$\footnote{In our model, a clockwise flow direction corresponds to the nodal path $NW-NE-SE-SW-NW$. The reverse path is referred as counterclockwise.}.

(c) If node $NW$ is a consumer (supplier) when node $SE$ is a supplier (consumer), the $\sigma$-flows will lead to a tighter bound in transmission line North if $|\frac{q_N}{q_S}|<\frac{x_{S}}{x_{N}}$, and in transmission line South if $|\frac{q_N}{q_S}|>\frac{x_{S}}{x_{N}}$.
\end{theorem}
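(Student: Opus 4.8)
The plan is to prove all three parts directly from the explicit $\sigma$-flow vector $\mathbf{S_\sigma q}$ furnished by Lemma~\ref{lem:powerflows}, evaluated on the structured injection $\mathbf{q}=(q_N,-q_N,q_S,-q_S)$, and then to compare the resulting flows with the goods flows $\bm{f^g}$ of \eqref{eq:fpairwise}.

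First I would carry out the matrix--vector product row by row. Writing $\Sigma=\sum_\ell x_\ell$ and $A=x_Nq_N+x_Sq_S$, each row collapses after cancellation: in the first three rows the two northern columns combine to $-x_Nq_N$ and the $SE$ column contributes $-x_Sq_S$, giving $-A$, while the fourth row reverses every sign and yields $+A$. Hence $f^\sigma_N=f^\sigma_E=f^\sigma_S=-A/\Sigma$ and $f^\sigma_W=+A/\Sigma$. Taking absolute values yields $|f^\sigma_\ell|=|A|/\Sigma$ for every link, which is precisely \eqref{eq:fsigma} and settles part~(a).

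For part~(b) I would invoke the orientation convention fixed before the lemma, under which a clockwise circulation $NW\!\to\!NE\!\to\!SE\!\to\!SW\!\to\!NW$ registers as positive on links $N,E,S$ but negative on link $W$, whose positive sense is $NW\!\to\!SW$. Comparing this signature with the signs just computed: when $A<0$ the flows $f^\sigma_N,f^\sigma_E,f^\sigma_S$ are positive and $f^\sigma_W$ is negative, matching the clockwise pattern; when $A>0$ every sign flips, giving the counterclockwise circulation. This is exactly the stated dependence on the sign of $x_Nq_N+x_Sq_S$.

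Part~(c) is where the only genuine care is needed. I would add the $\sigma$-flows to the goods flows, noting that $\bm{f^g}$ is supported only on links $N$ and $S$, with values $q_N$ and $q_S$. A link's capacity bound tightens relative to the goods case exactly when its $\sigma$-flow is aligned with (shares the sign of) the goods flow there, so that $|f^p_\ell|=|f^g_\ell|+|f^\sigma_\ell|$ instead of partially cancelling. Under the hypothesis that $NW$ and $SE$ play opposite roles, $q_N$ and $q_S$ carry opposite signs, and I would treat the representative case $q_N>0,\,q_S<0$ (the other follows by relabelling). On link $N$, alignment of $f^\sigma_N=-A/\Sigma$ with $q_N>0$ forces $A<0$, i.e.\ $x_N|q_N|<x_S|q_S|$, equivalently $|q_N/q_S|<x_S/x_N$; on link $S$, alignment of $f^\sigma_S=-A/\Sigma$ with $q_S<0$ forces $A>0$, equivalently $|q_N/q_S|>x_S/x_N$. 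These are the two advertised inequalities, separated cleanly by the threshold $A=0$, so at most one link tightens at a time. The one subtlety I would verify is that this dichotomy is sharp -- that a misaligned $\sigma$-flow cannot be so large as to overshoot and inflate the magnitude again; this holds because in the opposite-sign regime the two terms of $A$ partially cancel, so $|f^\sigma_\ell|=|A|/\Sigma$ stays below the dominant goods flow and the comparison never reverses. Beyond that, the work is pure bookkeeping -- keeping the orientation convention, the consumer/supplier roles, and the alignment logic mutually consistent -- rather than any analytic obstacle.
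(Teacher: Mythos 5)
Your proof is correct and follows essentially the same route as the paper's: evaluate $\mathbf{S_\sigma q}$ row by row from Lemma~\ref{lem:powerflows} to get three flows equal to $-(x_Nq_N+x_Sq_S)/\sum x_\ell$ and one equal to its negative, read off part (a) and the clockwise/counterclockwise signature for part (b), and then for part (c) determine when the $\sigma$-flow on line $N$ (resp.\ $S$) is aligned with the goods flow under the opposite-roles hypothesis. Your extra check that a misaligned $\sigma$-flow cannot overshoot and re-tighten the bound (since $|f^\sigma_\ell|<|f^g_\ell|$ in that regime) is a small rigor point the paper's proof leaves implicit, but it does not change the argument.
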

\noindent Proof is in Appendix~\ref{app:proof_th1}.

The setup is of a North-North and a South-South trade. Part (a) calculates the magnitude of the externality flow in the electric network. The magnitude discussed in part (a) and direction of the flow discussed in part (b) depends on the relative direction  of the two trades. If node $NW$ consumes and node $SE$ produces or vice versa, i.e. the North-North and the South-South trade are both in the same direction, then the sigma flows cause a negative externality and reduce network capacity. 

Part (c) provides the condition under which the externality is imposed on the North-North trade or the South-South trade. The condition is effectively the relative magnitude of trades ($|x_Nq_N|<>|x_Sq_S|$). The theorem yields a corollary which provides a razor-edge case when the power trade network will not create an externality. 

\begin{corollary} \label{col:floweq}
\textbf{(Trade Flow Equivalence)}
The electricity and the goods trade grids generate the same flows if and only if node $NW$ is a consumer (supplier) when node $SE$ is a supplier (consumer), and $q_S=-\frac{x_{N}}{x_{S}}q_N$. In the degenerate case with  $\mathbf{q} = \mathbf{0}$, there are no trades in the network, and thus the flows are identically zero.
\end{corollary}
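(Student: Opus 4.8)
The plan is to read the corollary as a direct consequence of the Flow Decomposition Lemma together with Theorem~\ref{th:sigmaf}(a). By Lemma~\ref{lem:powerflows}, the power flows satisfy $\bm{f^p} = \bm{f^g} + \mathbf{S_\sigma}\mathbf{q}$, so the electricity and goods flows coincide, $\bm{f^p} = \bm{f^g}$, if and only if the entire $\sigma$-flow vector vanishes, $\mathbf{S_\sigma}\mathbf{q} = \mathbf{0}$. This reduces the claim to characterizing exactly when every $\sigma$-flow is zero, converting a statement about two flow vectors into a statement about a single scalar quantity.

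First I would invoke Theorem~\ref{th:sigmaf}(a), which establishes that all four $\sigma$-flows share the common magnitude $|f^{\sigma}_\ell| = \frac{|x_N q_N + x_S q_S|}{\sum x_\ell}$. Because the four components are equal in absolute value, they all vanish simultaneously, and hence $\mathbf{S_\sigma}\mathbf{q} = \mathbf{0}$ holds if and only if this common magnitude is zero, i.e. if and only if $x_N q_N + x_S q_S = 0$. Solving this scalar equation for $q_S$ yields the stated relation $q_S = -\frac{x_N}{x_S} q_N$, and this chain of equivalences gives the ``if and only if'' in both directions at once.

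Next I would supply the physical sign interpretation. Since reactances are strictly positive, $-\frac{x_N}{x_S} < 0$, so whenever $q_N \neq 0$ the relation $q_S = -\frac{x_N}{x_S} q_N$ forces $q_S$ to carry the opposite sign to $q_N$. Recalling the convention $q_{NW} = q_N$ and $q_{SE} = q_S$, with a positive injection denoting a supplier, opposite signs mean precisely that node $NW$ is a consumer when node $SE$ is a supplier, and vice versa, which is the consumer/supplier clause in the statement. Finally, the degenerate branch is immediate: if $\mathbf{q} = \mathbf{0}$ then $\bm{f^g} = \mathbf{0}$ and $\mathbf{S_\sigma}\mathbf{q} = \mathbf{0}$, so both flow vectors are identically zero and trivially coincide.

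I expect no genuine obstacle here, since the heavy lifting is already carried out by Theorem~\ref{th:sigmaf}(a); the only points demanding care are, first, arguing the equivalence in both directions rather than just one, which the common-magnitude formula delivers cleanly, and second, being explicit that the consumer/supplier clause is \emph{not} an independent hypothesis but a consequence of $q_S = -\frac{x_N}{x_S} q_N$ combined with positivity of the reactances, so that the two clauses in the ``if and only if'' are mutually consistent rather than over-determined.
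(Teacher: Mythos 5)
Your proposal is correct and follows essentially the same route as the paper's own proof: apply the decomposition $\bm{f^p} = \bm{f^g} + \mathbf{S_\sigma}\mathbf{q}$ from Lemma~\ref{lem:powerflows} to reduce flow equivalence to $\mathbf{S_\sigma}\mathbf{q} = \mathbf{0}$, then use the common-magnitude formula \eqref{eq:fsigma} from Theorem~\ref{th:sigmaf}(a) to get the scalar condition $x_N q_N + x_S q_S = 0$, and treat $\mathbf{q}=\mathbf{0}$ separately. Your explicit observation that the consumer/supplier clause is a consequence of $q_S = -\frac{x_N}{x_S}q_N$ and the positivity of reactances (rather than an independent hypothesis) is a point the paper leaves implicit, and it tightens the argument slightly.
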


\begin{proof}
From \eqref{eq:flowrelation}, we note the flows will be equivalent for both cases if and only if $\mathbf{S_\sigma q} = \mathbf{0}$. Using \eqref{eq:fsigma}, we find that all the $\sigma$-flows will be zero if $q_S=-\frac{x_{N}}{x_{S}}q_N$. When the net injections are zero, we have a degenerate case in which $\bm{f^p}=\bm{f^g}=\mathbf{0}$.
\end{proof}

We note that if the conditions for flow equivalence established in Corollary~\ref{col:floweq} do not hold, then the vertical lines of our network ($\ell=E$ and $\ell=W$) will have a tighter bound in the electricity grid case due to non-zero $\sigma$-flows flowing through them. This situation is illustrated in Fig.~\ref{fig:Sflows}. 

\begin{figure}[!hbt]
  \centering
    \includegraphics[width=.8\textwidth]{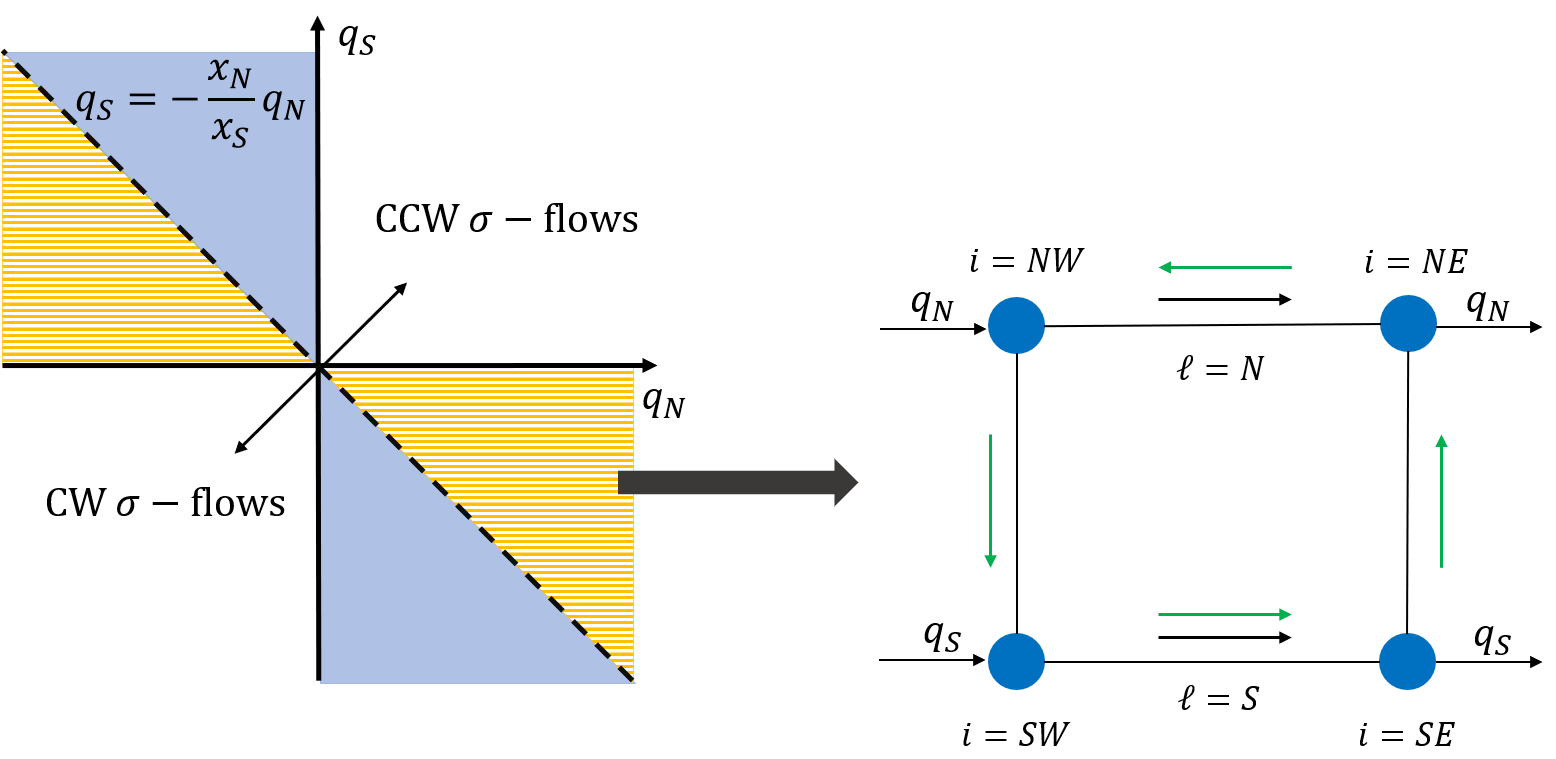}
    \caption{\small The flows that arise  due to power trades in the four-node network shown in Fig.~\ref{fig:4node} can be decomposed into two terms: the first is equivalent to the flows that would have been generated if the trades were pairwise, and the second term comprises extra flows of equal magnitude through each line, which we call $\sigma$-flows. Let the trades in this network be of the form $\mathbf{q} = (q_N,-q_N,q_S,-q_S)$. The dashed diagonal line in the ($q_N,q_S$) plane plot is given by $x_{N}q_N+x_{S}q_S = 0$. As shown in Corollary~\ref{col:floweq}, the flows in the power and goods networks are equivalent for all the trades on this line. The $\sigma$-flows are clockwise (CW) if $(x_{N}q_N+x_{S}q_S) < 0$ and counterclockwise (CCW) otherwise. When nodes $i=NW$ and $i=SE$ take different roles, that is, one is a supplier when the other one is a consumer, the $\sigma$-flows will cause transmission line $\ell=N$ to have a tighter bound (as compared to a goods network) if $|\frac{q_N}{q_S}|<\frac{x_{S}}{x_{N}}$, while transmission line $\ell=S$ will be the one with a tighter bound if $|\frac{q_N}{q_S}|>\frac{x_{S}}{x_{N}}$. In this figure, these regions are mapped in the ($q_N,q_S$) space, where the yellow striped region corresponds to transmission line $\ell=N$ being tighter, while the blue region refers to when line $\ell=S$ has a tighter bound. The network illustration exemplifies the decomposition of flows for a pair of trades ($q_N,q_S$) in the yellow region in the fourth quadrant. The dark arrows in lines $\ell=N$ and $\ell=S$ refer to the component equal to the flows that would be generated if non-power goods were traded, and the green arrows are the CCW $\sigma$-flows. We can observe that the externalities introduced by the $\sigma$-flows are such that the transmission line $\ell=N$ expands its capacity (extra flow opposes goods flow), while line $\ell=S$ has a tighter capacity (extra flow reinforces goods flow).}
    \label{fig:Sflows}
\end{figure}

These externalities  impose trade constraints in certain cases. The next section  characterizes the feasible sets of trades over the four-node network that we consider. 

\section{Power trade feasibility over a network}
\label{sec:feas_regions}


For a trade to be feasible, the flows it generates over the network must satisfy the capacity constraints of all the network edges. This condition is included as a budget constraint in the trade maximization problem expressed as equation system \eqref{eq:gridobj}. Investigating trade feasibility allows us to evaluate the impact of power trade externality on trade constraints.  The feasible sets we will be analyzing in this problem are illustrated in Figure~\ref{fig:feasiblesets}.


\subsection{Feasible trades per line}
\label{sec:linefeasibility}

The net trades are as before $\mathbf{q} = (q_N,-q_N,q_S,-q_S)$.  Writing out the flows expressions \eqref{eq:fpairwise} and \eqref{eq:flowrelation} as a function of the traded quantities $q_N$ and $q_S$, we can define these feasible sets as follows. 
The feasible trades in a goods network are constrained only by the line capacity $C_N,C_S$: 
    \begin{equation}\label{eq:fp}
        \mathcal{F}_G = \left\{(q_N,q_S) \mid~|q_N| \leq C_N, |q_S| \leq C_S\right\}.
    \end{equation}

In the case of an electric network, the line capacities have to be sufficient to not only allow the trade between the nodes connected by the line, but also the additional power flows through the line due to trade between other nodes in the network driven by physical laws. For example, for line  $\ell=N$ the feasible trades set  $\mathcal{F}_{P1}$ is determined by a function of  trade  $q_N$ on the line, but also by a function of a trade $q_S$ on the South-South line: 
\begin{equation}\label{eq:f12}
            \mathcal{F}_{P1} = \left\{ (q_N,q_S)~\Big|~ \left|f^p_N\right| \leq C_N \right\} = \left\{ (q_N,q_S)~\Big|~ \left|\left(1-\frac{x_{N}}{\sum x_{\ell}}\right) q_N-\frac{x_{S}}{\sum x_{\ell}}q_S\right| \leq C_N \right\}
        \end{equation}
where $x_{\ell}$ represents the reactance of the transmission line $\ell$. Thus, the feasible trades set over the electric network is $\mathcal{F}_P = \bigcap_{k=1}^4\mathcal{F}_{Pk}$, where the remaining three lines are constrained by the following relations:
        \begin{eqnarray}\label{eq:f23}
            \mathcal{F}_{P2} &=& \left\{ (q_N,q_S)~\Big|~ \left|f^p_E\right| \leq C_E \right\} = \left\{ (q_N,q_S)~\Big|~ \left| -\frac{x_{N}}{\sum x_{\ell}}q_N-\frac{x_{S}}{\sum x_{\ell}}q_S\right| \leq C_E \right\} \nonumber \\
            \mathcal{F}_{P3} &=& \left\{ (q_N,q_S)~\Big|~ \left|f^p_S\right| \leq C_S \right\} = \left\{ (q_N,q_S)~\Big|~ \left| -\frac{x_{N}}{\sum x_{\ell}}q_N+\left(1-\frac{x_{S}}{\sum x_{\ell}}\right)q_S\right| \leq C_S \right\} \nonumber \\
            \mathcal{F}_{P4} &=& \left\{ (q_N,q_S)~\Big|~ \left|f^p_W\right| \leq C_W \right\} =  \left\{ (q_N,q_S)~\Big|~ \left| \frac{x_{N}}{\sum x_{\ell}}q_N+\frac{x_{S}}{\sum x_{\ell}}q_S\right| \leq C_W \right\}
        \end{eqnarray}

Figure~\ref{fig:feasiblesets} illustrates the interaction between power flows as they affect  feasible trades. If trades do not affect each other, as is the case in a goods network, the feasible region $\mathcal{F}_G$ is a rectangular 
area, as shown in panel (a). Thus, the magnitude of trade along the horizontal axis $q_N$ (North-North) does not affect the magnitude of trade along the vertical axis (South-South). The only constraint in goods trade is the capacity of the respective North-North and South-South links.  

\begin{figure}[t]
  \subfloat[]{
	\begin{minipage}[c][1\width]{
	   0.3\textwidth}
	   \centering
	   \includegraphics[width=1\textwidth]{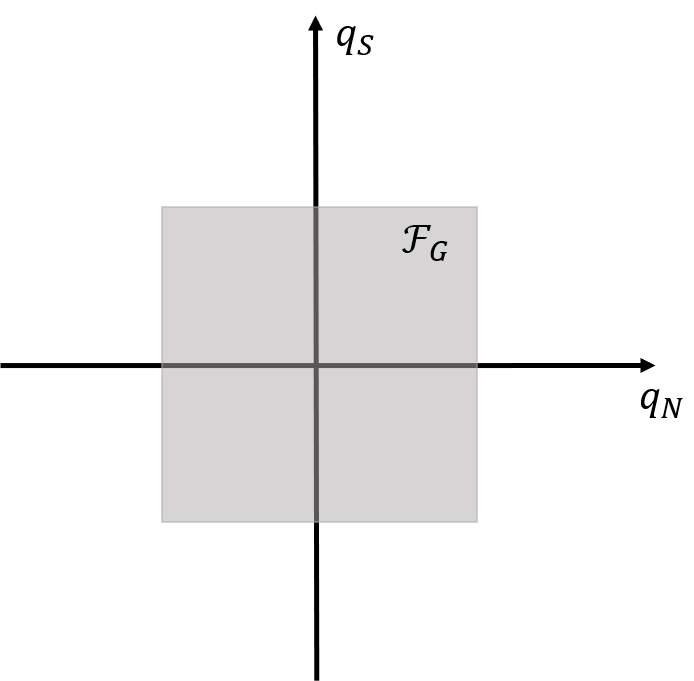}
	   \label{fig:Fg}
	\end{minipage}}
 \hfill 	
  \subfloat[]{
	\begin{minipage}[c][1\width]{
	   0.3\textwidth}
	   \centering
	   \includegraphics[width=1\textwidth]{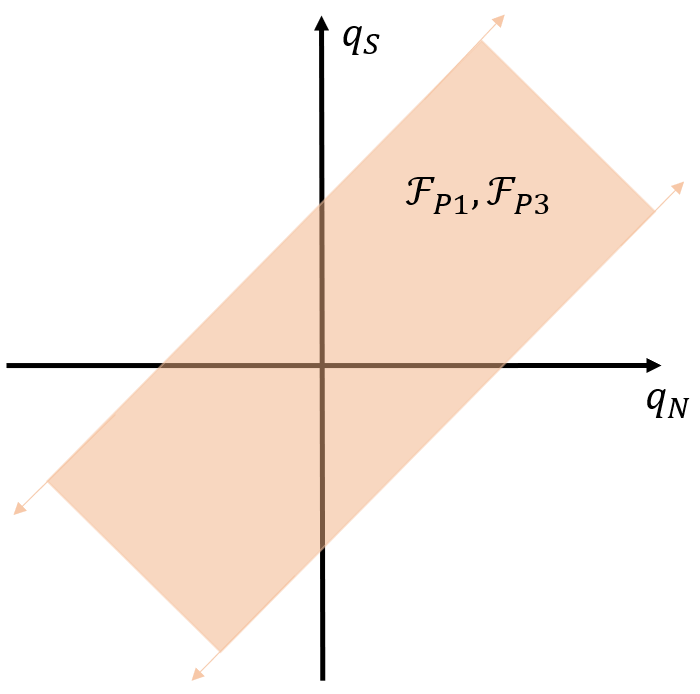}
	   \label{fig:Fp13}
	\end{minipage}}
 \hfill	
  \subfloat[]{
	\begin{minipage}[c][1\width]{
	   0.3\textwidth}
	   \centering
	   \includegraphics[width=1\textwidth]{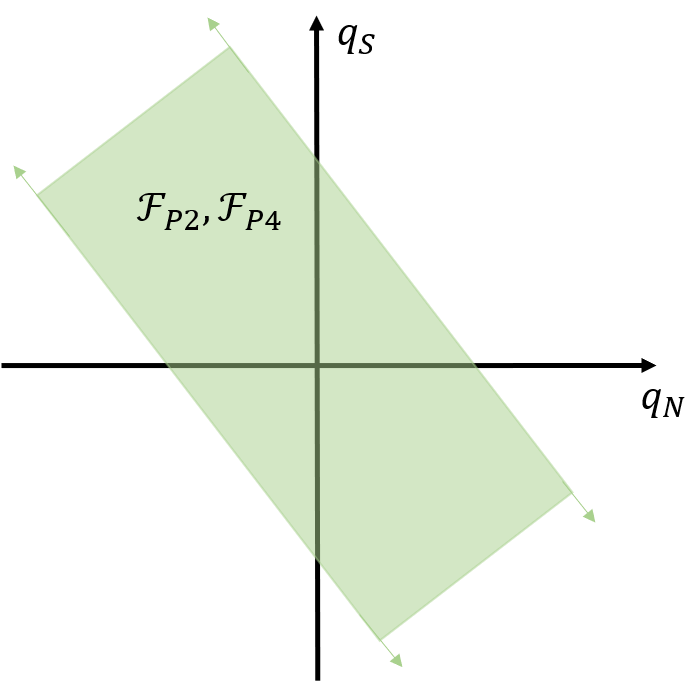}
	   \label{fig:Fp24}
	\end{minipage}}
\caption{Illustration of feasible regions for trades across the four-node network in Fig.~\ref{fig:4node}. For a trade to be feasible, the resulting flows must satisfy the capacity constraints of all lines in the network. If the trades performed are pairwise, the feasible region $\mathcal{F}_G$ is a rectangular
area, as shown in (a), which indicates independence between the trades. For example, trades between nodes $i=NW$ and $i=NE$ are constrained only by the capacity of the link which connects these nodes and not by the trades performed between the remaining two nodes. In other words, the flows that arise due to a certain trade do not cause externalities on the nodes that are not trading. If power trades are considered, the feasible region is modified due to externalities that each trade causes on the feasibility of further trades. As illustrated in (b) and (c), the feasible region for each transmission line $\ell$ in the electricity grid is given by the area between two parallel lines. For the horizontal transmission lines, the boundaries of this region have a positive slope, as in (b), while the feasible region for the vertical lines are bounded by parallel lines with a negative slope (c). The overall feasible region for power trades in the network studied $\mathcal{F}_P$ is the intersection of the feasible regions for each individual line. This region will be characterized in detail in Section~\ref{sec:feas_regions}, where we will show that it can either be a parallelogram or a hexagon.}
\label{fig:feasiblesets}
\end{figure}

However, such trade independence is not the case with power trades. Panel (b) in Figure~\ref{fig:feasiblesets} illustrates the implications of equations~\eqref{eq:f12} and \eqref{eq:f23}. The positive slope for $\mathcal{F}_{P1},\mathcal{F}_{P3}$ suggests that there are positive complimentarities in the North-North and South-South trades $q_N,q_S$. In other words, as $q_N$ increases, the two lines support a larger trade $q_S$ in the same direction as well.  From the line equations above, we find that the feasible sets  $\mathcal{F}_{P1},\mathcal{F}_{P3}$   have a positive slope $m_1,m_3$ of:
\begin{eqnarray}\label{eq:m1}
   m_1&=&\frac{x_{E}+x_{S}+x_{W}}{x_{S}}>1 \nonumber \\
     m_3&=&\frac{x_{N}}{x_{N}+x_{E}+x_{W}}<1.
\end{eqnarray}
Thus, the feasibility condition is driven by lines reactances $x_i$. However, while the North-North and South-South lines have positive complimentarities, this is not the case for the two North-South lines. $\mathcal{F}_{P2}$ and $\mathcal{F}_{P4}$ are limited by lines with negative slope:
\begin{equation}\label{eq:m24}
   m_2=m_4=-\frac{x_{N}}{x_{S}},
\end{equation}
Panel (c) in Figure~\ref{fig:feasiblesets} illustrates this point. In other words, the sets  $\mathcal{F}_{P2}$ and $\mathcal{F}_{P4}$ are least restricted when trades have opposite signs, i.e. when nodes $NW$ and $SW$ are net producers and nodes $NE$ and $SE$ are net consumers or vice-versa. This in intuitive as flows have to complete the cycle.\footnote{Note that the illustrations in Figure~\ref{fig:feasiblesets} only show the shape of each feasible region, and the actual slopes and intersections of the boundaries with the axis will vary with the network parameters. Thus, although $\mathcal{F}_{P1}$ and $\mathcal{F}_{P3}$ have the same shape (area within two parallel lines with positive slope), they will not necessarily be equivalent. The same reasoning goes for $\mathcal{F}_{P2}$, $\mathcal{F}_{P4}$, and $\mathcal{F}_{G}$, with the exception that the feasible set of goods trades will always be rectangular (the slopes of the boundary lines do not change, as the trades are independent and do not cause externalities over other links in the network).}

By comparing the feasible sets, we can find which trades are feasible in the goods case, but not in the electricity grid ($(q_N,q_S) \in \mathcal{F}_{G}-(\mathcal{F}_{P} \cap \mathcal{F}_{G})$), and also those that are feasible in the electricity grid, but not in a goods trade ($(q_N,q_S) \in \mathcal{F}_{P}-(\mathcal{F}_{P} \cap \mathcal{F}_{G})$). 
Since the power grid is composed of the four individual lines, the final feasible set $\mathcal{F}_{P}$ is the intersection of the sets discussed above. We appeal to two results in convex set theory that are important to note in order to characterize the feasible regions of interest. First, the intersection of an arbitrary collection of convex sets is convex. Second, for any system of simultaneous linear inequalities and equations in $n$ variables, the set of solutions is a convex set in $\mathbb{R}^n$ \citep{rockafellar}. Based on these two results, we state that (i) all the individual feasible sets are convex, and (ii) the intersection of those sets is also a convex set. 

Next, we determine the feasible region for trades for the stylized grid. 

\subsection{Feasible region}
\label{sec:regionfeasibility}

We analyze an asymmetric network where the reactances of the horizontal lines are equal (i.e. $x_{N}=x_{S}=x_H$) and the reactances of the vertical lines are equal  (i.e. $x_{E}=x_{W}=x_V$). The asymmetry of the network is on the transmission line capacities, which are different from each other.

Section~\ref{sec:linefeasibility} shows that the slopes of the parallel lines that limit the feasible regions  $\mathcal{F}_{P2}$ and $\mathcal{F}_{P4}$ are the same. Since the distance between the two boundaries is determined by the line capacity, the intersection between those two sets is the set with the lower capacity. Let $C_V=\min(C_E,C_W)$, then  $\mathcal{F}_{P2} \cap \mathcal{F}_{P4} = \mathcal{F}_{P2}$ if $C_V=C_{E}$ and $\mathcal{F}_{P2} \cap \mathcal{F}_{P4} = \mathcal{F}_{P4}$ if $C_V=C_{W}$. Let us denote that intersection as $\mathcal{F}_{PV}$.  


\begin{figure}[t]
  \subfloat[Indices for boundary lines]{
	\begin{minipage}[c][1\width]{
	   0.5\textwidth}
	   \centering
	   \includegraphics[width=.75\textwidth]{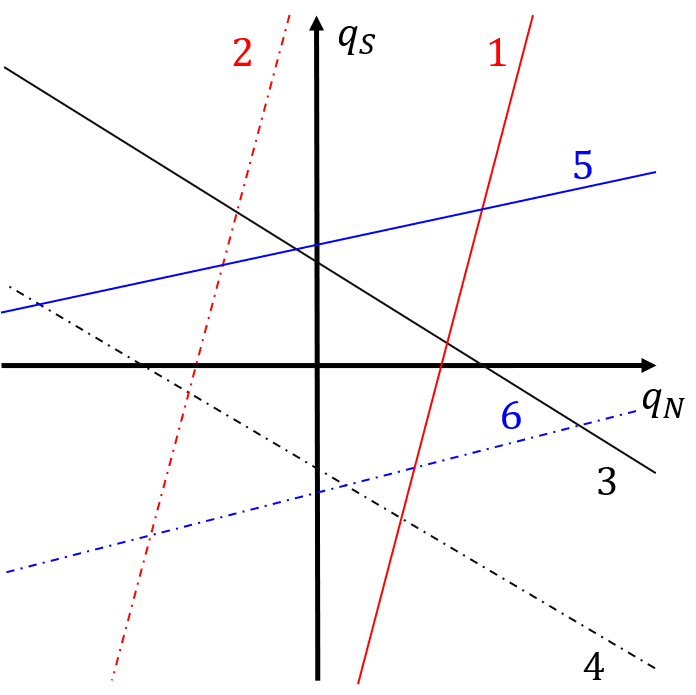}
	   \label{fig:boundsidx}
    \end{minipage}}
 \hfill 	
  \subfloat[Indices for intersections between boundary lines]{
	\begin{minipage}[c][1\width]{
	   0.5\textwidth}
	   \centering
	   \includegraphics[width=.75\textwidth]{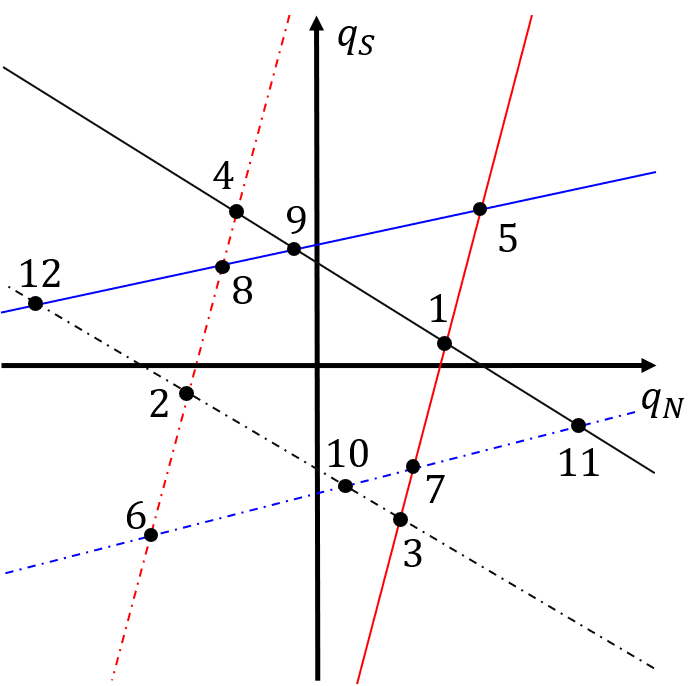}
	   \label{fig:intersectidx}
	\end{minipage}}
\caption{Illustration exemplifying a feasible set for power trades, which is given by the intersection of the feasible trades for each transmission line. The indices for each boundary line (a) and for each intersection (b) are used throughout our analysis in this section. The red lines correspond to the capacity limit of transmission line $\ell=N$. This means that any trade between the two parallel red lines generates a power flow through line $\ell=N$ which satisfies the capacity constraint of this line. Similarly, the black boundaries correspond to the capacity constraint of the vertical line with the lowest capacity, and the blue ones refer to the constraint imposed by line $\ell=S$. The solid lines are the boundaries representing the binding conditions with positive flows, while the dashed lines refer to binding negative flows. Table~\ref{tb:intersec_indices} presents the correspondence between boundary lines and intersection points (e.g. intersection 2 is between boundaries 2 and 4).}
\label{fig:indices}
\end{figure}

To find the feasible trades of the electricity grid as a whole, we need to characterize the intersection of the sets $\mathcal{F}_{P1}$, $\mathcal{F}_{PV}$, and $\mathcal{F}_{P3}$. Since each set is limited by two parallel lines, which correspond to when the power flow constraints are biding, we start by finding all the intersection points between each two lines. This gives us a total of 12 intersection points. 

Let $\mathcal{B}=\{1,...,6\}$ denote the set of indices for the boundary lines of the sets considered. Figure~\ref{fig:indices} illustrates the boundaries and their corresponding indices. The red boundaries correspond to transmission line $\ell=N$ ($\mathcal{F}_{P1}$), the black ones refer to the vertical line with the lowest capacity ($\mathcal{F}_{PV}$), and the blue lines limit the feasible region for $\ell=S$ ($\mathcal{F}_{P3}$). In our convention, odd-numbered lines (solid lines in Figure~\ref{fig:indices}) are the boundaries representing the binding conditions with positive flow, and even-numbered ones (dashed lines in Figure~\ref{fig:indices}) refer to binding negative flows.

For ease of exposition, let $n \in \mathcal{I}=\{1,...,12\}$ denote the indices corresponding to the intersection points. The intersections $a-b$ for lines $a,b\in \mathcal{B}$, which are equivalent to the intersections $b-a$, are numbered as shown in Panel A of Table~\ref{tb:intersec_indices}, and illustrated in Figure~\ref{fig:indices}. The expressions for the coordinates of each intersection point are listed in Panel B of Table~\ref{tb:intersectionsNS}, where $X_r=\frac{x_H}{2(x_H+x_V)}$. 

\begin{table}[tpb]

    \begin{center}
        \begin{tabular}{l c c c c c c c c c c c c}
        \multicolumn{13}{c}{Panel A: Correspondence between intersection points and boundary lines.}\\
        \hline
            $n \in \mathcal{I}$ & 1 & 2 & 3 & 4 & 5 & 6 & 7 & 8 & 9 & 10 & 11 & 12\\ \hline
            $a,b\in \mathcal{B}$ & 1-3 & 2-4 & 1-4 & 2-3 & 1-5 & 2-6 & 1-6 & 2-5 & 3-5 & 4-6 & 3-6 & 4-5 \\
            \hline
        \end{tabular}  
     
    \end{center}
    \centering
    \setlength{\extrarowheight}{10pt}
    \begin{tabular}{ c  c  c  c}
    \multicolumn{4}{c}{Panel B: Intersection points between boundaries of feasible regions $\mathcal{F}_{P1}$, $\mathcal{F}_{PV}$, and $\mathcal{F}_{P3}$}\\
    \hline
        $n $ & $(q_N,q_S)$ &$n$ & $(q_N,q_S)$ \\ \hline
        1 & $\left(C_N+C_V,\frac{(1-X_r)}{X_r}C_V-C_N\right)$ & 2 & $\left(-C_N-C_V,-\frac{(1-X_r)}{X_r}C_V+C_N\right)$ \\ 
        3 & $\left(C_N-C_V,-\frac{(1-X_r)}{X_r}C_V-C_N\right)$ & 4 & $\left(C_V-C_N,\frac{(1-X_r)}{X_r}C_V+C_N\right)$ \\ 
        5 & $\left(\frac{X_rC_S+(1-X_r)C_N}{1-2X_r},\frac{X_rC_N+(1-X_r)C_S}{1-2X_r}\right)$ & 6 & $\left(\frac{-X_rC_S-(1-X_r)C_N}{1-2X_r},\frac{-X_rC_N-(1-X_r)C_S}{1-2X_r}\right)$ \\ 
        7 & $\left(\frac{-X_rC_S+(1-X_r)C_N}{1-2X_r},\frac{X_rC_N-(1-X_r)C_S}{1-2X_r}\right)$ & 8 & $\left(\frac{X_rC_S-(1-X_r)C_N}{1-2X_r},\frac{-X_rC_N+(1-X_r)C_S}{1-2X_r}\right)$ \\ 
        9 & $\left(\frac{(1-X_r)}{X_r}C_V-C_S,C_S+C_V\right)$ & 10 & $\left(-\frac{(1-X_r)}{X_r}C_V+C_S,-C_S-C_V\right)$ \\ 
        11 & $\left(\frac{(1-X_r)}{X_r}C_V+C_S,C_V-C_S\right)$ & 12 & $\left(-\frac{(1-X_r)}{X_r}C_V-C_S,C_S-C_V\right)$ \\
      \hline
    \end{tabular}
\caption{Feasible Power Trades}
\label{tb:intersectionsNS} \label{tb:intersec_indices}
\end{table}

Notice that the intersection points in the left of Table~\ref{tb:intersectionsNS} are the negative of those in the right. Therefore, by symmetry, (1) if an intersection $a-b$ with coordinates $(q_N,q_S)$ is a vertex of the polygon that limits the set of feasible trades, then the intersection with coordinates $(-q_N,-q_S)$ will also be a vertex, (2) the feasible set $\mathcal{F}_{P}$ will be centered at the origin, and (3) if a certain boundary line is a face of the polygon that limits the feasible trades, so is the boundary line parallel to it. Further, since the intersection of a collection of convex sets is also a convex set, the polygon can have at most six faces, corresponding to the six boundary lines of the sets $\mathcal{F}_{P1}$, $\mathcal{F}_{PV}$, and $\mathcal{F}_{P3}$. Note that, for this particular problem, if the polygon had more than six faces, then it would be non-convex. We  exploit that symmetry to determine which intersection points are vertices of the feasible region for this electricity grid.

\begin{theorem}\label{th:feasNS}
Let the vertex $V_{n}$ be the intersection point $n$, as presented in Table~\ref{tb:intersectionsNS}. For a 4-node electricity grid with $x_{N}=x_{S}=x_H$ and $x_{E}=x_{W}=x_V$, the set of feasible power trades $\mathcal{F}_{P}$ can be classified into one of the following 4 cases: 
\begin{itemize}
    \item Case 1: quadrilateral defined by vertices $V_{5}$, $V_{6}$, $V_{7}$ and $V_{8}$.
    
    Network has large capacity $C_V$ on the vertical lines, and thus these constraints will never be binding. This case occurs if the condition below is satisfied.
    \begin{equation}\label{eq:cxratioNS}
        \left(\frac{C_N+C_S}{2C_V}\right)^2\frac{x_H^2}{x_V^2}+ \left(\frac{C_N-C_S}{2C_V}\right)^2\frac{x_H^2}{(x_H+x_V)^2}\leq 1.
    \end{equation}
    
    \item Case 2: quadrilateral defined by vertices $V_{1}$, $V_{2}$, $V_{3}$ and $V_{4}$.
    
    Network has large capacity $C_S$ on the south line, whose capacity constraint will never be binding. This case occurs if
    \begin{equation}
        \frac{2C_S^2(x_H+x_V)^2}{x_H^2+2x_V^2} \geq 2C_N^2+4C_NC_V\frac{x_V}{x_H}+C_V^2\left[1+\frac{(x_H+2x_V)^2}{x_H^2}\right].
    \end{equation}

    \item Case 3: quadrilateral defined by vertices $V_{9}$, $V_{10}$, $V_{11}$ and $V_{12}$.
    
    Network has large capacity $C_N$ on the north line, and thus the constraint on line $i=N$ will never be binding. This case occurs if
    \begin{equation}
       \frac{2C_N^2(x_H+x_V)^2}{x_H^2+2x_V^2} \geq 2C_S^2+4C_SC_V\frac{x_V}{x_H}+C_V^2\left[1+\frac{(x_H+2x_V)^2}{x_H^2}\right]
    \end{equation}
    
    \item Case 4: hexagon defined by vertices $V_{1}$, $V_{2}$, $V_{7}$, $V_{8}$, $V_{9}$ and $V_{10}$.
    
    If none of the conditions above hold, then none of the transmission lines have a large enough capacity relative to others. Then, the hexagon feasible region indicates that the capacity of any of the four lines can become binding. 
\end{itemize}



\end{theorem}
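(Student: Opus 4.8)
The plan is to treat $\mathcal{F}_P=\mathcal{F}_{P1}\cap\mathcal{F}_{PV}\cap\mathcal{F}_{P3}$ as the intersection of three infinite ``slabs'' in the $(q_N,q_S)$ plane, one per pair of parallel boundary lines, and to exploit the central symmetry recorded below Table~\ref{tb:intersectionsNS}. First I would note that, under $x_N=x_S=x_H$ and $x_E=x_W=x_V$, each slab is the region between two lines placed symmetrically about the origin: the $\mathcal{F}_{P1}$ and $\mathcal{F}_{P3}$ slabs carry the positive slopes $m_1,m_3$ of \eqref{eq:m1}, while $\mathcal{F}_{PV}$ has slope $-1$, the common value $m_2=m_4$ of \eqref{eq:m24}. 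Since these three slabs supply only three distinct edge directions and each is centrally symmetric, their intersection is a convex, origin-symmetric polygon whose edges occur in parallel pairs; hence $\mathcal{F}_P$ has an even number of edges and at most six, so it is either a quadrilateral or a hexagon. This already collapses the problem into a ``one slab redundant'' (quadrilateral) versus ``all three active'' (hexagon) dichotomy, and it explains why only four outcomes are possible.

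Next I would make precise when a slab fails to contribute an edge. A slab is redundant exactly when the bounded parallelogram carved out by the other two slabs lies entirely inside it; because every set in sight is centrally symmetric, this reduces to testing the single $\pm$ pair of extreme vertices of that parallelogram against the dropped line's capacity constraint. Concretely, dropping $\mathcal{F}_{PV}$ leaves the $N$--$S$ parallelogram with corners $V_5,V_6,V_7,V_8$, and I would substitute its extreme vertex $V_5$ from Table~\ref{tb:intersectionsNS} into $|f^p_E|\le C_V$, i.e.\ into $X_r\,|q_N+q_S|\le C_V$ with $X_r=\tfrac{x_H}{2(x_H+x_V)}$; simplifying yields the governing inequality for Case~1, namely \eqref{eq:cxratioNS}. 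The identical recipe---applied to the $N$--$V$ parallelogram $V_1,\dots,V_4$ tested against $|f^p_S|\le C_S$, and to the $V$--$S$ parallelogram $V_9,\dots,V_{12}$ tested against $|f^p_N|\le C_N$---produces the Case~2 and Case~3 conditions. Each case then reads as ``the dropped line has so much capacity that the remaining two already imply its constraint.''

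Finally I would treat the hexagon. When none of the three redundancy inequalities holds, all six boundary lines are active, and I would pin down the six surviving vertices by ordering the active half-planes according to the angles of their outward normals $\pm(1-X_r,-X_r)$, $\pm(-X_r,1-X_r)$ and $\pm(1,1)$; reading off the intersections of cyclically consecutive faces yields, in order, $V_1,V_9,V_8,V_2,V_{10},V_7$, which is exactly the Case~4 vertex set. To close the argument I would verify exhaustiveness and mutual exclusivity: that at least two non-parallel slabs are always active (so the region is bounded and two slabs cannot be simultaneously redundant), and that the simultaneous failure of the three capacity inequalities is precisely the hexagon regime. The symmetry observations (antipodal vertices, parallel faces) do most of the heavy lifting here, halving the number of vertices one must check.

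I expect the main obstacle to be combinatorial bookkeeping rather than any single estimate: correctly matching each of the twelve tabulated intersection points to its $(a,b)$ boundary pair via Figure~\ref{fig:indices}, getting the outward-normal ordering right so that the six named points really are the consecutive hexagon vertices while the other six are cut off, and confirming that the three capacity inequalities partition parameter space without gaps or overlaps. The one genuinely conceptual step is justifying that ``extreme vertex inside the slab'' is the exact redundancy criterion, so that testing a single $\pm$ pair suffices and no interior edge is overlooked; once that is secured, the algebra reducing each vertex substitution to the stated inequality is routine.
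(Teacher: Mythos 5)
Your geometric framework (three centrally symmetric slabs, at most six faces, a slab being redundant exactly when the parallelogram formed by the other two is contained in it, containment checked at the extreme $\pm$ vertex pair) is sound, and it is in fact the \emph{exact} redundancy criterion. The genuine gap is your claim that this test, applied to $V_5$, ``simplifies to'' the paper's condition \eqref{eq:cxratioNS}. It does not. From Table~\ref{tb:intersectionsNS} the coordinates of $V_5$ sum to $(C_N+C_S)/(1-2X_r)$, and since $X_r/(1-2X_r)=x_H/(2x_V)$, substituting $V_5$ into $X_r|q_N+q_S|\le C_V$ gives the \emph{linear} condition
\begin{equation*}
\frac{x_H\,(C_N+C_S)}{2\,x_V\,C_V}\;\le\;1,
\end{equation*}
i.e.\ only the first term of \eqref{eq:cxratioNS} being at most one, with nothing corresponding to the second term. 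The paper's inequality arises from a different and strictly stronger criterion: in Appendix~\ref{app:proof_th2} the slab boundary is required to lie outside the \emph{circumscribed circle} of the parallelogram (the distance comparison $d_{4o}>d_5$), not merely outside the parallelogram itself. The two criteria coincide only when $C_N=C_S$; otherwise your exact test is weaker, so your plan, carried out faithfully, provably cannot reproduce the stated Case~1 condition. The same mismatch occurs in Cases~2 and~3, where the exact vertex test gives $C_S\ge C_N+\tfrac{2x_V}{x_H}C_V$ and $C_N\ge C_S+\tfrac{2x_V}{x_H}C_V$ respectively, rather than the quadratic distance comparisons in the theorem.

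The discrepancy is not cosmetic, because it means your proof and the paper's proof establish different statements. Take $x_H=x_V=1$, $C_N=2.9$, $C_S=1$, $C_V=2$: your test gives $\tfrac{x_H(C_N+C_S)}{2x_VC_V}=0.975\le 1$, so $\mathcal{F}_{P1}\cap\mathcal{F}_{P3}\subset\mathcal{F}_{PV}$ and the feasible set is the Case~1 quadrilateral (one can verify directly that $V_5=(4.85,2.95)$ and $V_7=(3.85,-0.05)$ both satisfy $|q_N+q_S|\le 8$), yet \eqref{eq:cxratioNS} evaluates to $1.007>1$ and the Case~2 and Case~3 conditions also fail, so the theorem as stated would classify this network as the Case~4 hexagon. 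In other words, the paper's conditions are sufficient but not necessary, its Case~4 ``none of the above'' clause does not follow from its own argument, and its remark that intersections $1$, $5$, $9$ coincide ``at equality'' is consistent with the linear threshold, not with equality in \eqref{eq:cxratioNS}. So your approach, completed honestly, yields a corrected classification with linear case boundaries; but as a proof of the theorem \emph{as stated} it fails at precisely the step you dismissed as routine algebra, and no amount of bookkeeping on the hexagon side can repair that, since the claimed equivalence of the two Case~1 conditions is false.
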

\noindent Proof is in Appendix~\ref{app:proof_th2}.

Figure~\ref{fig:intNS_all} presents one example for each possible case specified in Theorem~\ref{th:feasNS}. The red, blue, and black lines correspond to the individual feasible sets of the transmission line $\ell=N$, transmission line $\ell=S$, and vertical transmission lines, respectively. By fixing the parameters of the horizontal lines and changing only those of the vertical lines, we are able to express three out of the four possible cases. If $C_N>C_S$, those cases will be Cases 1, 3 and 4; if $C_N<C_S$, we may be in cases 1, 2 or 4. If $C_N=C_S$, the conditions for the second and third cases will never hold, and we will reduce the amount of possible cases to two.\footnote{The special case of a symmetric network with $C_N=C_S$ and $C_E=C_W$ is analyzed in Appendix~\ref{sec:symmetric}.}
\begin{figure}
  \subfloat[Case $\mathcal{F}_{P1} \cap \mathcal{F}_{P3} \subset \mathcal{F}_{PV}$, with  $C_N= 150$, $C_V = 100$, $C_S = 100$, $x_H = 0.2$, $x_V = 0.3$]{
	\begin{minipage}[c][0.9\width]{0.475\textwidth}
	   \centering
	   \includegraphics[width=0.9\textwidth]{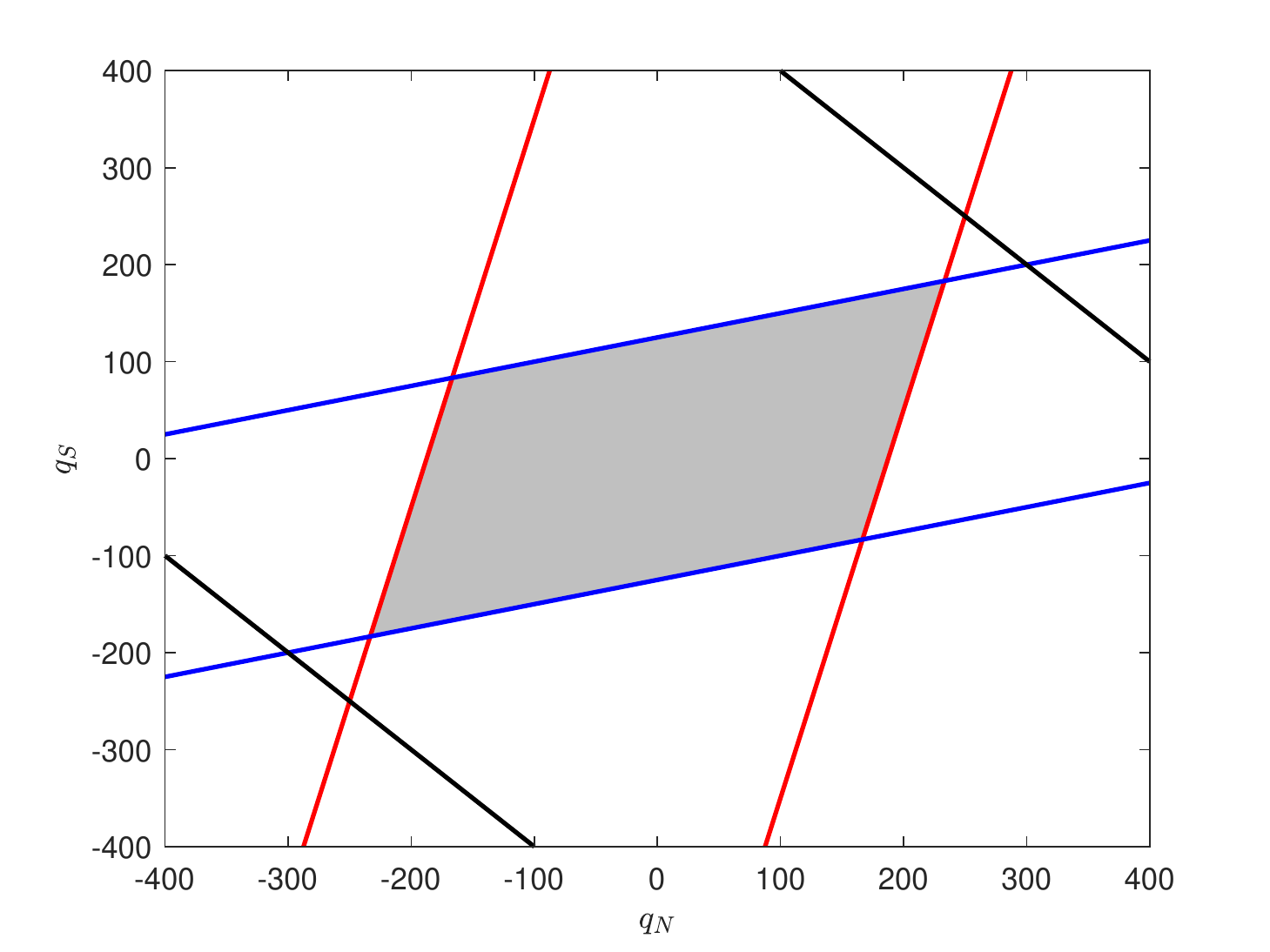}
	   \label{fig:intNSCase1}
	\end{minipage}}
    \hfill
    \subfloat[Case $\mathcal{F}_{P1} \cap \mathcal{F}_{PV} \subset \mathcal{F}_{P3}$, with  $C_N= 100$, $C_V = 45$, $C_S = 150$, $x_H = 0.2$, $x_V = 0.1$.]{
	\begin{minipage}[c][0.9\width]{0.475\textwidth}
	   \centering
	   \includegraphics[width=0.9\textwidth]{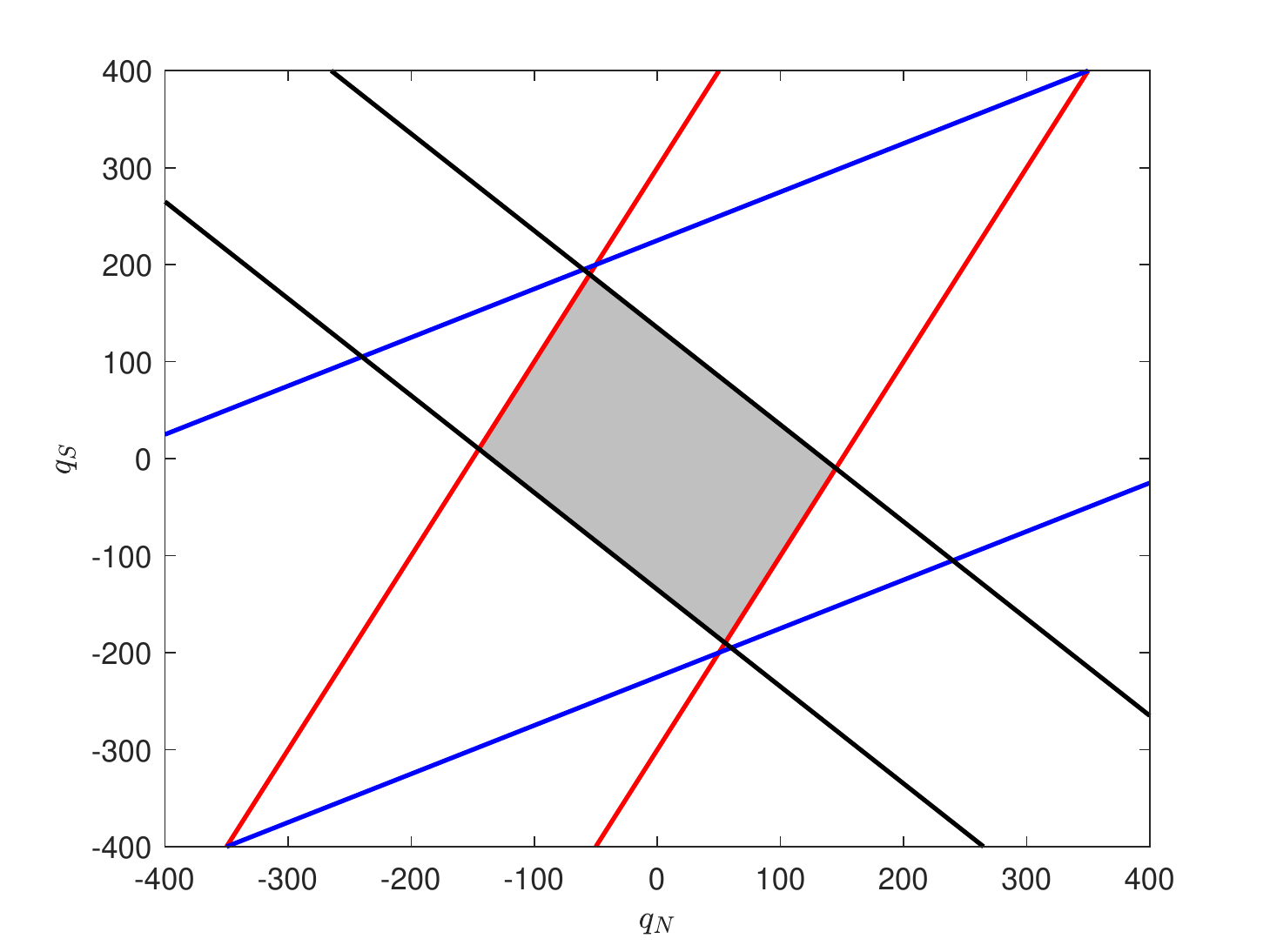}
	   \label{fig:intNSCase2}
	\end{minipage}}
    \newline	
    \subfloat[Case $\mathcal{F}_{PV} \cap \mathcal{F}_{P3} \subset \mathcal{F}_{P1}$, with  $C_N= 150$, $C_V = 45$, $C_S = 100$, $x_H = 0.2$, $x_V = 0.1$]{
	\begin{minipage}[c][0.9\width]{0.475\textwidth}
	   \centering
	   \includegraphics[width=0.9\textwidth]{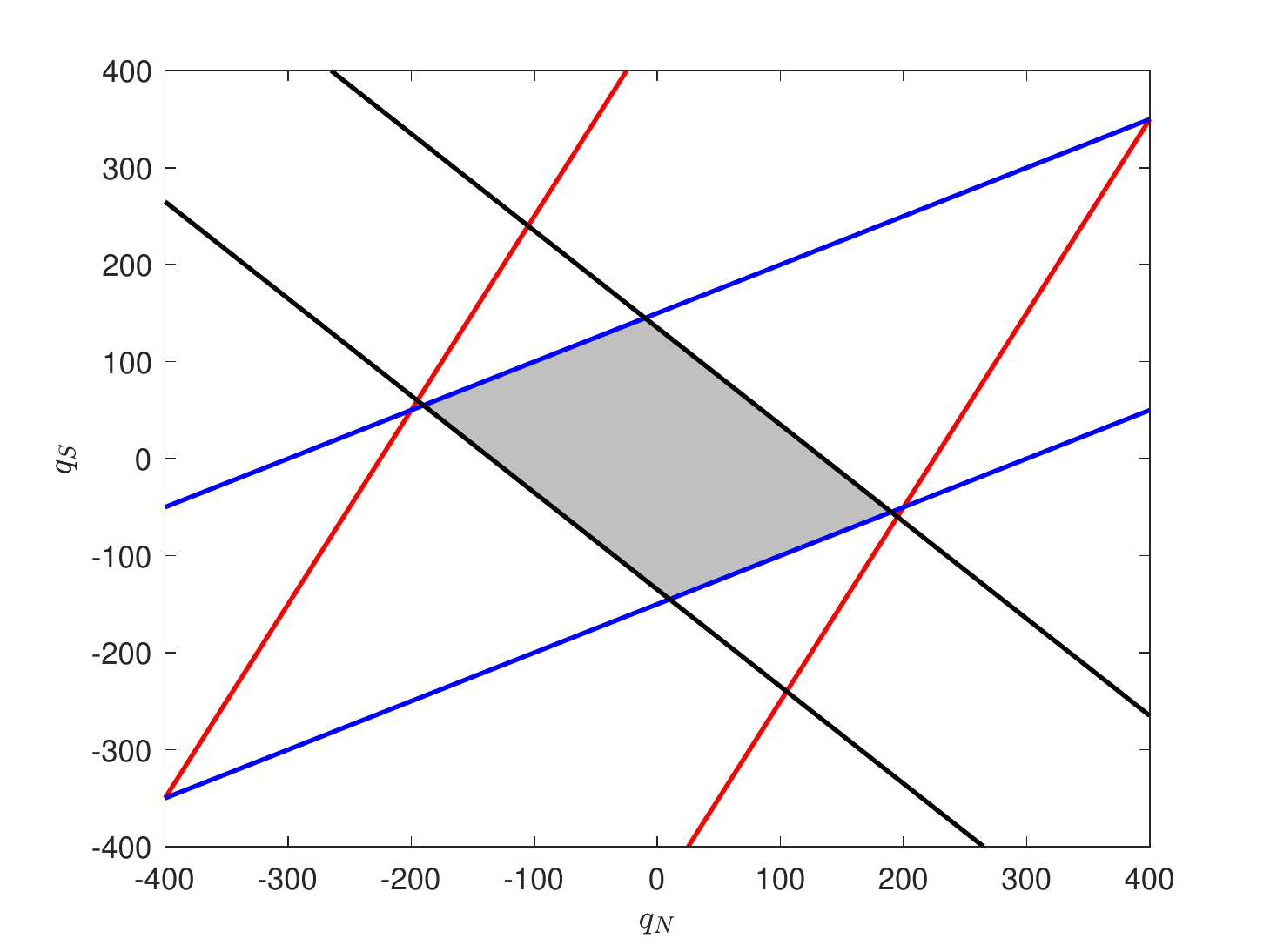}
	   \label{fig:intNSCase3}
	\end{minipage}}
	\hfill
	\subfloat[Hexagon Case, with  $C_N= 150$, $C_V = 100$, $C_S = 100$, $x_H = 0.2$, $x_V = 0.1$]{
	\begin{minipage}[c][0.9\width]{0.475\textwidth}
	   \centering
	   \includegraphics[width=0.9\textwidth]{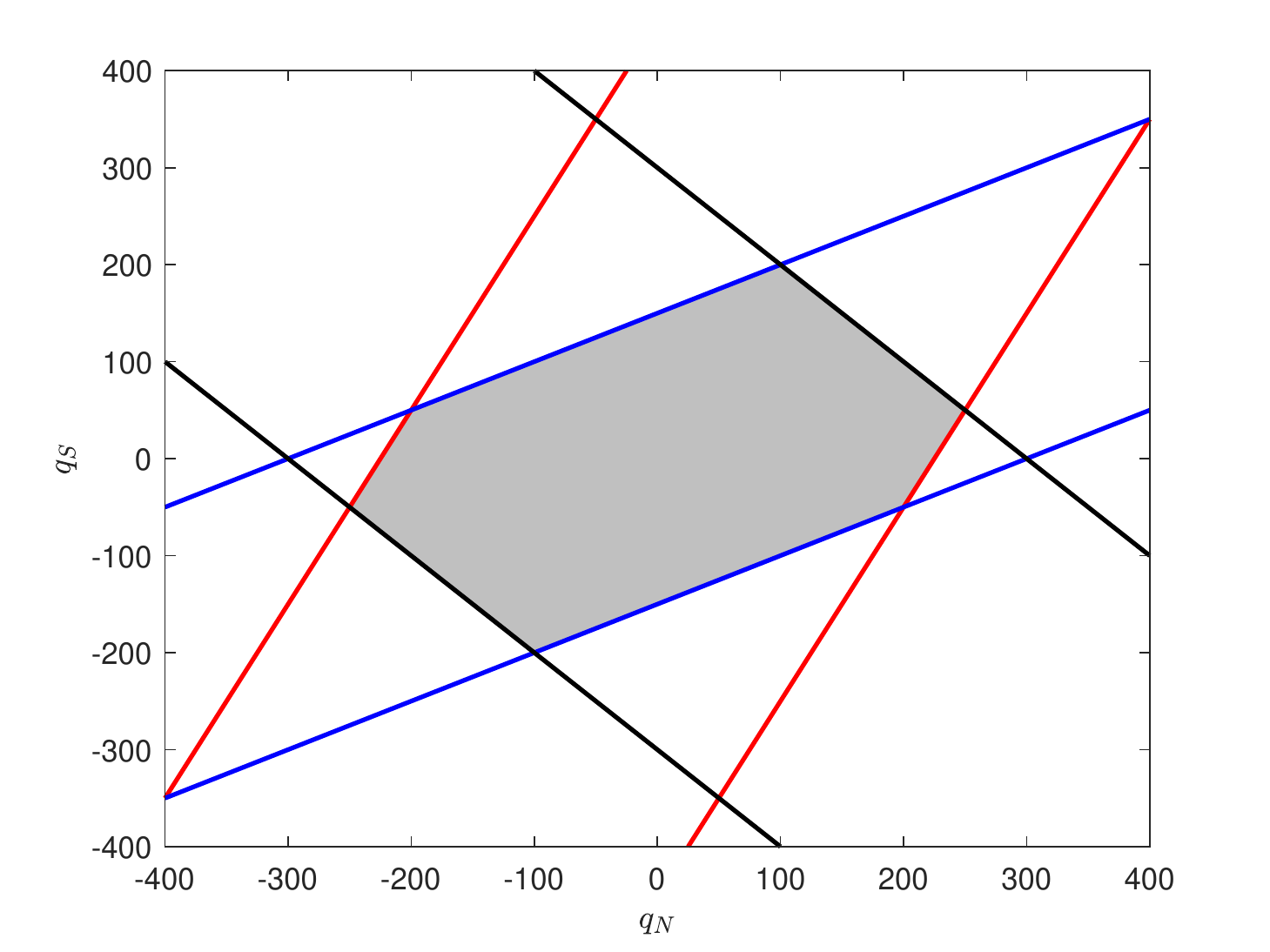}
	   \label{fig:intNSCase4}
	\end{minipage}}
\caption{Illustration of the possible feasible regions for power trades as characterized in Theorem~\ref{th:feasNS}. In (a), network parameters are such that the feasible set satisfies the conditions of Case 1. From this example, we can enter Case 4 (hexagon) by decreasing the vertical reactance $x_V$, which will decrease the slope of the red lines and increase the slope of the blue lines. The new intersection points of these lines with the y-axis are such that the sets intersect as shown in (d). Note that we could also have achieved this case by fixing all parameters used in (a), except for $C_V$, which, if decreased, would intercept $\mathcal{F}_{P1} \cap \mathcal{F}_{P3}$ to form a hexagon. From (d), we are led to Case 3 if the capacity $C_V$ is lowered, as depicted in (c). This change only affects the black lines, which will become less wide. From (c), if we switch the capacity values of $C_N$ and $C_S$, we get to (b), which is an example of network configuration for which Case 2 holds.}
\label{fig:intNS_all}
\end{figure}

We can also determine the intersection of the set of feasible power trades with that of feasible goods trades. Recall that goods trades are independent of one another, in that they do not cause externalities to other trades. For that reason, the feasible region of non-power trades is a rectangle, as presented in Figure~\ref{fig:feasiblesets}, and its vertices are $V_a=(-C_N,C_S)$, $V_b=(C_N,C_S)$, $V_c=(C_N,-C_S)$, and $V_d=(-C_N,-C_S)$.

\section{Challenges for the European Power Grid}
\label{sec:eurogrid}

As Europe has integrated politically and economically, the electricity markets of Europe have also been integrated. The benefits  of electricity market integration are similar to those of market integration in general. Seminal work includes \cite{Krugman:79,Krugman:80,Krugman:81,Krugman:91} who show that gains from trade due to increasing returns to scale, i.e. cost efficiency can occur even between countries with similar preferences, technology, and factor endowments. This is applicable in the case of electricity grid in Europe where the good is homogeneous and is traded between countries which are similar in levels of technology.  

However, as we argue in this paper, electricity market integration can also lead to a unique externality. Depending upon the direction and magnitude of power trades, the transmission capacity available for non-trading nodes may decline. This can lead to transmission constraints on the grid, reducing or even eliminating the gains from integrating the network. Such constraints do not bind if power trades are concentrated locally. This is because historically, most production and consumption of power is co-located. 

The introduction of renewable sources of electricity disrupts this equilibrium.  Power production from renewable sources may not be evenly distributed over the grid. For example, the northern European region has made great strides in producing renewable energy and this energy has created large power trades  over the electrical grid. Given the confluence of these two factors, an integrated electrical grid and large power trades over the grid, Europe provides an ideal setup for us to determine the magnitude of power trade externalities on the electrical grid.

\subsection{Calibration}
\label{sec:calib}

 
We divide the European grid into four meaningful regions, as shown in Figure~\ref{fig:europemap}. Western Europe includes France, Spain, and Portugal. The northern region includes Germany and Scandinavia in our exercise. Eastern Europe includes countries such as Poland and Hungary. Southern Europe includes includes Italy and Greece. We observe that nodes NE and SW in the European grid are connected, unlike our four-node model. For the purpose of our analysis, we can eliminate this extra connection to evaluate only the remaining trades.\footnote{Following \cite{shi}, we can remove the extra line such that the remaining flows in the network remain the same. If the trade flows from node $NE$ to node $SW$, this is done by adding a load in node $NE$ that consumes the flow in the extra line, and adding a generator in node $SW$ that generates that same amount. If the trades flow in the opposite direction, we only need to switch the position of the load and generator added.}
\begin{figure}[h]
    \centering
    \includegraphics[width=0.75\linewidth]{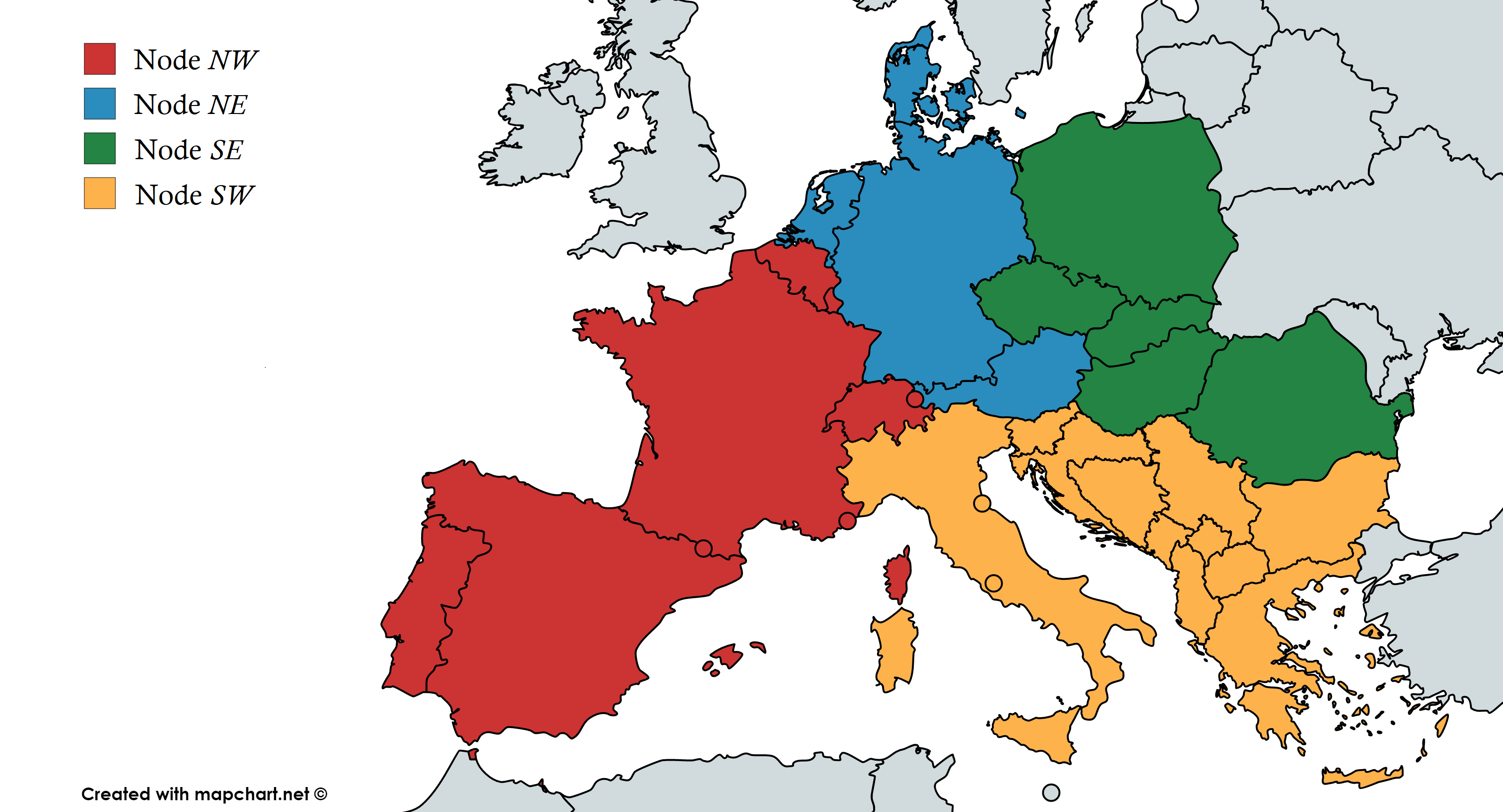}
    \caption{European grid division into four nodes}
    \label{fig:europemap}
\end{figure}

The parameters of the four-node network considered comprise line reactances $x_\ell$ and capacities $C_\ell$ for each line $\ell \in \{N,E,S,W\}$. To estimate the reactance values, we use the publicly available data from the European Network of Transmission System Operators for Electricity (ENTSO-E) Transparency Portal.\footnote{Available at \url{https://transparency.entsoe.eu/transmission-domain/physicalFlow/show}} The portal collects and publishes electrical generation, transportation and consumption data for Europe. The dataset of interest is the hourly cross-border power flow data in Europe.

\textbf{Power flows:} We begin by picking a 50-day time period (March 12th -- April 30th, 2019) with typical amounts of cross-border trades among the European countries. This period of spring is recent and avoids concerns that large spikes in consumption due to cooling or heating needs are driving the results. For each  of the significant regions in Figure~\ref{fig:europemap}, we calculate average power flows that cross the border between regions. Kirchhoff's law determines that, at each node $i$, the sum of power inflows must equal the sum of power outflows. From this, we can use the average power flows computed to determine the average net injections $q_i$ for this 50-day period in Spring 2019. The calculated values  are presented in Figure~\ref{fig:entsoe}. For the time period considered, we note that the region containing Germany was a big power exporter. 


\begin{figure}[h]
    \centering
    \includegraphics[width=0.5\linewidth]{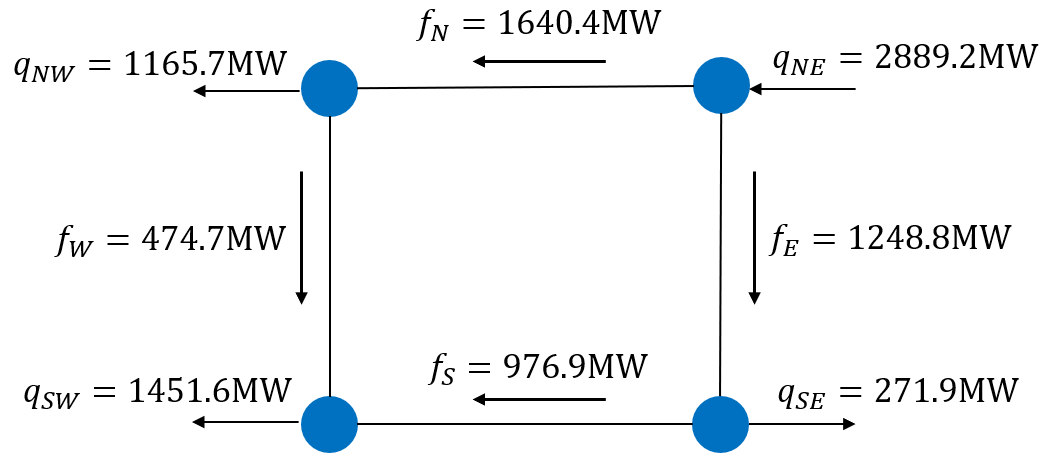}
    \caption{Average power flows through each line and power injections in each node calculated using the ENTSO-E data. These values are used to estimate the reactances of the four-node network.}
    \label{fig:entsoe}
\end{figure}

We aim to estimate the line reactances using the power flows and injections computed. Recall that the power flows satisfy the relation $\bm{f^p}=\mathbf{Sq} $ where the shift-factor transformation matrix $\mathbf{S}$ is a function of the reactances $x_\ell$, as presented in Appendix \ref{app:Smatrix}. Let $\overline{\bm{f}}$ and $\overline{\mathbf{q}}$ denote the vectors of average power flows and net injections calculated using the ENTSO-E data. 

\textbf{Effective Reactances:} We calculate the effective reactances $\mathbf{x}$ that lead to the power flows $\overline{\bm{f}}$ calculated from data, given the net injections $\mathbf{\overline{q}}$ also obtained from the data. To obtain effective reactances, we solve:
\begin{equation*}
    \underset{\mathbf{x}}{\min}~\lVert \overline{\bm{f}} - \mathbf{S(x)\overline{q} \rVert}.
\end{equation*}
We estimate that the symmetric reactance vector $\mathbf{x} = (x_N,x_E,x_S,x_W) = (0.5621, 0.4818, 0.5621, 0.4818)$ approximates the power flows with error on the order of $10^{-5}$.

\textbf{Line Capacities:} Next, we estimate the capacities of the four lines in the network. The ENTSO-E Transparency Portal does not provide this information. Hence, we use data from \citep{hutcheon}. This dataset lists all transmission lines that cross the boundaries of the regions analyzed, along with their capacities.  Figure~\ref{fig:europezones} from \cite{hutcheon} shows that there is significant power trade over large distances in the European grid. The red partitions are ours, based on the partition of Europe into four nodes (Figure~\ref{fig:europemap}). 

\begin{figure}[!ht]
    \centering
    \includegraphics[width=0.7\linewidth]{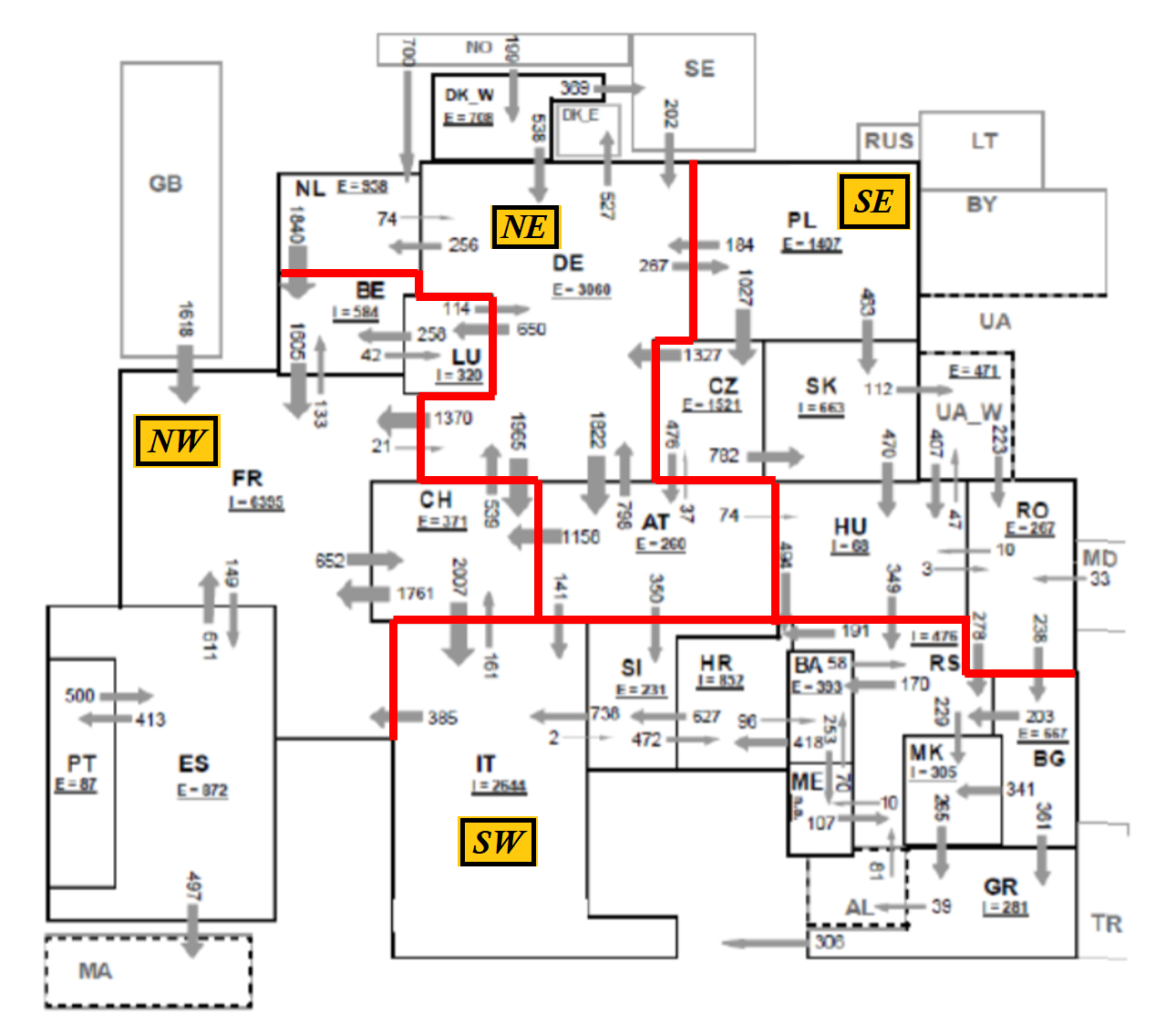}
    \caption{Power flow data for the continental European transmission network. The set of trades derived from this dataset are used in the analysis presented in Section~\ref{sec:simulation}.}
    \label{fig:europezones}
\end{figure}

Assuming that all the transmission lines across two different zones are in parallel, we add the capacities of all lines which connect each two zones to find the equivalent capacities below:
\begin{equation*}
        C_N=18860MW,~C_E = 9796MW,~C_S = 8021MW, ~C_W = 4880MW
\end{equation*}


\textbf{Power Injections:} The abstraction into four-subdivisions allows us to note that nodes $NW$ (Western Europe) and $SW$ (Southern Europe) are net consumers. In contrast, nodes $NE$ (Northern Europe) and $SE$ (Eastern Europe) are net producers.

Data from \cite{hutcheon} further allows us to calculate net injections, $q_{NW}=-4846MW$, $q_{NE} = 4624MW$, $q_{SE} = 3042MW$, $q_{SW} = -2820MW$, where subscripts refer to each node in our model, which correspond to the regions of Europe. For each region, these values are calculated as follows. First, we add all the power flows which leave each region and, from this total outflow, we subtract the sum of all power flows entering this region. Thus, net suppliers have a positive net injection, while net consumers have a negative one. 

We observe that these trades are close to a symmetric case, and thus can be approximated to illustrate a case in which Western Europe trades with Northern Europe, while Eastern Europe trades with Southern Europe. In the discussions that follow, we consider the following approximation for the trades:
\begin{align*}
    q_{NW} = -q_{NE} = q_N &= -\frac{4846+4624}{2}MW = -4735 MW \\
    q_{SE} = -q_{SW} = q_S &= \frac{3042+2820}{2}MW = 2931 MW
\end{align*}

\subsection{Network Congestion Externalities}
\label{sec:simulation}
\label{sec:caseexternality}


\textbf{Positive or negative externality:} In the stylized framework, the $NW$ node (Western Europe, including France)  is a consumer of the power produced by the $NE$ node (Northern Europe, including Germany). At the same time, node $SE$ (Eastern Europe) produces power that is consumed by node $SW$ (Southern Europe, including Italy). Theorem~\ref{th:sigmaf} states that if the ratio of the trades is greater than the ratio of effective reactances, i.e.  
\begin{equation*}
    |q_N/q_S|=1.6155 > |x_{N}/x_{S}|=1,
\end{equation*}
then the power flows lead to a tighter bound in transmission line $\ell=S$ (lower horizontal). That is, this line is subject to a negative externality imposed by the flows generated by the trade between the upper two nodes. 

\textbf{Quantifying the average externality:} The externality identified above can be quantified by decomposing the power flows using Lemma~\ref{lem:powerflows}.
\begin{align*}
    \bm{f^p}&=\bm{f^g} + \mathbf{S_\sigma}~\mathbf{q}\\
    &= \begin{bmatrix}
        q_N \\ 0 \\ q_S \\ 0
        \end{bmatrix}+
    \frac{1}{\sum x_{\ell}}\begin{bmatrix}
        -(x_{N}+x_{E}+x_{S}) & -(x_{E}+x_{S}) & -x_{S} & 0\\
        x_{W} & x_{N}+x_{W} & -x_{S} & 0\\
        x_{W} & x_{N}+x_{W} & -x_{S} & 0\\
        x_{N}+x_{E}+x_{S} & x_{E}+x_{S} & x_{S} & 0
    \end{bmatrix}
    \begin{bmatrix}
        q_N \\ -q_N \\ q_S \\ -q_S\\
    \end{bmatrix}\\
    &= \begin{bmatrix}
        q_N \\ 0 \\ q_S \\ 0
        \end{bmatrix}+
        \frac{1}{\sum x_{\ell}}\begin{bmatrix}
        -x_Nq_N-x_Sq_S \\ -x_Nq_N-x_Sq_S \\ -x_Nq_N-x_Sq_S \\ x_Nq_N+x_Sq_S\\
    \end{bmatrix},
\end{align*}
where the second term denotes the $\sigma$-flows characterized in Theorem~\ref{th:sigmaf}. Using the calibrated values from Section~\ref{sec:calib}, the decomposition is as follows:
\begin{equation*}
    \bm{f^p}= \begin{bmatrix}
    -4735 \\ 0 \\ 2931 \\ 0
    \end{bmatrix}+
    \begin{bmatrix}
    485.7 \\ 485.7 \\ 485.7 \\ -485.7\\
    \end{bmatrix} = \begin{bmatrix}
    -4249.3 \\ 485.7 \\ 3416.7 \\ -485.7\\
    \end{bmatrix}
\end{equation*}
The above decomposition suggests that the $\sigma$-flows have a magnitude of $485.7$ MW and they are in the clockwise direction, which is in accordance with the characterization presented in Theorem~\ref{th:sigmaf} for these flows. An illustration of the flow decomposition performed is presented in Fig.~\ref{fig:fdecomposition}.
    \begin{figure}[h!]
        \centering
        \includegraphics[width=0.8\linewidth]{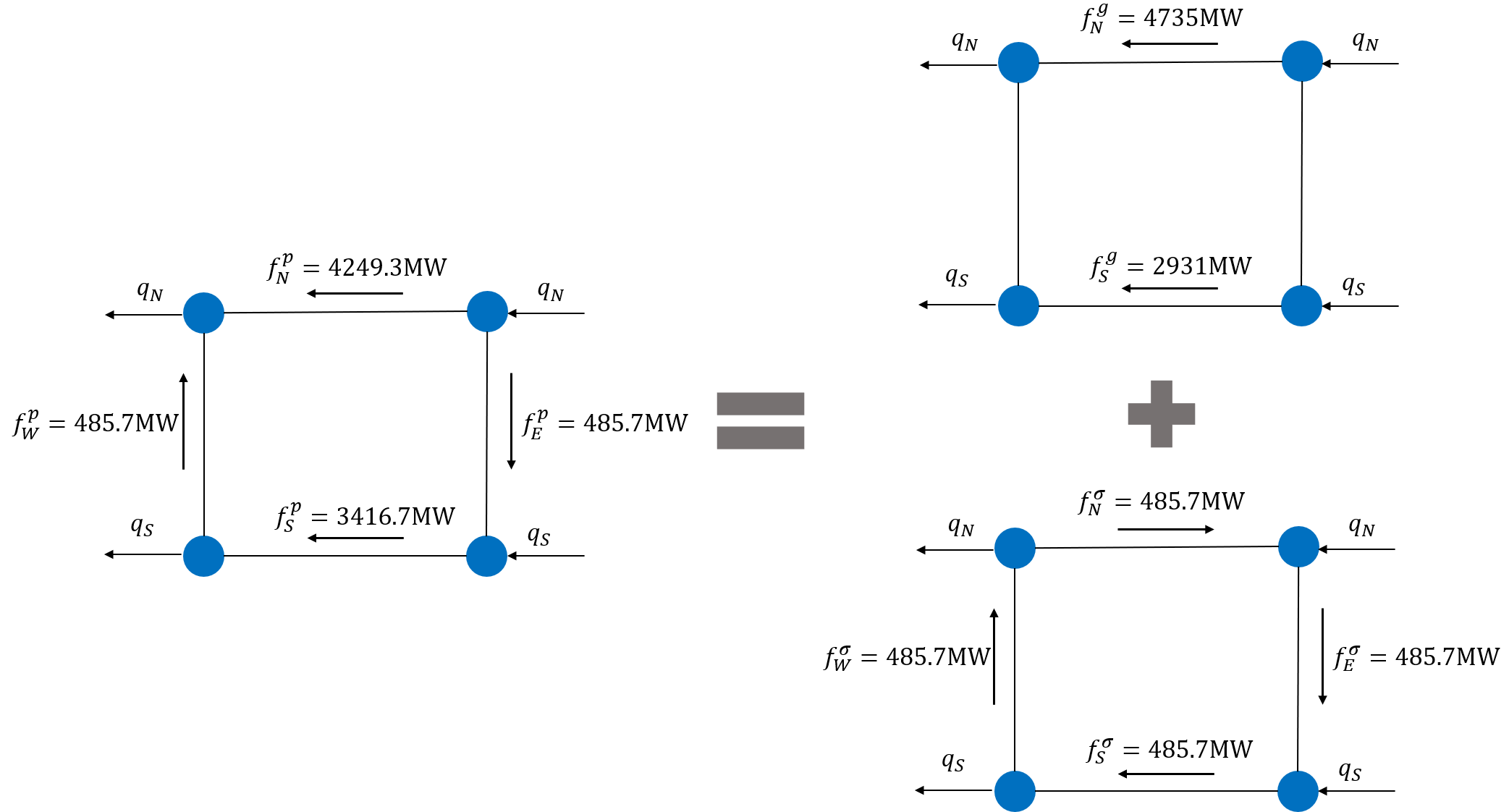}
        \caption{Power flow decomposition into goods component and clockwise $\sigma$-flows.}
        \label{fig:fdecomposition}
\end{figure}

When both trades are performed simultaneously in the network,\footnote{Appendix~\ref{sec:further} further decomposes the externality by separating the portions caused by the North-North and South-South trade if they were individually present.} the South transmission line faces a negative externality in the form of a cyclic clockwise flow that decreases the capacity available for nodes $SE$ and $SW$ to trade. This extra flow is $\frac{485.7}{4735} = 10.26\%$ of the trade between nodes $NW$ and $NE$. 

Conversely, this clockwise flow \textit{expands} the available capacity of the North transmission line, and therefore trades between nodes $NW$ and $NE$ face a positive externality. This externality is in the form of a counter-flow that has $     \frac{485.7}{2931} = 16.57\% $ of the magnitude of the trade between nodes $SE$ and $SW$.

\textbf{Externality due to marginal trade:} Keeping the South-South trade fixed, each unit added to the 4735 MW North-North trade increases the extra clockwise flow by 0.2692 units. This value can be found by calculating the extra flow $\mathbf{S_\sigma}~\mathbf{q}$ for the unit additional trade $(q_N,q_S)=(1,0)$. This result suggests that the value of the transit fare should be 26.92\% for additional North-North trades. The fare can be utilized to invest in the power infrastructure in the  countries represented in the Southern region.




\subsection{Trade Feasibility}

The previous section calculates the network externalities in Europe when large power flows take place. This section estimates how the set of feasible trades change given that power has to obey physical laws. Using the estimated network parameters and power trades, we can analyze the feasible region for power trades and how it overlaps the set of feasible goods trades.

Figure~\ref{fig:overlapping}  shows that there are trades $(q_N,q_S)$ which are feasible in the power case but not in the goods case, and vice-versa. The area of feasible trades over the grid if electricity did not have to obey physical laws is 4\% larger than the feasible trade region for power trades. This estimate is conditional on the calibrated parameters and suggests that we have a set contraction when we move from goods to power trades. Note that this is a contraction in the feasible trades which does not weigh trades by probabilities. The previous section calculates externalities by considering trades that take place.

For the set of trades as calibrated,  none of the transmission capacities are binding. This is the case for the goods and for the power trade networks. The present trade realization is in the interior of the intersection of the feasible sets for these two cases, as shown in the left panel of Figure~\ref{fig:overlapping}. 

\begin{figure}[h!]
    \centering
    \includegraphics[scale=0.5]{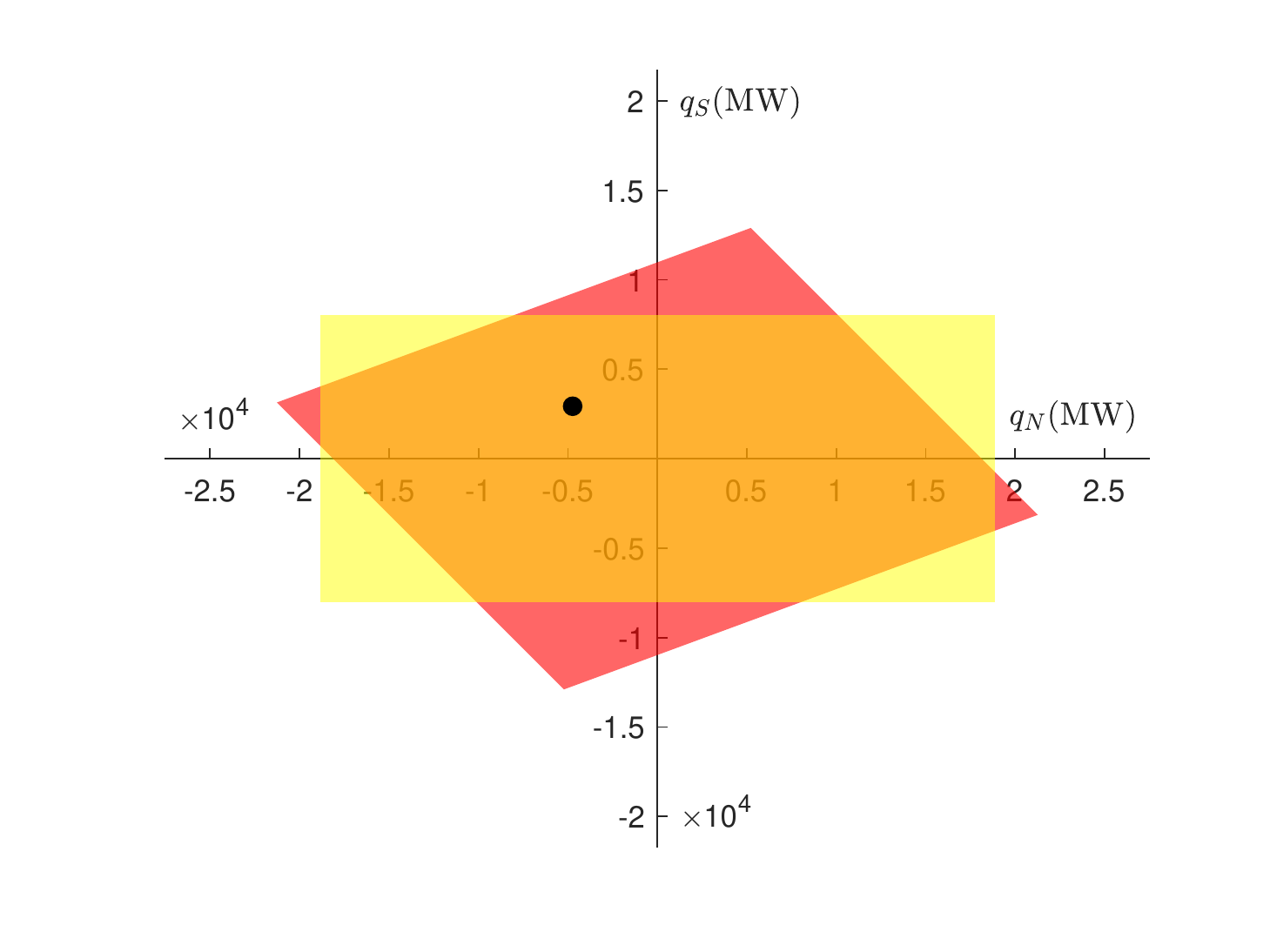}
    \includegraphics[scale=0.5]{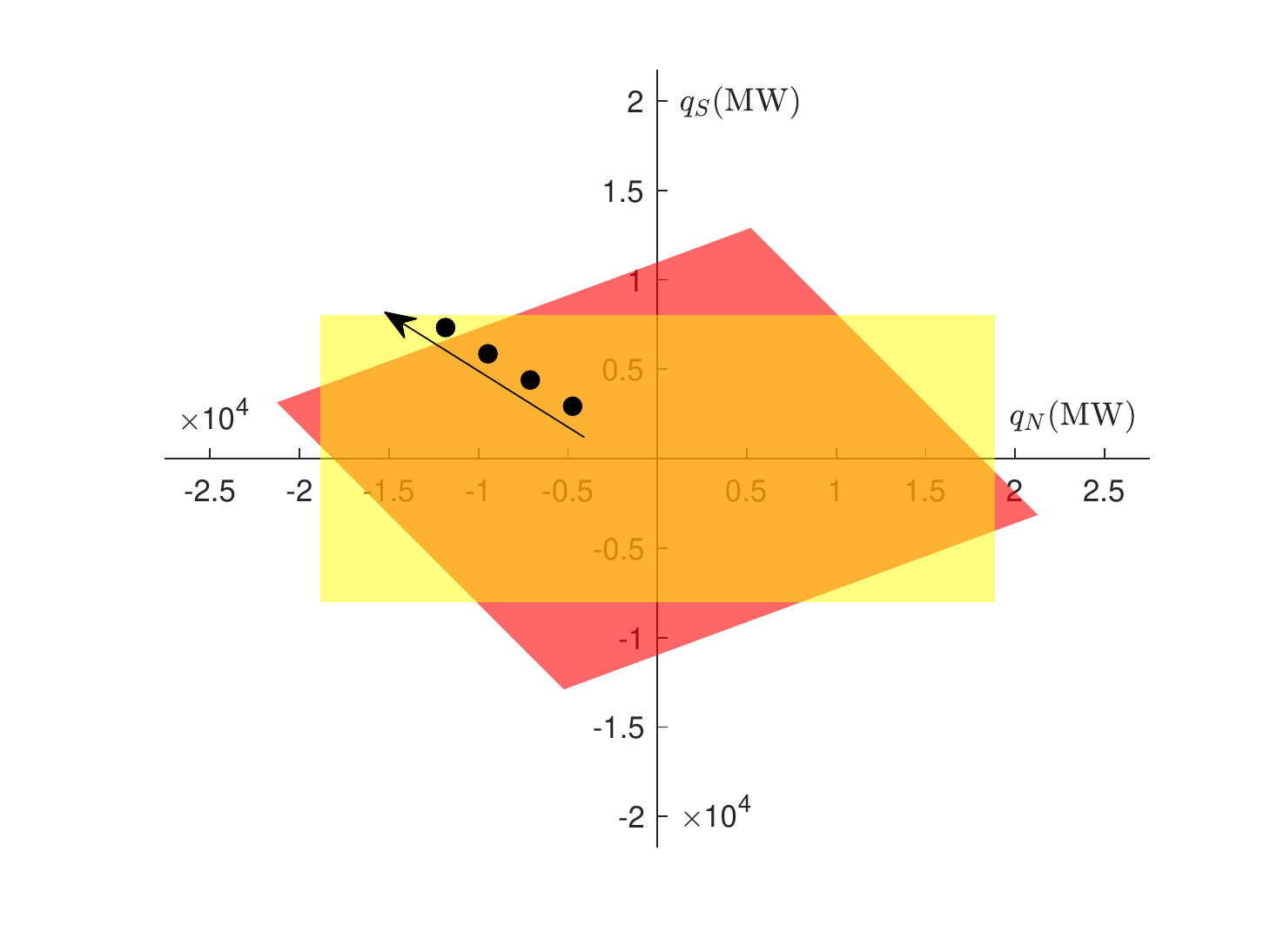}
    \caption{The left panel shows the overlap of feasible set for power (red) and goods (yellow) trades. The black dot marks the trade pair $(q_N,q_S)$ for Europe. The right panel shows the evolution of trade $(q_N,q_S)$ as both North-North and South-South renewable power trades increase at the same rate.}
    \label{fig:overlapping}
\end{figure}


Finally,  use the calibrated network parameters to analyze the effect of increasing the North-North and the South-South trades. 
If both trades increase at the same rate, then the feasibility of trades will evolve as shown in the right panel of Figure~\ref{fig:overlapping}. Figure~\ref{fig:renewabletrends} suggests that renewable energy production has increased significantly in recent years, more than doubling over the last 20 years, suggesting an annual rate of 5\% or more. 

Given this trend, we next try to understand when the capacity constraints will bind in the calibrated power network. The 
southernmost black dot in the right panel of Figure~\ref{fig:overlapping} is the same as in the left panel, corresponding to the baseline case with trades $q_N=-4735MW$ and $q_S=2931MW$. The trade pair $(q_N,q_S)$ scales up by a factor $\alpha>1$. As these power trades increase at the same rate, we move towards the boundary of the set of feasible trades. Ultimately, when the multiplying factor reaches $\alpha=2.5$, the combined trades leave the feasible set for power trades, but they are still feasible in the goods case. This corresponds to the northernmost black dot in the right panel of Figure~\ref{fig:overlapping}. At this point, the power flows generated by the trade pair $(q_N,q_S)=(-11837.5,7327.5)$ would violate the capacity constraint of the South transmission line and thus could not be allowed. Note that the feasibility in the goods case holds for this trade pair because $|q_N|<C_N$ and $|q_S|<C_S$. However, if both trades keep increasing at the same rate, the goods trades also become infeasible due to a capacity violation in the South line, when $|q_S|>C_S=8021MW$.  

The calculations thus show that the power trades have 8.7\% lower grid capacity that goods trades, which is an economically significant effect of physical laws. Our work suggests that newer investment can take the network externalities into account and expand grid transmission capacity on other lines as well (in this case, the South line) to avoid grid congestion. 

To provide additional background, the trade flow on the North-North line is 8\% of the production capacity of Germany in 2019, suggesting a slack of approximately 12\% on the line at present.\footnote{For reference, in 2019, Germany produced an average of 58,789 MW over the year, with renewable power being approximately half of the production. See Public Net Electricity Generation in Germany 2019  \url{https://www.ise.fraunhofer.de/content/dam/ise/en/documents/News/0120_e_ISE_News_Electricity\%20Generation_2019.pdf}.} The trend line of growth of energy production suggests that the capacity constraint may be reached in approximately 2--3 years. Thus, additional investment in the grid maybe needed relatively soon.





\section{Conclusion}
\label{sec:conc}

As the production and consumption patterns of electricity change, the electric grid needs to evolve as well. As a higher fraction of electricity is produced by renewable sources, co-location of production and consumption becomes less feasible. Hence, the grid needs to routinely transport large amounts of power across the network. 

Our paper shows that such power trades produce unique and potentially negative externalities that are not present when trading other goods. Trade between one energy producer-consumer can restrict trades between other energy producer-consumer pairs over the grid due to physical laws. We characterize the set of feasible trades, and decompose the additional externality in power grids over trade in goods. 

We apply our analysis to the European power grid and find that Europe is facing negative externalities of power trades. For each additional unit of power traded between the Northern countries, the transmission capacity for the Southern European grid decreases by approximately 27\% of the traded power. Thus, investment in electric grids should be done across the network and not piecemeal. A transit fare can provide financial resources for investment to parties that are facing externalities.


\bibliographystyle{jof}
\bibliography{bib}

@Article{Krugman:79,
  author={Krugman, Paul R.},
  title={{Increasing returns, monopolistic competition, and international trade}},
  journal={Journal of International Economics},
  year=1979,
  volume={9},
  number={4},
  pages={469-479},
  month={November},
  keywords={},
  doi={},
  abstract={No abstract is available for this item.},
  url={https://ideas.repec.org/a/eee/inecon/v9y1979i4p469-479.html}
}

@article{Krugman:80,
 ISSN = {00028282},
 URL = {http://www.jstor.org/stable/1805774},
 author = {Paul Krugman},
 journal = {The American Economic Review},
 number = {5},
 pages = {950--959},
 publisher = {American Economic Association},
 title = {Scale Economies, Product Differentiation, and the Pattern of Trade},
 volume = {70},
 year = {1980}
}

@article{Krugman:81,
author = {Krugman, Paul R.},
title = {Intraindustry Specialization and the Gains from Trade},
journal = {Journal of Political Economy},
volume = {89},
number = {5},
pages = {959-973},
year = {1981},
doi = {10.1086/261015},

URL = { 
        https://doi.org/10.1086/261015
    
},
eprint = { 
        https://doi.org/10.1086/261015
    
}
,
    abstract = { Several recent empirical studies of trade suggest that interindustry specialization and trade, which reflect the conventional forces of comparative advantage, are also accompanied by intraindustry specialization, which reflects scale economies and consumers' taste for a diversity of products. This paper develops a simple model which illustrates this argument. Two main results are developed. First, the nature of trade depends on how similar countries are in their factor endowments. As countries become more similar, the trade between them will increasingly become intraindustry in character. Second, the effects of opening trade depend on its type. If intraindustry trade is sufficiently dominant, the advantages of extending the market will outweigh the distributional effects, and the owners of scarce as well as of abundant factors will be better off. }
}

@article{Krugman:91,
author = {Krugman, Paul},
title = {Increasing Returns and Economic Geography},
journal = {Journal of Political Economy},
volume = {99},
number = {3},
pages = {483-499},
year = {1991},
doi = {10.1086/261763},

URL = { 
        https://doi.org/10.1086/261763
    
},
eprint = { 
        https://doi.org/10.1086/261763
    
}
,
    abstract = { This paper develops a simple model that shows how a country can endogenously become differentiated into an industrialized "core" and an agricultural "periphery." In order to realize scale economies while minimizing transport costs, manufacturing firms tend to locate in the region with larger demand, but the location of demand itself depends on the distribution of manufacturing. Emergence of a core-periphery pattern depends on transportation costs, economies of scale, and the share of manufacturing in national income. }
}

@book{rockafellar,
  added-at = {2008-03-02T02:12:02.000+0100},
  address = {Princeton, N. J.},
  author = {Rockafellar, R. Tyrrell},
  biburl = {https://www.bibsonomy.org/bibtex/223aa07ea525f6dd11585fc2037a0daf1/dmartins},
  callnumber = {UniM Maths 516.08 R59},
  description = {robotica-bib},
  interhash = {30830becb0a2c5ebca5946b895d9740a},
  intrahash = {23aa07ea525f6dd11585fc2037a0daf1},
  keywords = {imported},
  notes = {A SRL reference.},
  publisher = {Princeton University Press},
  series = {Princeton Mathematical Series},
  timestamp = {2008-03-02T02:14:11.000+0100},
  title = {Convex analysis},
  year = 1970
}

@inproceedings{purcetal:05,
author={K. Purchala, L. Meeus, D. Van Dommelen and R. Belmans}, 
title={Usefulness of DC power flow for active power flow analysis},
journal={IEEE Power Engineering Society General Meeting, San Francisco, CA}, 
year={2005}, 
pages={454-459},
volume={1}
}

@article{Sahretal:17,
author={M. Sahraei-Ardakani and K. W. Hedman}, 
title={Computationally Efficient Adjustment of FACTS Set Points in DC Optimal Power Flow With Shift Factor Structure},
journal={IEEE Transactions on Power Systems}, 
year={2017},
month={May},
pages={1733-1740},
volume={32},
number = {3}
}

@book{borbely,
  author = {Borbely, A.M. and Kreider, J.},
  publisher = {Boca Raton: CRC Press},
  title = {Distributed generation: the power paradigm for the new millennium},
  year = {2001},
  URL = {https://doi.org/10.1201/9781420042399}
}

@techreport{warwick,
author = {Warwick, Mike},
year = {2002},
month = {June},
institution = {Pacific Northwest National Laboratory},
title = {A Primer on Electric Utilities, Deregulation, and Restructuring of U.S. Electricity Markets}
}

@article{verzijlbergh,
title = "Institutional challenges caused by the integration of renewable energy sources in the European electricity sector",
journal = "Renewable and Sustainable Energy Reviews",
volume = "75",
pages = "660 - 667",
year = "2017",
issn = "1364-0321",
doi = "https://doi.org/10.1016/j.rser.2016.11.039",
url = "http://www.sciencedirect.com/science/article/pii/S1364032116307651",
author = "R.A. Verzijlbergh and L.J. De Vries and G.P.J. Dijkema and P.M. Herder",
keywords = "System integration, Institutions, Electricity markets, Flexibility",
abstract = "The integration of large amounts of variable renewable energy poses fundamental challenges to the operation and governance of the energy system. In this article we address the main institutional challenges that are caused by the integration of variable renewable energy sources like solar and wind energy in the European power system. We first address how the variable and unpredictable nature of wind and solar energy increases the demand for flexible resources and we discuss potential sources of flexibility. Next, we elaborate on how the need for more flexibility challenges the prevailing market design of todays liberalized power systems. Furthermore, we discuss the key areas where there is a need for a more integrated approach to research and policy making. The need for a more integrated approach is motivated by exposing a number of critical interdependencies between technical and institutional sub-systems."
}

@techreport{agora,
author = {Fürstenwerth, Daniel and Pescia, Dimitri and Litz, Philipp},
year = {2015},
month = {December},
institution = {Agora Energiewende},
title = {The Integration Costs of
Wind and Solar Power. An Overview of the Debate on the Effects of Adding Wind and Solar Photovoltaic into Power Systems}
}

@book{ilic,
    author = "Marija Ilic and Francisco Galiana and Lester Fink",
    title = "Power systems restructuring: engineering and economics",
    publisher = "Springer",
    year = "1998"
}

@INPROCEEDINGS{shi, 
author={D. {Shi} and D. L. {Shawhan} and N. {Li} and D. J. {Tylavsky} and J. T. {Taber} and R. D. {Zimmerman} and W. D. {Schulze}}, 
booktitle={2012 North American Power Symposium (NAPS)}, 
title={Optimal generation investment planning: Pt. 1: network equivalents}, 
year={2012}, 
volume={}, 
number={}, 
pages={1-6}, 
keywords={investment;power generation economics;power generation planning;power system interconnection;optimal generation investment planning;network equivalents;classical Ward equivalencing scheme;greenhouse gas regulation;GHG regulations;generator availability;Eastern interconnection;backbone equivalents;Generators;Load flow;Coal;Planning;HVDC transmission;Investments;Accuracy}, 
doi={10.1109/NAPS.2012.6336375}, 
ISSN={}, 
month={Sep.},}

@INPROCEEDINGS{hutcheon, 
author={N. {Hutcheon} and J. W. {Bialek}}, 
booktitle={2013 IEEE Grenoble Conference}, 
title={Updated and validated power flow model of the main continental European transmission network}, 
year={2013}, 
volume={}, 
number={}, 
pages={1-5}, 
keywords={load flow;power distribution;power markets;power system simulation;power transmission;energy market;PowerWorld;power transfer distribution factors;Balkan region;cross-border trades effects;electricity market;European transmission network;power flow model;Europe;Load flow;Load modeling;Computational modeling;Educational institutions;Data models;Accuracy;transmission networks;DC power flow analysis}, 
doi={10.1109/PTC.2013.6652178}, 
ISSN={}, 
month={June},}

@article{Steiner:57,
 ISSN = {00335533, 15314650},
 URL = {http://www.jstor.org/stable/1885712},
 abstract = {Introduction, 585.--I. Solution of the two period peak load problem under simplifying assumptions, 587.--II. Relation of the solution to some earlier contributions, 592.--III. Can purely cost-based prices to be used to achieve optimal results?, 596.--IV. Some policy implications, 602.--Mathematical Appendix, 604.},
 author = {Peter O. Steiner},
 journal = {The Quarterly Journal of Economics},
 number = {4},
 pages = {585--610},
 publisher = {Oxford University Press},
 title = {Peak Loads and Efficient Pricing},
 volume = {71},
 year = {1957}
}

@article{Williamson:66,
 ISSN = {00028282},
 URL = {http://www.jstor.org/stable/1813529},
 author = {Oliver E. Williamson},
 journal = {The American Economic Review},
 number = {4},
 pages = {810--827},
 publisher = {American Economic Association},
 title = {Peak-Load Pricing and Optimal Capacity under Indivisibility Constraints},
 volume = {56},
 year = {1966}
}

@Article{Littlechild:70,
  author={Littlechild, S C},
  title={{Marginal-cost Pricing with Joint Costs}},
  journal={Economic Journal},
  year=1970,
  volume={80},
  number={318},
  pages={323-335},
  month={June},
  keywords={},
  doi={},
  abstract={No abstract is available for this item.},
  url={https://ideas.repec.org/a/ecj/econjl/v80y1970i318p323-35.html}
}

@article{CrewKlei:71,
 ISSN = {00223808, 1537534X},
 URL = {http://www.jstor.org/stable/1830108},
 author = {Michael Crew and Paul Kleindorfer},
 journal = {Journal of Political Economy},
 number = {6},
 pages = {1369--1377},
 publisher = {University of Chicago Press},
 title = {Marshall and Turvey on Peak Load or Joint Product Pricing},
 volume = {79},
 year = {1971}
}

@Article{BoreBush:15,
  author={Severin Borenstein and James Bushnell},
  title={{The US Electricity Industry After 20 Years of Restructuring}},
  journal={Annual Review of Economics},
  year=2015,
  volume={7},
  number={1},
  pages={437-463},
  month={August},
  keywords={deregulation; utilities},
  doi={},
  abstract={Electricity restructuring in the 1990s ended the era of vertically integrated monopolies in many states, allowing nonutility generators to sell electricity to utilities and, in fewer states, allowing retail service providers to buy electricity from generators and sell to end-use customers. We review the economic arguments for restructuring and the resulting effects in subsequent years. We argue that the greatest political motivation for restructuring was rent shifting, not efficiency improvements. Although electricity restructuring has brought efficiency improvements, it has generally been viewed as a disappointment because the price-reduction promises made by some advocates were based on politically unsustainable rent transfers. In reality, electricity rate changes since restructuring have been driven more by exogenous factors, such as generation technology advances and natural gas price fluctuations, than by restructuring. We argue that a similar dynamic underpins the current political momentum behind distributed generation, primarily rooftop solar photovoltaic systems, which remains costly from a societal viewpoint, but privately economic owing to the rent transfers it enables.},
  url={https://ideas.repec.org/a/anr/reveco/v7y2015p437-463.html}
}

@article{Edenetal:13,
title = "On the economics of renewable energy sources",
journal = "Energy Economics",
volume = "40",
pages = "S12 - S23",
year = "2013",
issn = "0140-9883",
doi = "https://doi.org/10.1016/j.eneco.2013.09.015",
url = "http://www.sciencedirect.com/science/article/pii/S0140988313002107",
author = "Ottmar Edenhofer and Lion Hirth and Brigitte Knopf and Michael Pahle and Steffen Schlömer and Eva Schmid and Falko Ueckerdt",
keywords = "Energy, Mitigation, Integrated assessment modeling, Variable renewables, Electricity market design, Renewable policy",
abstract = "With the global expansion of renewable energy (RE) technologies, the provision of optimal RE policy packages becomes an important task. We review pivotal aspects regarding the economics of renewables that are relevant to the design of an optimal RE policy, many of which are to date unresolved. We do so from three interrelated perspectives that a meaningful public policy framework for inquiry must take into account. First, we explore different social objectives justifying the deployment of RE technologies and review model-based estimates of the economic potential of RE technologies, i.e. their socially optimal deployment level. Second, we address pivotal market failures that arise in the course of implementing the economic potential of RE sources in decentralized markets. Third, we discuss multiple policy instruments curing these market failures. Our framework reveals the requirements for an assessment of the relevant options for real-world decision makers in the field of RE policies. This review makes it clear that there are remaining white areas on the knowledge map concerning consistent and socially optimal RE policies."
}

@article{DeCaDutr:13,
title = "Paying for the smart grid",
journal = "Energy Economics",
volume = "40",
pages = "S74 - S84",
year = "2013",
issn = "0140-9883",
doi = "https://doi.org/10.1016/j.eneco.2013.09.016",
url = "http://www.sciencedirect.com/science/article/pii/S0140988313002119",
author = "Luciano {De Castro} and Joisa Dutra",
keywords = "Smart grid, Demand response, Demand response effect, Reliability, Social planner problem in electricity markets, Public good, Regulation of utilities, Investment decisions by utilities",
abstract = "Smart grid technologies may bring substantial advantages to society, but the required investments are sizable. This paper analyzes three main issues related to smart grids: reliability, demand response and cost recovery of investments. In particular, we show that generators will lose profits as a direct effect of demand response initiatives, and most of the benefits of smart grids cannot be easily converted into payments. Moreover, there are potential issues in the choices made by utilities for providing smart grids, and the reliability pertinent to smart grids is a kind of public good."
}

@article{Cullen:13,
Author = {Cullen, Joseph},
Title = {Measuring the Environmental Benefits of Wind-Generated Electricity},
Journal = {American Economic Journal: Economic Policy},
Volume = {5},
Number = {4},
Year = {2013},
Month = {November},
Pages = {107-33},
DOI = {10.1257/pol.5.4.107},
URL = {http://www.aeaweb.org/articles?id=10.1257/pol.5.4.107}}

@article{Allcott:15,
Author = {Allcott, Hunt and Knittel, Christopher and Taubinsky, Dmitry},
Title = {Tagging and Targeting of Energy Efficiency Subsidies},
Journal = {American Economic Review},
Volume = {105},
Number = {5},
Year = {2015},
Month = {May},
Pages = {187-91},
DOI = {10.1257/aer.p20151008},
URL = {http://www.aeaweb.org/articles?id=10.1257/aer.p20151008}
}

@article{QiWeetal:15,
author = {Qi, Wei and Liang, Yong and Shen, Zuo-Jun Max},
title = {Joint Planning of Energy Storage and Transmission for Wind Energy Generation},
journal = {Operations Research},
volume = {63},
number = {6},
pages = {1280-1293},
year = {2015},
doi = {10.1287/opre.2015.1444},

URL = { 
        https://doi.org/10.1287/opre.2015.1444
    
},
eprint = { 
        https://doi.org/10.1287/opre.2015.1444
    
}
,
    abstract = { Regions with abundant wind resources usually have no ready access to the existing electric grid. However, building transmission lines that instantaneously deliver all geographically distributed wind energy can be costly. Energy storage (ES) systems can help reduce the cost of bridging wind farms and grids and mitigate the intermittency of wind outputs. In this paper, we propose models of transmission network planning with colocation of ES systems. Our models determine the sizes and sites of ES systems as well as the associated topology and capacity of the transmission network under the feed-in-tariff policy instrument. We first formulate a location model as a mixed-integer second-order-conic program to solve for the ES-transmission network design with uncapacitated storage. Then we propose a method to choose ES sizes by deriving a closed-form upper bound. The major insight is that, in most cases, using even small-sized ES systems can significantly reduce the total expected cost, but their marginal values diminish faster than those of the transmission lines as their capacities expand. Despite uncertainties in climate, technologies, and construction costs, the cost-efficient infrastructure layout is remarkably robust. We also identify the major bottleneck cost factors for different forms of ES technologies. }
}

@article{Cruietal:19,
author = {Cruise, James and Flatley, Lisa and Gibbens, Richard and Zachary, Stan},
title = {Control of Energy Storage with Market Impact: Lagrangian Approach and Horizons},
journal = {Operations Research},
volume = {67},
number = {1},
pages = {1-9},
year = {2019},
doi = {10.1287/opre.2018.1761},

URL = { 
        https://doi.org/10.1287/opre.2018.1761
    
},
eprint = { 
        https://doi.org/10.1287/opre.2018.1761
    
}
,
    abstract = { Electricity storage is likely to play a significant role in balancing future energy systems. Often, much of the value of large-scale storage (e.g., pumped storage and hydro) may be captured in price arbitrage. In “Control of Energy Storage with Market Impact: Lagrangian Approach and Horizons,” J. Cruise, L. Flatley, R. Gibbens, and S. Zachary study the optimal control of storage that is sufficiently large as to impact on the market in which it operates and that seeks to maximize the profit that can be made by buying and selling over time. They develop the associated strong Lagrangian theory, which is of economic significance when storage is embedded in wider systems, and provide forecast and decision horizons for the optimal control, together with an algorithm suitable for control over an indefinite time period. }
}

@article{Koketal:20,
author = {K\"ok, A. G\"urhan and Shang, Kevin and Y\"cel, Safak},
title = {Investments in Renewable and Conventional Energy: The Role of Operational Flexibility},
journal = {Manufacturing \& Service Operations Management},
volume = {forthcoming},
year = {2020},
doi = {10.1287/msom.2019.0789},

URL = { 
        https://doi.org/10.1287/msom.2019.0789
    
},
eprint = { 
        https://doi.org/10.1287/msom.2019.0789
    
}
,
    abstract = { Problem definition: There is an ongoing debate on how providing a subsidy for one energy source affects the investment level of other sources. Academic/practical relevance: To investigate this issue, we study a capacity investment problem for a utility firm that invests in renewable and conventional energy, with a consideration of two critical factors. First, conventional sources have different levels of operational flexibility—inflexible (e.g., nuclear and coal) and flexible (e.g., natural gas). Second, random renewable energy supply and electricity demand are correlated and nonstationary. Methodology: We model this problem as a two-stage stochastic program in which a utility firm first determines the capacity investment levels followed by the dispatch quantities of energy sources to minimize the sum of investment and generation-related costs. Results: We derive the optimal capacity portfolio to characterize the interactions between renewable and conventional sources. Policy implications: We find that renewable and inflexible sources are substitutes, suggesting that a subsidy for nuclear or coal-fired power plants leads to a lower investment level in wind or solar energy. However, wind energy and flexible sources are complements. Thus, a subsidy for flexible natural gas-fired power plants leads to a higher investment in wind energy. This result holds for solar energy if the subsidy for the flexible source is sufficiently high. We validate these insights by using real electricity generation and demand data from the state of Texas. }
}

@article{Eglietal:18,
author = {Egli, F. and Steffen, B. and Schmidt, T.S.},
title = {A dynamic analysis of financing conditions for renewable energy technologies},
journal = {Nat Energy},
volume = {3},
pages = {1084–1092},
year = {2018},
}

@article{DaviWolf:12,
Author = {Davis, Lucas W. and Wolfram, Catherine},
Title = {Deregulation, Consolidation, and Efficiency: Evidence from US Nuclear Power},
Journal = {American Economic Journal: Applied Economics},
Volume = {4},
Number = {4},
Year = {2012},
Month = {July},
Pages = {194-225},
DOI = {10.1257/app.4.4.194},
URL = {https://www.aeaweb.org/articles?id=10.1257/app.4.4.194}}

@article{DaviMueh:10,
author = {Davis, Lucas W. and Muehlegger, Erich},
title = {Do Americans consume too little natural gas? An empirical test of marginal cost pricing},
journal = {The RAND Journal of Economics},
volume = {41},
number = {4},
pages = {791-810},
doi = {10.1111/j.1756-2171.2010.00121.x},
url = {https://onlinelibrary.wiley.com/doi/abs/10.1111/j.1756-2171.2010.00121.x},
eprint = {https://onlinelibrary.wiley.com/doi/pdf/10.1111/j.1756-2171.2010.00121.x},
abstract = {This article measures the extent to which prices exceed marginal costs in the U.S. natural gas distribution market during the period 1991–2007. We find large departures from marginal cost pricing in all 50 states, with residential and commercial customers facing average markups of over 40\%. Based on conservative estimates of the price elasticity of demand, these distortions impose hundreds of millions of dollars of annual welfare loss. Moreover, current price schedules are an important preexisting distortion which should be taken into account when evaluating carbon taxes and other policies aimed at addressing external costs.},
year = {2010}
}

@article{Davis:12,
Author = {Davis, Lucas W.},
Title = {Prospects for Nuclear Power},
Journal = {Journal of Economic Perspectives},
Volume = {26},
Number = {1},
Year = {2012},
Month = {February},
Pages = {49-66},
DOI = {10.1257/jep.26.1.49},
URL = {https://www.aeaweb.org/articles?id=10.1257/jep.26.1.49}}

@book{griffin2009electricity,
  title={Electricity Deregulation: Choices and Challenges},
  author={Griffin, J.M. and Puller, S.L.},
  isbn={9780226308586},
  series={Bush School Series in the Economics of Public Policy},
  url={https://books.google.com/books?id=9n29ItGH-24C},
  year={2005},
  publisher={University of Chicago Press}
}

\clearpage

\appendix
\setcounter{table}{0}
\renewcommand{\thetable}{\Alph{section}.\Roman{table}}
\setcounter{figure}{0}
\renewcommand{\thefigure}{\Alph{section}.\arabic{figure}}

\begin{center}
    {\LARGE{\textbf{Appendix: For Review and Online Publication Only}}}
\end{center}

\section{Shift-matrix computation}\label{app:Smatrix}

The shift-factor matrix $\mathbf{S}$ linearly maps the vector of power injections $\mathbf{q}$ to a vector of power flows $\bm{f}$ through the transmission lines of the electricity grid. This matrix depends on the admittances of the transmission lines and incorporates Kirchoff's laws. To determine the elements of this matrix, we first introduce the relationships below:
\begin{align}
    \mathbf{\theta} &= \mathbf{B}^{-1}\mathbf{q} \label{eq:B}\\
    \bm{f} &= \mathbf{H}\mathbf{\theta} \label{eq:H}
\end{align}
In \eqref{eq:B}, the matrix $\mathbf{B}$ is the bus admittance matrix (excluding the reference bus), which relates the phase angles $\mathbf{\theta}$ and the power injections $\mathbf{q}$. In the expression \eqref{eq:H}, $\mathbf{H}$ maps the phase angles in each node to the line flows $\bm{f}$ along the transmission lines. 

Let $x_{ij}$ be the reactance of the transmission line between nodes $i$ and $j$. The elements of the bus admittance matrix $\mathbf{B}$ are given by
\begin{align}
    B_{ii} &= \sum_j \frac{1}{x_{ij}}\\
    B_{ij} &= -\frac{1}{x_{ij}}, ~i\neq j.
\end{align}
In case two nodes $i$ and $j$ are not directly connected, the terms corresponding to this pair are zero. Further, to calculate the power flows, we eliminate the row and column of $\mathbf{B}$ that refer to the reference bus, so that the matrix becomes non-singular. The phase angle-line flows relationships in \eqref{eq:H} are given by
\begin{equation}
    f_{ij} = \frac{\theta_i-\theta_j}{x_{ij}},
\end{equation}
where $f_{ij}$ is the flow from node $i$ to node $j$ and $\theta_i$ is the voltage angle in node $i$. Writing that for all transmission lines $\ell$, we find that the mapping $\mathbf{H}$ is given as
\begin{align}
    H(\ell,i) &= \frac{1}{x_{ij}}\\
    H(\ell,j) &= -\frac{1}{x_{ij}}\\
    H(\ell,m) &= 0~~\forall m \neq i, m\neq j.
\end{align}

Since we are interested in the relationship between power injections and line flows, we can substitute \eqref{eq:B} in \eqref{eq:H} to find our shift matrix $\mathbf{S}$
\begin{equation}\label{eq:pflow}
    \bm{f} = \mathbf{HB}^{-1}\mathbf{q} = \mathbf{S}\mathbf{q}.
\end{equation}
Because the power can flow both ways in the line, the values in $\bm{f}$ may be positive or negative. 

\section{Proof of Lemma~\ref{lem:powerflows}}\label{app:proof_lem1}

Let $\sum x_{\ell}:=(x_N+x_E+x_S+x_W)$. The power flows $f^p_{\ell}$ through each link $\ell$ in the electricity grid are
\begin{align}
    \nonumber \bm{f^p} =
    \begin{bmatrix}
    f^p_{N} \\ f^p_{E} \\ f^p_{S} \\ f^p_{W}
    \end{bmatrix}&=
    \frac{1}{\sum x_{\ell}}\begin{bmatrix}
    x_W & -(x_E+x_S) & -x_S & 0\\
    x_W & x_N+x_W & -x_S & 0\\
    x_W & x_N+x_W &x_N+x_W+x_E & 0\\
    x_N+x_E+x_S & x_E+x_S & x_S & 0
\end{bmatrix}
\begin{bmatrix}
    q_N \\ -q_N \\ q_S \\ -q_S
\end{bmatrix}\\
    \nonumber &=
    \begin{bmatrix}
       1 & 0 & 0 & 0\\
       0 & 0 & 0 & 0\\
       0 & 0 & 1 & 0\\
       0 & 0 & 0 & 0
    \end{bmatrix}
    \begin{bmatrix}
        q_N \\ -q_N \\ q_S \\ -q_S
    \end{bmatrix}+
    \frac{1}{\sum x_{\ell}}\begin{bmatrix}
    x_W-\sum x_{\ell} & -(x_E+x_S) & -x_S & 0\\
    x_W & x_N+x_W & -x_S & 0\\
    x_W & x_N+x_W &x_N+x_W+x_E-\sum x_{\ell} & 0\\
    x_N+x_E+x_S & x_E+x_S & x_S & 0
\end{bmatrix}
\begin{bmatrix}
    q_N \\ -q_N \\ q_S \\ -q_S
\end{bmatrix}\\
    \nonumber &=
   \begin{bmatrix}
    f^g_{N} \\ f^g_{E} \\ f^g_{S} \\ f^g_{W}
    \end{bmatrix}+
    \frac{1}{\sum x_{\ell}}\begin{bmatrix}
    -(x_N+x_E+x_S) & -(x_E+x_S) & -x_S & 0\\
    x_W & x_N+x_W & -x_S & 0\\
    x_W & x_N+x_W & -x_S & 0\\
    x_N+x_E+x_S & x_E+x_S & x_S & 0
\end{bmatrix}
\begin{bmatrix}
    q_N \\ -q_N \\ q_S \\ -q_S
\end{bmatrix}\\
     \nonumber &= \bm{f^g} + \mathbf{S_\sigma}~\mathbf{q},
\end{align}
which is precisely the expression presented in the lemma.

\section{Proof of Theorem~\ref{th:sigmaf}}\label{app:proof_th1}

If we write out the flows $\mathbf{S_\sigma q}$, we find
\begin{equation}\label{eq:SSigmaq}
    \mathbf{S_\sigma q} = \frac{1}{\sum x_{\ell}}
    \begin{bmatrix}
        -(x_N+x_E+x_S) & -(x_E+x_S) & -x_S & 0\\
        x_W & x_N+x_W & -x_S & 0\\
        x_W & x_N+x_W & -x_S & 0\\
        x_N+x_E+x_S & x_E+x_S & x_S & 0
    \end{bmatrix}
    \begin{bmatrix}
        q_N \\ -q_N \\ q_S \\ -q_S
    \end{bmatrix} = \frac{1}{\sum x_{\ell}}
    \begin{bmatrix}
        -x_Nq_N-x_Sq_S \\ -x_Nq_N-x_Sq_S \\ -x_Nq_N-x_Sq_S \\ x_Nq_N+x_Sq_S
    \end{bmatrix}.
\end{equation}
We can readily confirm that all flows are equal in magnitude. Next, we check the directions of these flows. If $(x_Nq_N+x_Sq_S) < 0$, the first three flows will be in the positive direction, and the last one will be in the negative direction, giving rise to a clockwise flow. When $(x_Nq_N+x_Sq_S) > 0$, the opposite happens and we have a counterclockwise flow being generated.

For part (c), we note that the line $\ell=N$ will have a tighter bound if the $\sigma$-flow $f^{\sigma }_N$ through that line is positive when the goods trade flow $f^g_{N}$ is also positive, or when they are both negative. For the positive case, we have the conditions
\begin{equation}
    -x_Nq_N-x_Sq_S > 0 ~ \text{and} ~ q_N > 0.
\end{equation}
If node $SE$ is also a supplier ($q_S>0$), the first condition cannot be satisfied. Then, $q_S$ must be negative and
\begin{equation}\label{eq:q1bound}
    0 < q_N < \frac{x_S}{x_N}|q_S|
\end{equation}
For the negative flow case, still for $\ell=N$, the conditions are
\begin{equation}
    -x_Nq_N-x_Sq_S < 0 ~ \text{and} ~ q_N < 0.
\end{equation}
Here, if both $q_N$ and $q_S$ are negative, the first condition is violated. Then, we must have $q_S>0$ and
\begin{equation}\label{eq:q3bound}
    0 < |q_N| < \frac{x_S}{x_N}q_S.
\end{equation}
We note \eqref{eq:q1bound} and \eqref{eq:q3bound} are equivalent and can be rewritten as the inequality presented in Theorem~\ref{th:sigmaf}(c). The proof for the transmission line  $\ell=S$ follows the same procedure as that presented for line $\ell=N$.

\section{Proof of Theorem~\ref{th:feasNS}}\label{app:proof_th2}
Due to the symmetry of the individual feasibility sets, we may turn our attention only to the intersection points in the left side of Table~\ref{tb:intersectionsNS}; if one of them is found to be a vertex, so is its symmetric intersection point (right column). Let $d_{n}$ denote the distance of the intersection $n$ to the origin. We can find that:
\begin{align}
    d_{1}^2 &= 2C_N^2-4C_NC_V\frac{x_V}{x_H}+C_V^2\left[1+\frac{(x_H+2x_V)^2}{x_H^2}\right]\\
    d_{3}^2 &= 2C_N^2+4C_NC_V\frac{x_V}{x_H}+C_V^2\left[1+\frac{(x_H+2x_V)^2}{x_H^2}\right]\\
    d_{5}^2 &= \frac{(C_N+C_S)^2}{2}\left(\frac{x_H+x_V}{x_V}\right)^2+\frac{(C_N-C_S)^2}{2}\\
    d_{7}^2 &= \frac{(C_N-C_S)^2}{2}\left(\frac{x_H+x_V}{x_V}\right)^2+\frac{(C_N+C_S)^2}{2}\\
    d_{9}^2 &= 2C_S^2-4C_SC_V\frac{x_V}{x_H}+C_V^2\left[1+\frac{(x_H+2x_V)^2}{x_H^2}\right]\\
    d_{11}^2 &= 2C_S^2+4C_SC_V\frac{x_V}{x_H}+C_V^2\left[1+\frac{(x_H+2x_V)^2}{x_H^2}\right]
\end{align}
The distances are such that, for any choice of network parameters, $d_{3}>d_{1}$, $d_{11}>d_{9}$, and $d_{5}>d_{7}$. To determine the feasible set $\mathcal{F}_{P} = \mathcal{F}_{P1} \cap \mathcal{F}_{PV} \cap \mathcal{F}_{P3}$, we will evaluate three distinct cases:
\begin{enumerate}
    \item Case $\mathcal{F}_{P1} \cap \mathcal{F}_{P3} \subset \mathcal{F}_{PV}$:

    The intersection $\mathcal{F}_{P1} \cap \mathcal{F}_{P3}$ is defined by the vertices $V_{5}$, $V_{6}$, $V_{7}$ and $V_{8}$. For the polygon that limits $\mathcal{F}_{P}$ to be a quadrilateral, the boundary lines 3 and 4 must not intersect $\mathcal{F}_{P1} \cap \mathcal{F}_{P3}$ (i.e. $\mathcal{F}_{P1} \cap \mathcal{F}_{P3} \subset \mathcal{F}_{PV}$). To determine that condition, it suffices to find when line 4 is outside the circle centered in the origin and with radius $\max(d_{5},d_{7})=d_{5}$ (by symmetry, the same will be true for line 3). Let
\begin{equation}
    d_{4o} = \frac{C_V}{X_r \sqrt{2}}=\frac{2C_V(x_H+x_V)}{x_H\sqrt{2}}
\end{equation}
be the distance from line 4 to the origin. Then, $d_{4o} > d_{5} \implies d_{4o}^2 > d_{5}^2$, which leads to
\begin{equation}
\frac{2C_V^2(x_H+x_V)^2}{x_H^2} > \frac{(C_N+C_S)^2}{2}\left(\frac{x_H+x_V}{x_V}\right)^2+\frac{(C_N-C_S)^2}{2},    
\end{equation}
from which we can find the inequality condition in \eqref{eq:cxratioNS}. Furthermore, it can be shown that, if we have an equality in the above condition, then the intersections $1$, $5$ and $9$ will coincide, which means we still have a quadrilateral. Therefore, if \eqref{eq:cxratioNS} holds, the quadrilateral with vertices $V_{5}$, $V_{6}$, $V_{7}$ and $V_{8}$ limits the feasibility region for power trades.

\item Case $\mathcal{F}_{P1} \cap \mathcal{F}_{PV} \subset \mathcal{F}_{P3}$:

The intersection $\mathcal{F}_{P1} \cap \mathcal{F}_{PV}$ is defined by the vertices $V_{1}$, $V_{2}$, $V_{3}$ and $V_{4}$. This case will happen when lines 5 and 6 do not intersect $\mathcal{F}_{P1} \cap \mathcal{F}_{PV}$. Similarly to the previous case, the condition is
\begin{equation}
   d_{6o} \geq \max(d_{1},d_{3})= d_{3} \implies d_{6o}^2 \geq d_{3}^2,
\end{equation}
where $d_{6o}$ is the distance from line 6 to the origin. Then, if
\begin{equation}
   \frac{2C_S^2(x_H+x_V)^2}{x_H^2+2x_V^2} \geq 2C_N^2+4C_NC_V\frac{x_V}{x_H}+C_V^2\left[1+\frac{(x_H+2x_V)^2}{x_H^2}\right],
\end{equation}
the polygon that limits the set $\mathcal{F}_{P}$ is the quadrilateral $\mathcal{F}_{P1} \cap \mathcal{F}_{PV}$.

\item Case $\mathcal{F}_{PV} \cap \mathcal{F}_{P3} \subset \mathcal{F}_{P1}$:

In this case, $\mathcal{F}_{PV} \cap \mathcal{F}_{P3}$ is limited by the quadrilateral with vertices $V_{9}$, $V_{10}$, $V_{11}$ and $V_{12}$. Further, the feasibility set for the entire network will be $\mathcal{F}_{P} = \mathcal{F}_{PV} \cap \mathcal{F}_{P3}$ if
\begin{equation}
   d_{2o} \geq \max(d_{9},d_{11})= d_{11} \implies d_2^{2o} \geq d_{11}^2,
\end{equation}
where $d_{2o}$ is the distance from line 2 to the origin. That condition is equivalent to
\begin{equation}
   \frac{2C_N^2(x_H+x_V)^2}{x_H^2+2x_V^2} \geq 2C_S^2+4C_SC_V\frac{x_V}{x_H}+C_V^2\left[1+\frac{(x_H+2x_V)^2}{x_H^2}\right].
\end{equation}
\end{enumerate}

If none of the above conditions hold, then the set $\mathcal{F}_{P}$ will be a hexagon (Case 4) delimited by the vertices $V_{1}$, $V_{2}$, $V_{7}$, $V_{8}$ , $V_{9}$ and $V_{10}$, which are the intersection points of each two lines that are closer to the origin and their symmetric points.


\section{Feasible Region: Symmetric network}\label{sec:symmetric}
Consider the case of a symmetric network in which the capacities and reactances of the horizontal lines are the same (i.e. $C_N=C_S=C_H$ and $x_N=x_S=x_H$), and the same is true for the vertical lines (i.e. $C_E=C_W=C_V$ and $x_E=x_W=x_V$). Note that this is a special case of the more general network presented in Section~\ref{sec:regionfeasibility}, and therefore we determine the feasible region in the same way.

It is straightforward to notice from \eqref{eq:f23} that the sets of feasible trades for the vertical lines $E$ and $W$ will be the same. Henceforth, we denote it $\mathcal{F}_{PV}$. The coordinates for the intersections $a-b$ between boundary lines $a,b\in\mathcal{B}$, which are equivalent to the intersections $b-a$, are listed in Table~\ref{tb:intersections}, where the boundaries are numbered as in Section~\ref{sec:regionfeasibility}. These intersection points can be found by setting $C_N=C_S=C_H$ in the values presented in Table~\ref{tb:intersectionsNS}. The results in Theorem~\ref{th:feasNS} can be used to find the following result for the symmetric network.

\begin{table}[h]
    \centering
    \setlength{\extrarowheight}{10pt}
\begin{tabular}{ c  c  c  c}
        $n$ & $(q_{N},q_{S})$ & $n$ & $(q_{N},q_{S})$ \\ \hline
        1 & $\left(C_H+C_V,\frac{(1-X_r)}{X_r}C_V-C_H\right)$ & 2 & $\left(-C_H-C_V,-\frac{(1-X_r)}{X_r}C_V+C_H\right)$  \\ 
        3 & $\left(C_H-C_V,-\frac{(1-X_r)}{X_r}C_V-C_H\right)$  & 4 & $\left(C_V-C_H,\frac{(1-X_r)}{X_r}C_V+C_H\right)$ \\ 
        5 & $\left(\frac{C_H}{1-2X_r},\frac{C_H}{1-2X_r}\right)$ & 6 & $\left(\frac{-C_H}{1-2X_r},\frac{-C_H}{1-2X_r}\right)$ \\ 
        7 & $\left(C_H,-C_H\right)$ & 8 & $\left(-C_H,C_H\right)$ \\ 
        9 & $\left(\frac{(1-X_r)}{X_r}C_V-C_H,C_H+C_V\right)$ & 10 & $\left(-\frac{(1-X_r)}{X_r}C_V+2C_H,-C_H-C_V\right)$  \\ 
        11 & $\left(\frac{(1-X_r)}{X_r}C_V+C_H,C_V-C_H\right)$  & 12 & $\left(-\frac{(1-X_r)}{X_r}C_V-C_H,C_H-C_V\right)$  \\
      \hline
    \end{tabular}
\caption{Intersection points between boundaries of feasible regions $\mathcal{F}_{P1}$, $\mathcal{F}_{PV}$, and $\mathcal{F}_{P3}$.}
\label{tb:intersections}
\end{table}

\begin{corollary}\label{cor:feasS}
For a symmetric 4-node electricity grid with $C_N=C_S=C_H$, $x_N=x_S=x_H$, $C_E=C_W=C_V$, and $x_E=x_W=x_V$, if
\begin{equation}\label{eq:cxratio}
    \frac{C_V}{C_H}\geq \frac{x_H}{x_V},
\end{equation}
then the set of feasible power trades in the network is the quadrilateral defined by the vertices $V_{5}$, $V_{6}$, $V_{7}$ and $V_{8}$, where $V_{n}$ is the intersection point $n$, as presented in Table~\ref{tb:intersections}. If \eqref{eq:cxratio} does not hold, the feasible region is the hexagon defined by the vertices $V_{1}$, $V_{2}$, $V_{7}$, $V_{8}$ , $V_{9}$ and $V_{10}$.
\end{corollary}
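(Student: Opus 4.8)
The plan is to derive Corollary~\ref{cor:feasS} as the specialization of Theorem~\ref{th:feasNS} to $C_N=C_S=C_H$ (keeping $x_N=x_S=x_H$ and $x_E=x_W=x_V$), showing that imposing $C_N=C_S$ collapses the theorem's four cases down to the two alternatives claimed in the corollary. The symmetric network is literally the $C_N=C_S=C_H$ instance of the general asymmetric network of Section~\ref{sec:regionfeasibility}, so Theorem~\ref{th:feasNS} applies verbatim; all I need is to (i) identify the Case~1 condition with \eqref{eq:cxratio}, and (ii) show that the two intermediate cases (Case~2, in which the south line is slack, and Case~3, in which the north line is slack) can never occur.

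Step (i) is a routine substitution: setting $C_N=C_S=C_H$ in \eqref{eq:cxratioNS} annihilates the second term and leaves $\left(\frac{C_H}{C_V}\right)^2\frac{x_H^2}{x_V^2}\le 1$, which for positive quantities is equivalent to $\frac{C_V}{C_H}\ge\frac{x_H}{x_V}$, i.e. condition \eqref{eq:cxratio}. Hence Case~1 holds iff \eqref{eq:cxratio} holds, in which case Theorem~\ref{th:feasNS} directly yields the quadrilateral $V_5,V_6,V_7,V_8$, whose coordinates are the $C_N=C_S=C_H$ rows of Table~\ref{tb:intersections}. The whole corollary then reduces to showing that whenever \eqref{eq:cxratio} fails we necessarily land in Case~4.

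Step (ii) is the crux and the place I expect the real work. I would first exploit the reflection symmetry $(q_N,q_S)\mapsto(q_S,q_N)$: because $C_N=C_S$ and $x_N=x_S$, this map fixes the vertical slab $\mathcal{F}_{PV}$ and interchanges $\mathcal{F}_{P1}\leftrightarrow\mathcal{F}_{P3}$, so Case~2 and Case~3 are mirror images and must stand or fall together. To show both fall, I would argue directly that the south constraint is never redundant. The parallelogram $\mathcal{F}_{P1}\cap\mathcal{F}_{PV}$ has $V_3$ among its vertices, and evaluating the south flow $f^p_S=-X_r q_N+(1-X_r)q_S$ there gives, after simplification with $X_r=\frac{x_H}{2(x_H+x_V)}$ and $\frac{1-2X_r}{X_r}=\frac{2x_V}{x_H}$, the value $|f^p_S(V_3)|=C_H+\frac{2x_V}{x_H}C_V$, which strictly exceeds $C_H$ for any positive $x_V,C_V$. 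Thus $V_3\notin\mathcal{F}_{P3}$, so $\mathcal{F}_{P1}\cap\mathcal{F}_{PV}\not\subseteq\mathcal{F}_{P3}$, contradicting the premise of Case~2; the identical computation of $f^p_N$ at the mirror-image vertex $V_{12}$ rules out Case~3. With Cases~2 and~3 excluded, the failure of \eqref{eq:cxratio} leaves only Case~4, the hexagon $V_1,V_2,V_7,V_8,V_9,V_{10}$, which completes the proof. The only genuinely computational step is this vertex evaluation, which I would prefer to carry out directly rather than through the circumscribed-circle distance comparison used in the proof of Theorem~\ref{th:feasNS}, so as to keep the symmetric argument self-contained.
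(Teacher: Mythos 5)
Your proposal is correct, but it follows a genuinely different route from the paper's own proof. The paper does not derive the corollary from Theorem~\ref{th:feasNS} at all: it reruns, in the symmetric setting, the same distance machinery used to prove the theorem --- recomputing $d_1^2=d_9^2$, $d_3^2=d_{11}^2$, $d_5^2$, $d_7^2$, showing that \eqref{eq:cxratio} is exactly the condition $d_{4o}\geq d_5$ for boundary lines 3 and 4 to miss $\mathcal{F}_{P1}\cap\mathcal{F}_{P3}$, and, when this fails, reading off the hexagon's vertices from the ordering $d_3=d_{11}>d_1=d_9$. You instead treat the corollary as a true specialization of the theorem: your step (i) substitution is right (the $(C_N-C_S)$ term in \eqref{eq:cxratioNS} vanishes, leaving $C_Hx_H\leq C_Vx_V$), and your step (ii) kills Cases 2 and 3 by the reflection $(q_N,q_S)\mapsto(q_S,q_N)$ plus a direct evaluation of the south flow at $V_3$. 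I verified that computation:
\begin{equation*}
f^p_S(V_3)=-X_r(C_H-C_V)+(1-X_r)\bigl(-\tfrac{1-X_r}{X_r}C_V-C_H\bigr)=-C_H-\tfrac{2x_V}{x_H}C_V,
\end{equation*}
so indeed $\left|f^p_S(V_3)\right|>C_H$ and $\mathcal{F}_{P1}\cap\mathcal{F}_{PV}\not\subseteq\mathcal{F}_{P3}$, which by contraposition of the theorem's Case 2 implication forces that condition to fail (and Case 3 falls by the mirror map), so failure of \eqref{eq:cxratio} lands in Case 4. Comparing the two: your route is shorter, reuses the general theorem as a corollary should, and actually supplies a proof of the main-text remark (``if $C_N=C_S$, the conditions for the second and third cases will never hold'') that the paper asserts without verification. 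It is also prudent that you eliminated Cases 2 and 3 geometrically rather than by substituting into the printed inequalities, because those inequalities appear to carry an algebra slip: the squared distance from line 6 to the origin works out to $2C_S^2(x_H+x_V)^2/(x_H^2+2x_Hx_V+2x_V^2)$ rather than $2C_S^2(x_H+x_V)^2/(x_H^2+2x_V^2)$, and with the printed version one could wrongly conclude Case 2 is attainable even when $C_N=C_S$. What the paper's self-contained route buys in exchange is independence from the theorem's exhaustive four-case classification and an explicit re-derivation of which six intersection points bound the hexagon, whereas you inherit that vertex list from Case 4.
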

\noindent Proof is in Appendix~\ref{app:proof_cor2}.

Note that the intersections $7$ and $8$ lie on the line $q_{N}=-q_{S}$, and thus this realization will give rise to trades that are equivalent to those in a goods network, as stated in Corollary~\ref{col:floweq}. That means the flows in the vertical lines will be zero and never violate the capacity constraints of these lines. Thus, these intersections will always be within the feasible region $\mathcal{F}_{PV}$. Further, in this scenario, the power flows that arise in lines $\ell=N$ and $\ell=S$ are equal to the maximum capacity allowed in these lines; therefore, intersections $7$ and $8$ will always be on the boundaries of the feasible regions $\mathcal{F}_{P1}$ and $\mathcal{F}_{P3}$ simultaneously. It follows that these intersection points will always be vertices of $\mathcal{F}_{P}$. On the other hand, intersections $5$ and $6$ may removed from the set of vertices of the polygon when the boundary lines 3 and 4 intersect $\mathcal{F}_{P1} \cap \mathcal{F}_{P3}$, as illustrated in Figure~\ref{fig:FEset}.

\begin{figure*}[h]
\begin{multicols}{3}
    \includegraphics[width=\linewidth]{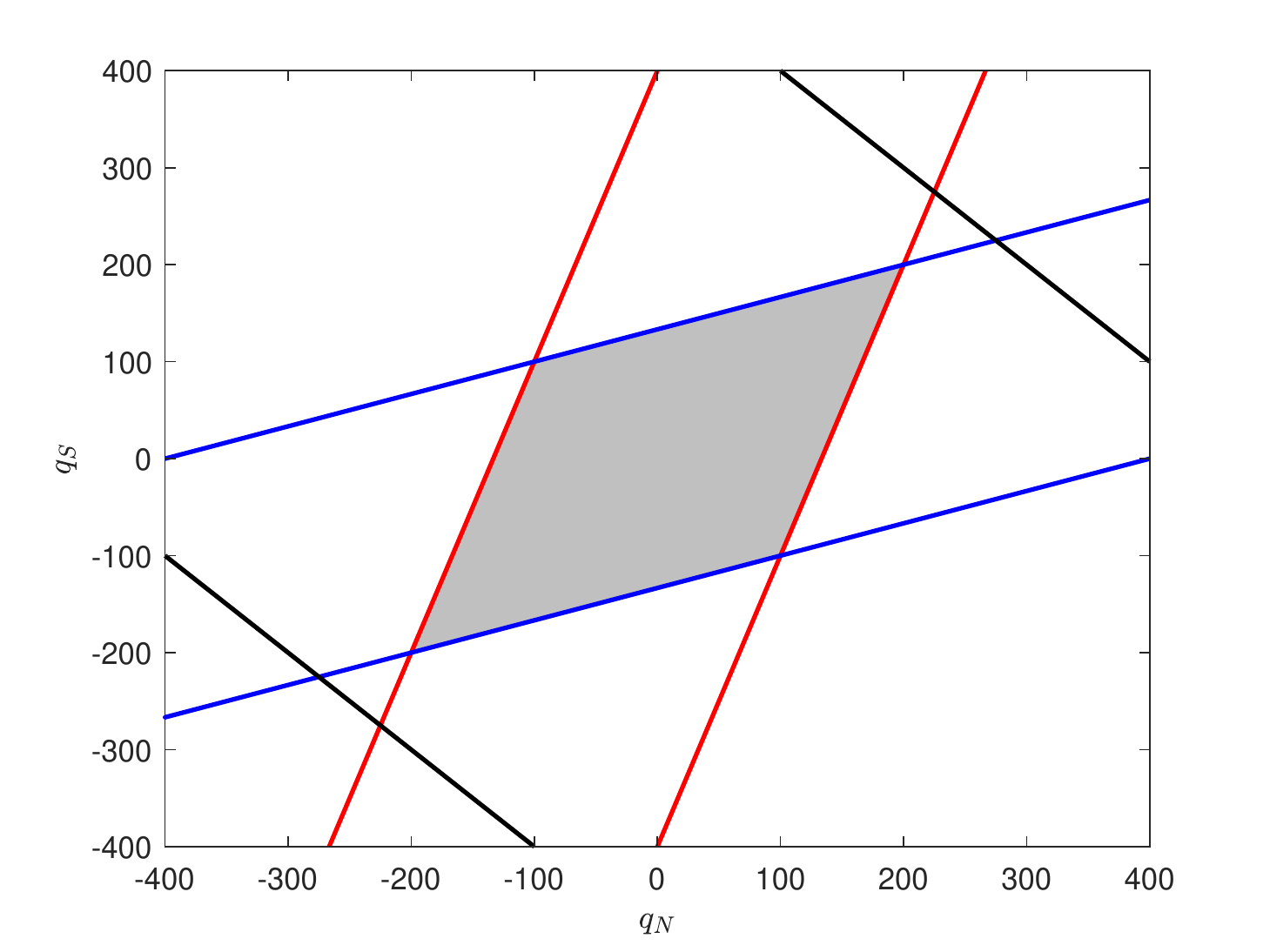}\par
    \includegraphics[width=\linewidth]{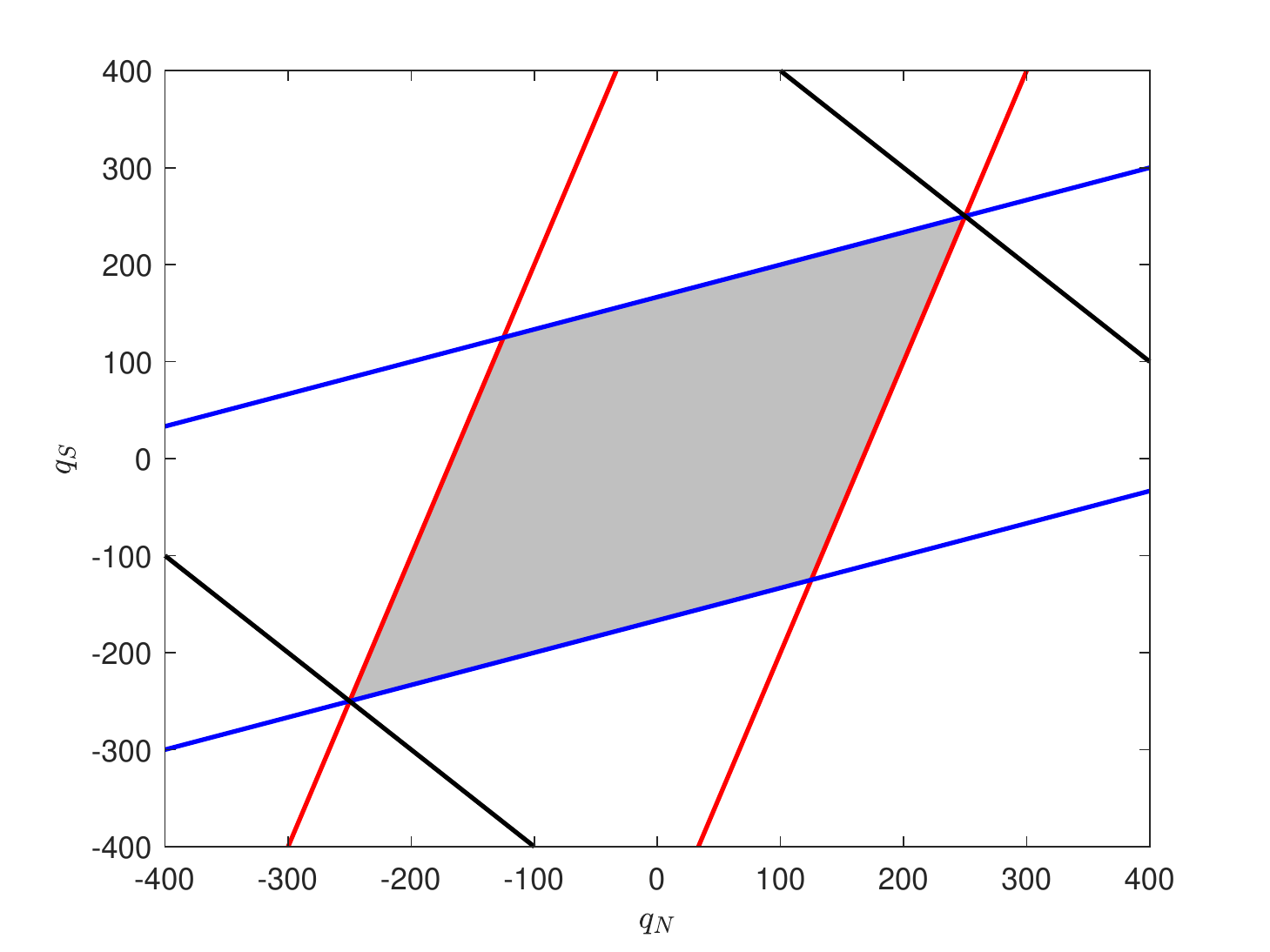}\par
    \includegraphics[width=\linewidth]{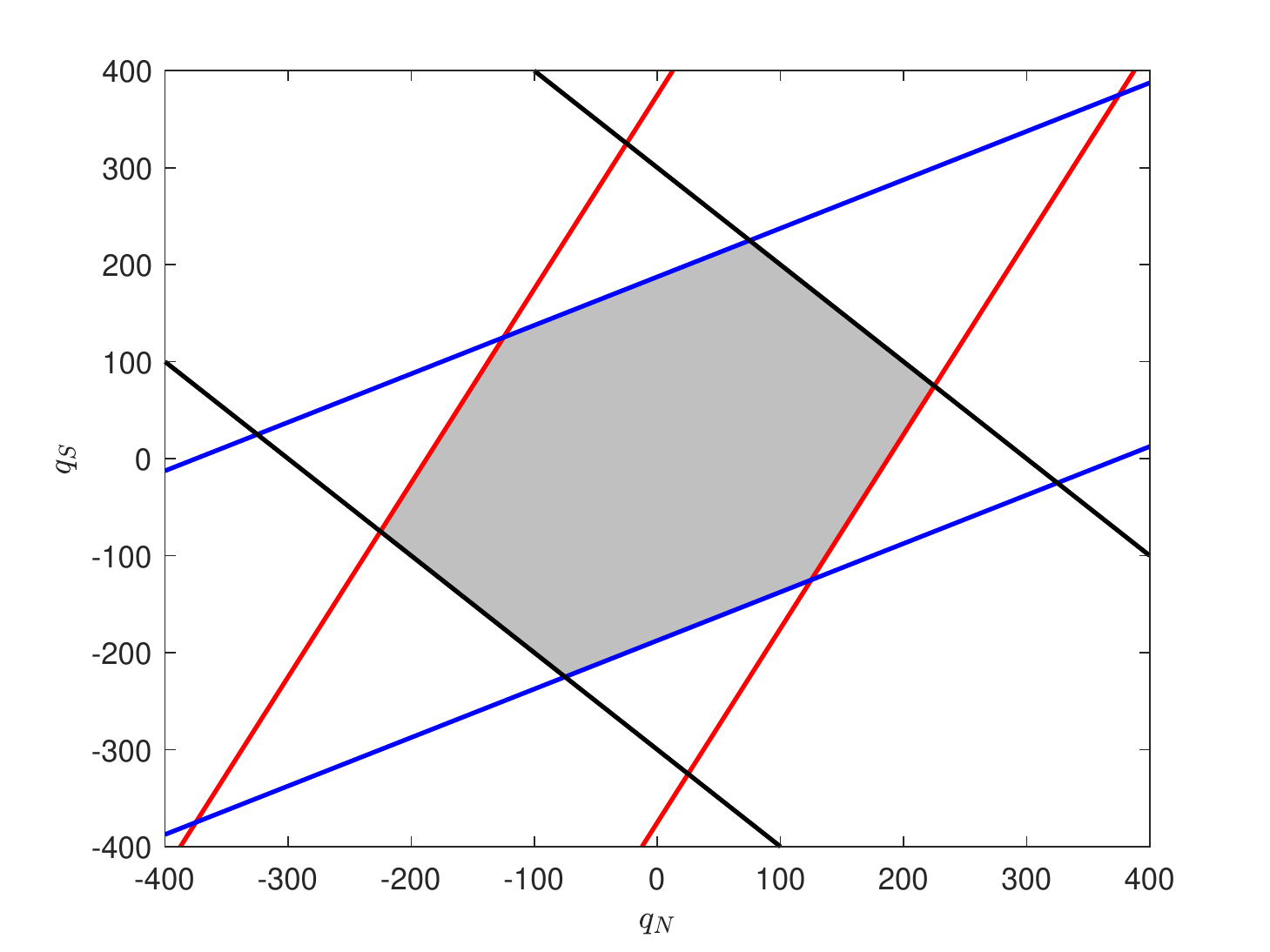}\par
\end{multicols}
\caption{Feasibility sets for power trades in a symmetric 4-node network. The red, black and blue lines limit the sets $\mathcal{F}_{P1}$, $\mathcal{F}_{PV}$, and $\mathcal{F}_{P3}$, respectively. On the left, we have the case with $\frac{C_V}{C_H} > \frac{x_H}{x_V}$, in which lines 3 and 4 do not intersect $\mathcal{F}_{P1} \cap \mathcal{F}_{P3}$. In the center, we have the limiting case in which $\frac{C_V}{C_H} = \frac{x_H}{x_V}$, and we can see the three intersection points that coincide to become a vertex of the quadrilateral. The last figure shows the hexagon case, with $\frac{C_V}{C_H} < \frac{x_H}{x_V}$.}
\label{fig:FEset}
\end{figure*}

In the symmetric case, the feasible set for goods trades $\mathcal{F}_G$ is a square, which can be found by the definition of this set in \eqref{eq:fp}. In this scenario, it can be observed that $\mathcal{F}_G$ will always be inside the feasible set for the power trades for a large enough $C_V$. This can be noted because, for the power trades, the boundaries corresponding to transmission lines $\ell=N$ and $\ell=S$ will both rotate around the points $(-C_H,C_H)$ and $(C_H,-C_H)$, as described in Appendix~\ref{app:rotation}. These points are the corners of $\mathcal{F}_G$ that are located in the second and fourth quadrants, and they are also two of the vertices of the parallelogram defined by $\mathcal{F}_P$. The other two vertices of the feasible set for power trades, located in the first ($V_{5}$) and third ($V_{6}$) quadrants, will always be above and to the right (if in the first quadrant) and below and to the left (if in the third quadrant) of the vertices of the square goods feasible set.

\section{Proof of Corollary~\ref{cor:feasS}}\label{app:proof_cor2}

We proceed similarly to the proof of Theorem~\ref{th:feasNS}. The distances from the intersections to the origin are
\begin{align}
    d_{1}^2 &= d_{9}^2 = 2C_H^2-4C_HC_V\frac{x_V}{s_H}+C_V^2\left[1+\frac{(x_H+2x_V)^2}{x_H^2}\right]\\
    d_{3}^2 &= d_{11}^2 = 2C_H^2+4C_HC_V\frac{x_V}{s_H}+C_V^2\left[1+\frac{(x_H+2x_V)^2}{x_H^2}\right]\\
    d_{5}^2 &= 2\frac{(x_H+x_V)^2}{x_V^2}C_H^2\\
    d_{7}^2 &= 2C_H^2
\end{align}
The distances are such that, for any choice of network parameters, $d_{3}=d_{11}>d_{1}=d_{9}$ and $d_{5}>d_{7}$.

To determine the feasible set $\mathcal{F}_{P} = \mathcal{F}_{P1} \cap \mathcal{F}_{PV} \cap \mathcal{F}_{P3}$, let us first take the intersection $\mathcal{F}_{P1} \cap \mathcal{F}_{P3}$, which is defined by the vertices $V_{5}$, $V_{6}$, $V_{7}$ and $V_{8}$. For the polygon that limits $\mathcal{F}_{P}$ to be a quadrilateral, boundary lines 3 and 4 must not intersect $\mathcal{F}_{P1} \cap \mathcal{F}_{P3}$ (i.e. $\mathcal{F}_{P1} \cap \mathcal{F}_{P3} \subset \mathcal{F}_{PV}$). To determine that condition, it suffices to find when line 4 is outside the circle centered in the origin and with radius $\max(d_{5},d_{7})=d_{5}$ (by symmetry, the same will be true for line 3). Let
\begin{equation}
    d_{4o} = \frac{C_V}{X_r \sqrt{2}}=\frac{2C_V(x_H+x_V)}{x_H\sqrt{2}}
\end{equation}
be the distance from line 4 to the origin. Then,
\begin{equation}
   d_{4o} > d_{5} \implies \frac{2C_V(x_H+x_V)}{x_H\sqrt{2}} > \sqrt{2}\frac{(x_H+x_V)}{x_V}C_H \implies \frac{C_V}{C_H} > \frac{x_H}{x_V}.
\end{equation}
Furthermore, it can be shown that, if we have an equality in the above condition, then the intersections $1$, $5$ and $9$ will coincide, which means we still have a quadrilateral. Therefore, if \eqref{eq:cxratio} holds, the quadrilateral with vertices $V_{5}$, $V_{6}$, $V_{7}$ and $V_{8}$ limits the feasibility region for power trades.

If that condition does not hold, lines 3 and 4 will intersect $\mathcal{F}_{P1} \cap \mathcal{F}_{P3}$, and these boundaries become two new faces that limit the set $\mathcal{F}_{P}$. In that scenario, $d_{4o} < d_{5}$, and thus $5$ (and its symmetrical $6$) will no longer be vertices of the polygon. Because $d_{3}=d_{11}>d_{1}=d_{9}$, the four new vertices in this case will be $V_{1}$ and $V_{9}$, along with the symmetrical points $V_{2}$ and $V_{10}$, which completes the proof.

\section{Non-Symmetric Network: From Goods to Power (and vice-versa)}\label{app:rotation}
From Theorem~\ref{th:feasNS}, we can notice the feasible set for the electricity network will fall into Case 1 for a large enough $C_V$. In the next result, we focus on that special case.

\begin{corollary}\label{cor:rotation}
For a 4-node electricity grid with $x_N=x_S=x_H$, $x_E=x_W=x_V$, and large enough capacity $C_V$, the parallelogram that limits $\mathcal{F}_{P}$ is formed by rotating the individual feasible sets for the goods trades. For slopes as in \eqref{eq:m1}, the parallel boundaries limiting the power trades in:
\begin{enumerate}
    \item transmission line $\ell=N$ are a clockwise rotation of $\alpha_{N}=\arctan(1/m_1)$ of the goods trades for the same line, around the point $(C_N,-C_N)$ for the boundary with positive binding flow and the point $(-C_N,C_N)$ for the one with negative binding flow;
    \item transmission line $\ell=S$ are a counterclockwise rotation of $\alpha_{S}=\arctan(m_3)$ of the goods trades for the same line, around the point $(-C_S,C_S)$ for the boundary with positive binding flow and the point $(C_S,-C_S)$ for the one with negative binding flow.
\end{enumerate}
Furthermore, because of the symmetry of the reactances, the rotation angles are the same ($\alpha_{N}=\alpha_{S}$) and $\mathcal{F}_{P} \to \mathcal{F}_{G}$ as $x_V/x_H \to \infty$.
\end{corollary}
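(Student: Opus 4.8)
The plan is to reduce the geometric claim to two elementary checks per boundary. The key fact is that a rotation about a point $P$ by angle $\theta$ carries any line \emph{through} $P$ to the unique line through $P$ obtained by turning it by $\theta$ in the prescribed sense. Hence it suffices, for each of the four power faces, to verify (i) that the stated pivot lies on both the power face and the goods face it is paired with, and (ii) that the signed angle from the goods face to the power face equals the stated $\alpha$ with the stated orientation. Since a line is pinned down by one of its points together with its direction, (i) and (ii) jointly force the power face to be the rotated image.

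First I would set up the faces. By Theorem~\ref{th:feasNS}, for large enough $C_V$ we are in Case~1: the vertical constraints $\mathcal{F}_{PV}$ are slack, so $\mathcal{F}_{P}=\mathcal{F}_{P1}\cap\mathcal{F}_{P3}$ is a parallelogram whose faces are the boundary lines of \eqref{eq:f12} and \eqref{eq:f23}. Writing $X_r=\frac{x_H}{2(x_H+x_V)}$ and substituting $x_N=x_S=x_H$, $x_E=x_W=x_V$, the line-$N$ power faces are $(1-X_r)q_N-X_r q_S=\pm C_N$ with slope $m_1=\tfrac{1-X_r}{X_r}$, and the line-$S$ power faces are $-X_r q_N+(1-X_r)q_S=\pm C_S$ with slope $m_3=\tfrac{X_r}{1-X_r}$, in agreement with \eqref{eq:m1}. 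The goods faces from \eqref{eq:fp} are the vertical lines $q_N=\pm C_N$ and the horizontal lines $q_S=\pm C_S$.

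Next come the two checks, which are the only genuine computation. For the pivot check on line $N$ with positive binding flow, I would confirm $(C_N,-C_N)$ lies on $q_N=C_N$ (immediate) and on the power face via $(1-X_r)C_N-X_r(-C_N)=C_N$; the negative-flow face uses $(-C_N,C_N)$ identically. For line $S$, $(-C_S,C_S)$ sits on $q_S=C_S$ and on $-X_r(-C_S)+(1-X_r)C_S=C_S$, with the companion face using $(C_S,-C_S)$. For the angle check, the vertical goods face for line $N$ makes angle $\pi/2$ with the $q_N$-axis, while the power face has slope $m_1>1$ and angle $\arctan m_1\in(\pi/4,\pi/2)$, so the turn from goods to power is a decrease by $\pi/2-\arctan m_1=\arctan(1/m_1)=\alpha_N$, i.e.\ clockwise; symmetrically the horizontal line-$S$ goods face (angle $0$) turns \emph{up} by $\arctan m_3=\alpha_S$, i.e.\ counterclockwise. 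Finally, the symmetry of the reactances gives $m_1=1/m_3$, so $\alpha_N=\arctan(1/m_1)=\arctan(m_3)=\alpha_S$; and as $x_V/x_H\to\infty$ we have $X_r\to 0$, hence $m_1\to\infty$ and $m_3\to 0$, so the power faces collapse onto the vertical and horizontal goods faces and $\mathcal{F}_P\to\mathcal{F}_G$.

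The main obstacle is not depth but bookkeeping: one must pair each of the four power faces with the correct goods face and pivot, and get the rotation sense right, since the slope-to-angle conversion and the clockwise/counterclockwise determination are where sign errors arise. The one structural input worth flagging explicitly is that the ``large enough $C_V$'' hypothesis is exactly what places us in the parallelogram regime of Theorem~\ref{th:feasNS}, guaranteeing that the faces being rotated are precisely the line-$N$ and line-$S$ power constraints and nothing else.
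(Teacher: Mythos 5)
Your proposal is correct and takes essentially the same route as the paper's proof: both reduce the claim to computing the boundary slopes $m_1=1+2x_V/x_H$ and $m_3=1/m_1$ under the symmetric reactances, converting them to rotation angles measured against the vertical (line $N$) and horizontal (line $S$) goods boundaries, identifying the pivots $(\pm C_N,\mp C_N)$ and $(\mp C_S,\pm C_S)$, and obtaining the limit $\mathcal{F}_P\to\mathcal{F}_G$ from $m_1\to\infty$, $m_3\to 0$. If anything, your handling of the pivots is tighter than the paper's: the paper locates the fixed points only by noting that, among the goods--power boundary intersections (omitted ``for brevity''), exactly one per face is invariant under reactance changes --- a necessity argument --- whereas your direct substitution of each pivot into both face equations, combined with the fact that a point and a direction determine a line, establishes outright that the stated rotation carries each goods face onto the corresponding power face.
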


\begin{proof}
From \eqref{eq:m1}, we observe that the slopes are
\begin{equation*}
    m_1=1+2x_V/x_H ~~ \text{and} ~~ m_3=\frac{1}{1+2x_V/x_H}=\frac{1}{m_1},
\end{equation*}
which are always positive and dependent on the reactances only. The rotation angle for the first boundary lines are taken with respect to the y-axis, and thus can be written as $\alpha_{N}=90\degree-\arctan(m_1)=\arctan(1/m_1)$, since the slope is positive. For the second pair of boundary lines, the reference is the x-axis, and thus we have $\alpha_{S}=\arctan(m_3)$. Since $m_3=1/m_1$, we can note that those angles are the same. Further, as $x_V/x_H \to \infty$, $m_1 \to \infty$ and those boundary lines tend to become vertical and coincide with the boundaries of the goods trades for that transmission line. Similarly, $m_3\to 0$ for the same condition, and those boundary lines tend to coincide with the horizontal boundaries of the goods trades for line $\ell=S$.

To find the fixed points in the rotations, we begin by finding all the intersection points between the boundary lines that form the goods square set and those that form the power parallelogram set (for brevity, they are not listed here). For each power boundary, there is only one intersection which is invariant with reactance changes, and thus with rotation. Therefore, only those intersections, which are presented in Corollary~\ref{cor:rotation}, can be the fixed points.
\end{proof}

The results in Corollary~\ref{cor:rotation} are illustrated in Figure~\ref{fig:rotation}, where the green lines correspond to the boundaries of the feasible region for the goods trades and the grey lines intersecting the origin mark the points around which the boundary lines rotate.

\begin{figure*}[tpb]
\begin{multicols}{2}
    \centering
    \includegraphics[width=0.75\linewidth]{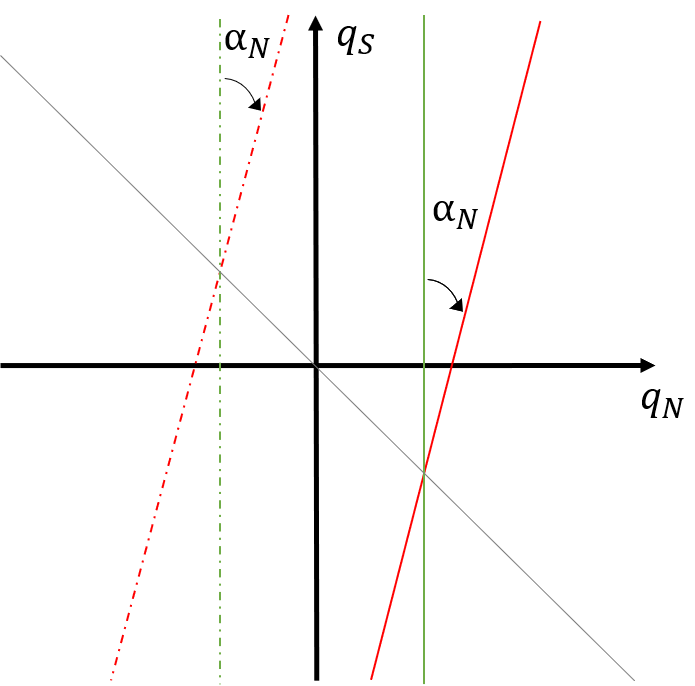}\par
    \includegraphics[width=0.75\linewidth]{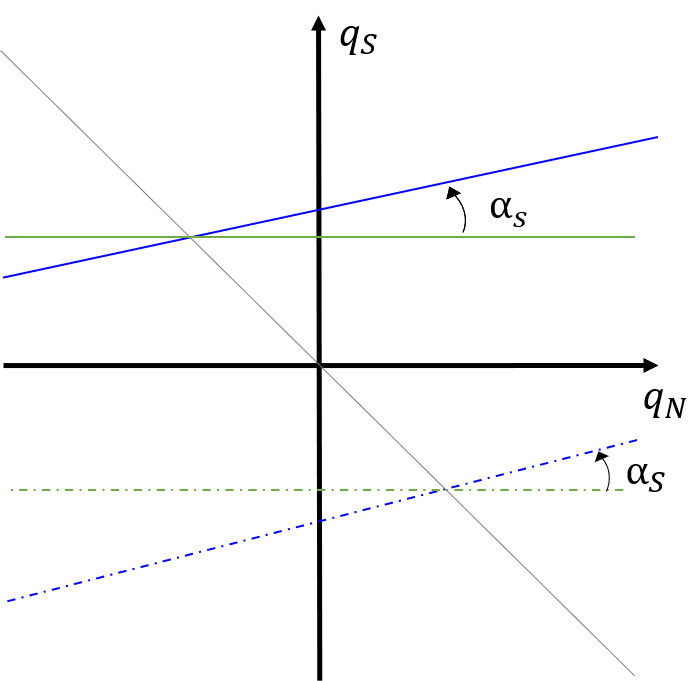}\par
\end{multicols}
\caption{Illustration of Corollary~\ref{cor:rotation} showing how the boundaries of the set of feasible power trades rotate with respect to the boundaries of the set of feasible goods trades.}
\label{fig:rotation}
\end{figure*}

\begin{remark}
The intersection points that are fixed with the rotation correspond to trades that generate equivalent flows in the electricity and the goods trade network. From Corollary~\ref{col:floweq}, we observe that those are the trades with $q_{N}=-q_{S}$ for this symmetric network.
\end{remark}

\begin{remark}
The intersections of the parallelogram with the square are such that, as the boundary lines for the power trades rotate, trades in the second and fourth quadrants that would be feasible in a goods network are not feasible in the power network. As illustrated in Figure~\ref{fig:Sflows}, those are the quadrants in which the flows in the lines $\ell=N$ and $\ell=S$ are tighter. Further, with these rotations, some trades that are not feasible in the goods case are added in the set of feasible power trades.
\end{remark}

The observations from the remarks above can be visualized in Figure~\ref{fig:intNS}. The red lines correspond to feasible sets for transmission line $\ell=N$, while the blue ones refer to line $\ell=S$. The dashed lines are the boundaries for the goods trades and the (red and blue) solid lines represent the feasible sets for power trades. The solid black like shows the trades that produce the same flows both in the power and in the goods networks. Note that the black line intersects the fixed points around which the boundary lines rotate. Furthermore, we can observe that $\mathcal{F}_{P}$ contains some trades that are not allowed in $\mathcal{F}_{G}$, and vice-versa.

\begin{figure*}[h]
    \centering
    \includegraphics[width=0.5\textwidth]{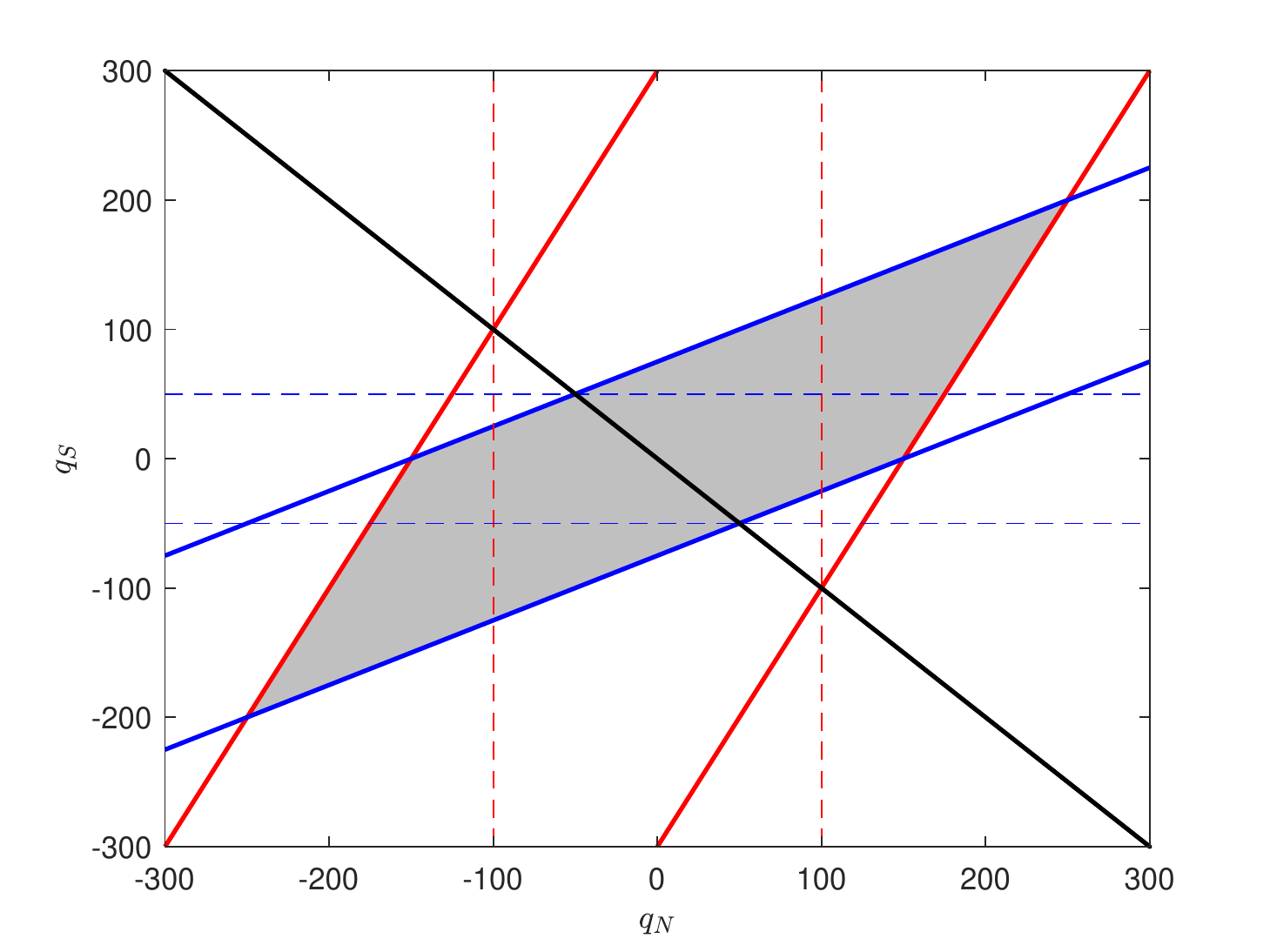}
\caption{Example of feasible sets for power (shaded) and goods trades in a 4-node network, for large enough $C_V$.}
\label{fig:intNS}
\end{figure*}

\section{Further Decomposition Of Network Externalities}
\label{sec:further}

Since we adopt a linear power flow model, we can use the superposition principle to verify how each individual trade contributes to the externality. We first decompose the power flows that arise due to the 4735 MW North-North trade to find that it produces a clockwise extra flow of magnitude 1274.8 MW. 

Similarly, we can decompose the power flows of the 2931 MW South-South trade, which generates a 789.1 MW counterclockwise extra flow that reduces the available capacity in the North line. Therefore, each individual trade is causing a negative externality to the transmission line that connects the remaining two nodes. The power flow decomposition for both trades is as below.
\begin{align*}
    \text{North-North trade:} ~\bm{f^p} &= \bm{f^g} + \mathbf{S_\sigma}~\mathbf{q} = \begin{bmatrix}
    -4735 \\ 0 \\ 0 \\ 0
    \end{bmatrix}+
    \begin{bmatrix}
    1274.8 \\ 1274.8 \\ 1274.8 \\ -1274.8\\
    \end{bmatrix} = \begin{bmatrix}
    -3460.2 \\ 1274.8 \\ 1274.8 \\ -1274.8\\
    \end{bmatrix}\\
    \text{South-South trade:} ~\bm{f^p} &= \bm{f^g} + \mathbf{S_\sigma}~\mathbf{q} = \begin{bmatrix}
    0 \\ 0 \\ 2931 \\ 0
    \end{bmatrix}+
    \begin{bmatrix}
    -789.1 \\ -789.1 \\ -789.1 \\ 789.1\\
    \end{bmatrix} = \begin{bmatrix}
    -789.1 \\ -789.1 \\ 2141.9 \\ 789.1\\
    \end{bmatrix}
\end{align*}
Note that the sum of the power flows generated from both trades is indeed the total power flow calculated previously, as per the superposition principle. From this individual analysis, we observe that the North-North trade imposes a more significant externality than the South-South trade. As a consequence, as described previously, the overall extra flow is clockwise and has a magnitude of 485.7MW (1274.8MW - 789.1MW). 

\section{Renewable Energy Production in Europe over time}

\begin{figure*}[h]
    \centering
    \includegraphics[width=0.8\textwidth]{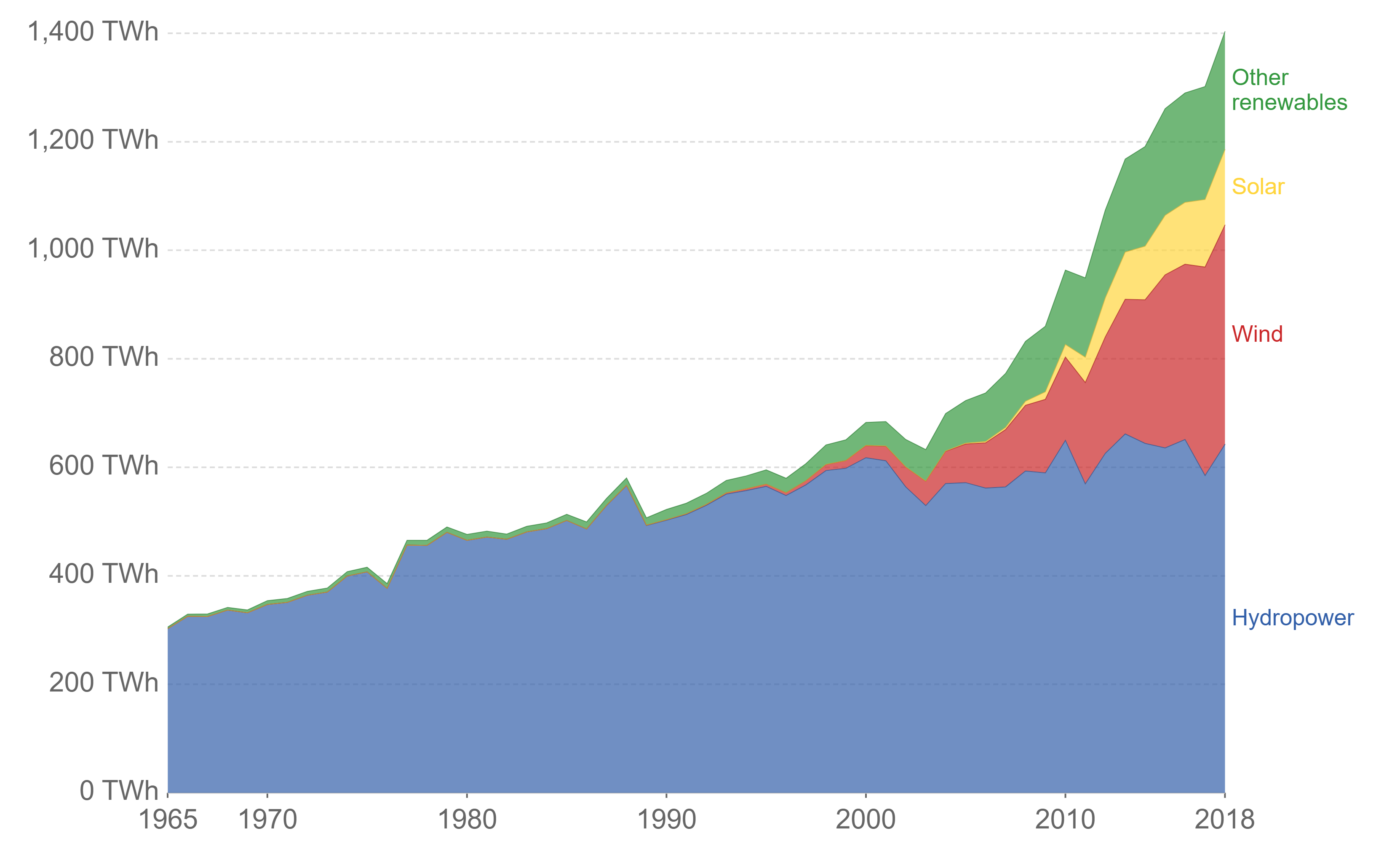}
\caption{Renewable energy generation, Europe, 1965 to 2018}
\quote{Source: BP Statistical Review of Global Energy (2019) and OurWorldInData.org/renewable-energy. Note: `Other renewables' refers to renewable sources including geothermal, biomass, waste, wave, and tidal energy.}
\label{fig:renewabletrends}
\end{figure*}

\end{document}